\documentclass[acmsmall, screen]{acmart}
\AtBeginDocument{%
  }

\setcopyright{acmlicensed}
\copyrightyear{2018}
\acmYear{2018}
\acmDOI{XXXXXXX.XXXXXXX}
\acmConference[Conference acronym 'XX]{Make sure to enter the correct
  conference title from your rights confirmation email}{June 03--05,
  2018}{Woodstock, NY}
\acmISBN{978-1-4503-XXXX-X/2018/06}

\usepackage{amsfonts}
\usepackage{amsmath}
\usepackage{hyperref}
\usepackage{thmtools}
\usepackage{mathtools}
\usepackage{cmll}
\usepackage{cleveref}
\usepackage[bbgreekl]{mathbbol}
\usepackage{virginialake}
\vlnostructuressyntax

\usepackage{proofzilla}
\usepackage{xspace}
\usepackage{adjustbox}

\usepackage[backgroundcolor=lightgray]{todonotes}

\newcommand{\scalevlderivation}[2][1]{
	\scalebox{#1}{
		$\vlderivation{#2}$
	}
}

\def\chor{choreography\xspace}
\def\chors{choreographies\xspace}

\def\net{network\xspace}
\def\nets{networks\xspace}

\def\derivation{derivation\xspace}
\def\cyclic{cyclic\xspace}

\def\regular{regular\xspace}

\def\thread{thread\xspace}

\def\progressing{progressing\xspace}

\def\LF{lock-free\xspace}

\def\redtree{reduction tree\xspace}
\def\redtrees{reduction trees\xspace}
\def\fair{fair\xspace}
\def\fairness{fairness\xspace}

\def\chorok{well-formed\xspace}

\def\nice{nice\xspace}
\def\niceness{niceness\xspace}
\def\ok{fairly \LF}

\def\defn#1{\emph{#1}}
\def\resp#1{(resp.~{#1})\xspace}

\def\qquand{\quad\text{and}\quad}

\def\qomma{\;,\;}
\def\qqomma{\quad,\quad}
\def\qwith{\;\text{with}\;}
\def\qqwith{\quad\text{with}\quad}

\def\rclr#1{{\color{red}#1}}
\def\bclr#1{{\color{blue}#1}}

\def\nclr#1{{\color{pzbrickred}#1}}

\def\gclr#1{{\color{pzgreen}#1}}

\mathchardef\mhyphen="2D

\def\set#1{\{#1\}}
\def\Set#1{\left\{#1\right\}}
\def\intset#1#2{\set{#1, \ldots, #2}}

\def\sizeof#1{\left|#1\right|}

\def\fsubst#1#2{\left[{#1}/{#2} \right]}
\def\psubst#1#2{\procpoly{\left[\mathsf{#1}/\mathsf{#2} \right]}}
\newcommand{\lpsubst}[3][q]{\procpoly{\left[\mathsf{#2}/\mathsf{#3} \right]_{\pid{#1}}}}

\newcommand{\cvsubst}[3][q]{\chorpoly{\left[\mathsf{#2}/\mathsf{#3} \right]_{\pid{#1}}}}

\newcommand{\semto}[1][]{\xrightarrow{#1}}
\newcommand{\semtos}[1][]{\xrightarrow{#1}\mathrel{\mkern-14mu}\rightarrow}

\def\Pset{\mathbf{PN}}
\def\Vset{\mathbf{V}}
\def\Eset{\mathbf{E}}

\def\CXset{\mathbf{CV}}
\def\PXset{\mathbf{PV}}
\def\Lset{\mathbf{L}}
\def\aLset{\mathcal{L}}

\def\semLab{\mathbf{Lab}}

\def\abisim{\mathcal R}

\newcommand{\pid}[1]{\mathsf{#1}}
\def\pp{\pid{p}}
\def\pq{\pid{q}}
\def\pr{\pid{r}}
\def\ps{\pid{s}}
\def\pn{\mathsf{pn}}
\def\pnof#1{\pn\left(#1\right)}
\def\apnof#1{\mathsf{apn}\left(#1\right)}
\def\recnof#1{\mathsf{rn}(#1)}

\def\lab{\ell}
\newcommand*{\albl}{\ell}

\newcommand{\tto}{\mathbin{\boldsymbol{\rightarrow}}}

\def\chorpoly#1{\bclr{{#1}}}
\def\seq{\mathbin{\chorpoly{;}}}
\newcommand{\cC}[1][]{\chorpoly{\mathsf{C_{#1}}}}
\newcommand{\cCp}[1][]{\chorpoly{\mathsf{C'_{#1}}}}
\newcommand{\cCpp}[1][]{\chorpoly{\mathsf{C''_{#1}}}}
\newcommand{\cB}[1][]{\cCpp}

\newcommand{\cX}[1][]{\chorpoly{\mathsf{X}_{#1}}}

\def\cnil{\chorpoly{\mathbf{0}}}

\newcommand{\com}[4]{{\pid{#1}.{#2} \tto \pid{#3}.{#4}}}
\newcommand{\choi}[4]{{\pid{#1}.{#2} \tto \pid{#3}.{\Set{#4}}}}

\newcommand{\ccom}[4]{\chorpoly{\com{#1}{#2}{#3}{#4}}}
\newcommand{\cchoi}[4]{\chorpoly{\choi{#1}{#2}{#3}{#4}}}

\newcommand{\gencom}{\com peqx}
\newcommand{\genchois}{\choi{p}{\aLset}{q}{\lab : \cC[\lab] \mid \lab \in \aLset'}}

\newcommand{\genchoi}{\choi{p}{\set{\lab_i}}{q}{\lab : \cC[\lab] \mid \lab \in \aLset'}}
\newcommand{\genchoiEXT}{\choi{p}{\set{\lab_i}}{q}{\lab : \cC[\lab] \mid \lab \in \aLset}}

\def\cgencom{\chorpoly{\gencom}}

\def\cgenchois{\chorpoly{\genchois}}

\newcommand{\genchoisp}{\chorpoly{\choi{p}{\aLset}{q}{\lab : \cCp[\lab] \mid \lab \in \aLset'}}}
\def\cgenchoi{\chorpoly{\genchoi}}
\def\cgenchoiEXT{\chorpoly{\genchoiEXT}}

\newcommand{\procDef}[3][]{{{#2}}_{#1} \triangleq {{#3}}_{#1}}
\newcommand{\gencProcDef}[1][]{\procDef[\chorpoly{#1}]{\cX}{\cC}}

\newcommand{\genpProcDef}[1][]{\procDef[#1]{\pX}{\pP}}
\newcommand{\genfProcDef}[1][]{\fX \triangleq F_\fX}
\newcommand{\procC}{\mathcal{C}}

\def\ctens{\chorpoly{\; \otimes \; }}

\newcommand{\unfoldr}{\mathsf{unfold}}
\newcommand{\ldefr}{\unfoldr}

\def\cto#1{\xrightarrow{\rclr{#1}}}
\def\ctos#1{\xrightarrow{\rclr{#1}}\mathrel{\mkern-14mu}\rightarrow}

\newcommand{\tocom}[4]{\pid{#1}.{#2} \tto \pid{#3}.{#4}}
\newcommand{\tochoi}[2][\lab_i]{\pid{#2}.{#1}}
\newcommand{\tolab}[3][\lab_i]{\pid{#2}.{#1} \tto \pid{#3}}

\def\comrule{\mathsf{com}}
\def\choirule{\mathsf{choice}}

\def\tensrulel{\mathsf{tensor\mhyphen left}}
\def\tensruler{\mathsf{tensor\mhyphen right}}
\def\labrule{\mathsf{label}}

\def\delcomrule{\mathsf{delay\mhyphen com}}
\def\delchoirule{\mathsf{delay\mhyphen choice}}
\def\unfoldrule{\mathsf{proc}}

\def\netpoly#1{\nclr{\mathsf{#1}}}

\newcommand{\nN}[1][]{\netpoly{N_{#1}}}
\newcommand{\nNof}[2][]{\procpoly{\mathsf{N_{#1}}(#2)}}
\newcommand{\nMof}[2][]{\procpoly{\mathsf{M_{#1}}(#2)}}
\newcommand{\nNpof}[2][]{\procpoly{\mathsf{N^{\raisebox{0.4ex}{$\scriptscriptstyle\prime$}}_{#1}}(#2)}}

\newcommand{\nM}[1][]{\netpoly{M_{#1}}}
\newcommand{\nNp}[1][]{\netpoly{N^{\raisebox{0.4ex}{$\scriptscriptstyle\prime$}}_{#1}}}
\newcommand{\nNpp}[1][]{\netpoly{N^{\raisebox{0.4ex}{$\scriptscriptstyle\prime\prime$}}_{#1}}}
\newcommand{\nMp}[1][]{\netpoly{M^{\raisebox{0.4ex}{$\scriptscriptstyle\prime$}}_{#1}}}

\def\nnil{\netpoly{\mathbf{0}}}

\def\procpoly#1{\gclr{\mathsf{#1}}}
\newcommand\pP[1][]{\procpoly{P_{#1}}}
\newcommand\pPp[1][]{\procpoly{P_{#1}^{\raisebox{0.4ex}{$\scriptscriptstyle\prime$}}}}
\newcommand\pQ[1][]{\procpoly{Q_{#1}}}

\newcommand\pQp[1][]{\procpoly{Q_{#1}^{\raisebox{0.4ex}{$\scriptscriptstyle\prime$}}}}
\newcommand{\pX}[1][]{\procpoly{X_{#1}}}
\newcommand{\pY}[1][]{\procpoly{Y_{#1}}}
\newcommand{\pZ}[1][]{\procpoly{Z_{#1}}}

\def\pseq{\procpoly{\mathsf{.}}}

\newcommand{\procP}{\mathcal{P}}

\newcommand{\plab}[3][\albl]{\{#1 : #3\}_{#1 \in #2}}

\newcommand{\pdo}[2][p]{\nclr{\pid{#1}\left[ #2 \right]}}
\def\ppar{\;|\;}
\def\pnil{\procpoly{\mathbf{0}}}

\def\psend#1#2{\procpoly{\pid{#1}!{#2}}}
\def\precv#1#2{\procpoly{\pid{#1}?{#2}}}
\newcommand{\matpsendl}[3][]{\procpoly{\pid{#2}!\set{#3}_{#1}}}
\newcommand{\matprecvl}[3][]{\procpoly{\pid{#2}?\set{#3}_{#1}}}

\newcommand{\psendl}[3][\aLset]{\procpoly{\pid{#3}!\plab{#1}{#2_{\lab}}}}
\newcommand{\pselectl}[3][\lab_i]{\procpoly{\pid{#3}!\{#1 : #2_{#1}\}}}
\newcommand{\precl}[3][\aLset]{\procpoly{\pid{#3}?\plab{#1}{#2_{\lab}}}}
\def\pgensendl{\psendl{\pP}{q}}
\def\pgenrecvl{\precl{\pQ}{p}}
\def\pgenselectl{\pselectl{P}{q}}
\newcommand{\predtree}[2][T]{\mathcal{#1}_{#2}}

\def\pto#1{\xrightarrow{\rclr{#1}}}
\def\ptos#1{\xrightarrow{\rclr{#1}}\mathrel{\mkern-14mu}\rightarrow}
\newcommand{\psendact}[3]{\mathsf{snd}_{\pid{#1},\pid{#2}}(#3)}
\newcommand{\precvact}[3]{\mathsf{rcv}_{\pid{#1},\pid{#2}}(#3)}
\newcommand{\pchoicel}[2]{\pid{#1}.{#2}}
\newcommand{\pselact}[3]{\mathsf{sel}_{\pid{#1},\pid{#2}}(#3)}
\newcommand{\pbraact}[3]{\mathsf{bra}_{\pid{#1},\pid{#2}}(#3)}

\def\pcomrule{\mathsf{com}}
\def\pchoirule{\mathsf{choi}}
\def\plabrule{\mathsf{lab}}

\def\psendrule{\mathsf{send}}
\def\precvrule{\mathsf{recv}}
\def\pchoisrule{\mathsf{choi}}
\def\pselectrule{\mathsf{sel}}
\def\pbranchrule{\mathsf{branch}}
\def\pprocrule{\mathsf{proc}}

\def\pnetrule{\mathsf{par}}

\newcommand{\EPP}[2][]{\mathsf{EPP}_{\pid{#1}}\left(#2\right)}

\def\peq{\approx}
\newcommand{\proves}[1][]{\mathord{\vdash_{#1}\,}}
\newcommand{\dD}{\mathcal{D}}

\def\XS{\mathsf{X}}

\def\fX{\mathsf X}
\def\fY{\mathsf Y}
\newcommand{\fof}[1]{\left\llbracket#1\right\rrbracket}
\newcommand{\fofclean}[1]{\left\llbracket#1\right\rrbracket^{-}}

\def\MALL{\mathsf{MALL}}

\def\PIL{\mathsf{PiL}}

\def\uPIL{\PIL_{\fprocset}}
\def\iPIL{\PIL^{\infty}_{\fprocset}}

\def\ChorL{\mathsf{ChorL}_{\fprocset}}
\def\gChorL{\ChorL^{\mathsf{prog}}}
\def\nChorL{\ChorL^{\mathsf{nice}}}

\def\cneg#1{\overline{#1}}

\def\chan#1#2{\mathsf c_{\pid #1\pid #2}}
\def\lFa#1#2{\forall{#1}.#2}
\def\lEx#1#2{\exists{#1}.#2}

\def\lExp#1#2{\lEx{#1}{\!\left( #2 \right)}}

\newcommand{\multiprem}[2]{\vlhy{\left\{\vlderivation{#1}\right\}_{#2}}}
\makeatletter

\DeclareRobustCommand{\btleft}{\mathrel{\mathpalette\btlr@\blacktriangleleft}}
\DeclareRobustCommand{\btright}{\mathrel{\mathpalette\btlr@\blacktriangleright}}
\newcommand{\btlr@}[2]{%
  \begingroup
  \sbox\z@{$\m@th#1\triangleright$}%
  \sbox\tw@{\resizebox{1.1\wd\z@}{1.1\ht\z@}{\raisebox{\depth}{$\m@th#1\mkern-1mu#2$}}}%
  \ht\tw@=\ht\z@ \dp\tw@=\dp\z@ \wd\tw@=\wd\z@
  \copy\tw@
  \endgroup
}

\def\alphaeq{\equiv_{\mathrel{\alpha}}}

\def\fprocset{\mathfrak{F}}
\def\cprocset{\chorpoly{\mathfrak{F}}}
\def\lsend#1#2{\langle#1\oc#2\rangle}
\def\lrecv#1#2{(#1\wn#2)}
\def\lprec{\btleft}
\newcommand{\biglbra}[1][]{\bigwith_{#1}}
\newcommand{\biglsel}[1][]{\bigoplus_{#1}}
\newcommand{\Biglbra}[1][]{\bigwith\limits_{#1}}
\newcommand{\Biglsel}[1][]{\bigoplus\limits_{#1}}

\def\axrule{\mathsf {ax}}
\def\AXrule{\mathsf {AX}}
\def\cutr{\mathsf {cut}}

\def\mixr{\mathsf{mix}}

\newcommand{\rrule}[1][]{\mathsf{r}^{#1}}

\def\free{\mathsf{free}}
\def\freeof#1{\free(#1)}

\def\init{\mathsf{Init}}
\newcommand{\Cmixr}[1][]{\underset{#1}{\mathsf{C}\mhyphen\mixr}}
\newcommand{\Ccomr}[1][]{\underset{#1}{\mathsf{C}\mhyphen\pcomrule}}
\newcommand{\Cchor}[1][]{\underset{#1}{\mathsf{C}\mhyphen\pchoirule}}
\newcommand{\Clabr}[1][]{\underset{#1}{\mathsf{C}\mhyphen\plabrule}}
\newcommand{\Cbrar}[1][]{\underset{#1}{\mathsf{C}\mhyphen\mathsf{bra}}}
\newcommand{\Crec}[1][]{\underset{#1}{\mathsf{C}\mhyphen \mathsf{unfold}}}
\newcommand{\Caxr}[1][]{\underset{#1}{\mathsf{C}\mhyphen\init}}
\newcommand{\Cwr}[1][]{\underset{#1}{\mathsf{C}\mhyphen\mathsf{clean}}}

\def\chorpol#1{\bclr{#1}}

\newcommand{\chRCof}[1][\cdot]{\chorpol{\mathsf{Extr}{\left( {\color{black} #1}\right)}}}

\def\wbisim{\bisim}
\def\bisim{\sim}

\newcommand{\DER}[1]{\bclr{\mathsf{Dr}\left({\color{black} #1}\right)}}
\newcommand{\DofC}[1]{\bclr{\mathsf{Pr}\left({\color{black} #1}\right)}}
\newcommand{\TofD}[1]{\bclr{\mathsf{Tr}\left({\color{black} #1}\right)}}

\theoremstyle{definition}
\newtheorem{notation}{Notation}
\newtheorem{remark}{Remark}

\begin{document}

\title[Two Sides of the Same Proof: Extraction and Projection in Choreographic Programming]
{Two Sides of the Same Proof: Extraction and Projection in Choreographic Programming}
\author{Matteo Acclavio}
\author{Giulia Manara}
\author{Fabrizio Montesi}
\author{\\Marco Peressotti}
\affiliation{%
  \institution{\\FORM, University of Southern Denmark}
  \city{Odense}
  \country{Denmark}
}

\renewcommand{\shortauthors}{Acclavio et al.}

\begin{abstract}
  A new approach to the theory of choreographic languages has recently been proposed, based on the interpretation of choreographies with finite behaviors as proofs in an extension of linear logic.
  This paper further develops this approach to cyclic proofs to be capable of representing infinite behaviors, and prove a completeness result for choreographies with respect to the class of lock-free networks.

  We show that in a choreographies-as-proofs correspondence, the proof equivalence induced by rule permutations corresponds to bisimilarity, and that, under adequate assumption on the procedure definitions, the proof search procedure has a finite search space, and can be used to extract a choreography from a network.
\end{abstract}


\begin{CCSXML}
<ccs2012>
   <concept>
       <concept_id>10003752.10003753.10003761.10003764</concept_id>
       <concept_desc>Theory of computation~Process calculi</concept_desc>
       <concept_significance>500</concept_significance>
       </concept>
   <concept>
       <concept_id>10003752.10003790.10003801</concept_id>
       <concept_desc>Theory of computation~Linear logic</concept_desc>
       <concept_significance>500</concept_significance>
       </concept>
	<concept>
       <concept_id>10003752.10010124.10010131.10010134</concept_id>
       <concept_desc>Theory of computation~Operational semantics</concept_desc>
       <concept_significance>500</concept_significance>
       </concept>
 </ccs2012>
\end{CCSXML}

\ccsdesc[500]{Theory of computation~Process calculi}
\ccsdesc[500]{Theory of computation~Linear logic}
\ccsdesc[500]{Theory of computation~Operational semantics}


\maketitle

\section{Introduction}


Choreographic programming (CP) is a programming paradigm for concurrent and distributed systems, where the programmer specifies a global description of the system, called \emph{choreography}, and then the compiler produces a \defn{network} of local programs that implement the choreography using a procedure called \emph{endpoint projection} (EPP) \citep{M13:phd,montesi:book}.
Correctness of EPP establishes a behavioral equivalence, usually in terms of strong bisimulation, between choreographies and their compiled networks produced by EPP.
The advantage of CP is that global descriptions of the system avoids the occurrence of communication problems, such as deadlocks and race conditions, and therefore a network obtained through EPP is guaranteed to be free from such errors.

\begin{figure}
	\includegraphics[width=.8\textwidth]{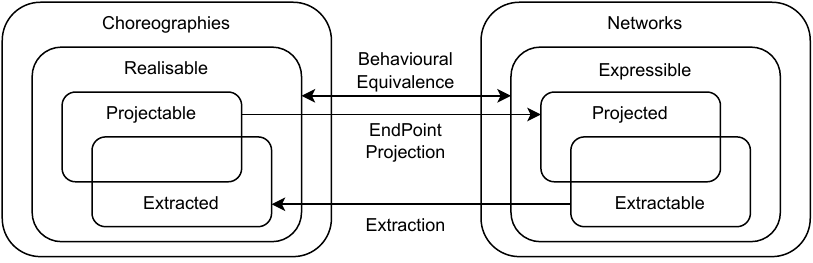}
	\caption{Realisability and expressivity of choreographies and networks.}
	\label{fig:diagram}
\end{figure}

The theory of CP revolves around two central notions: \emph{realisability} and \emph{expressivity}.
\begin{itemize}
	\item Realisability regards the question of \emph{which choreographies can be implemented by networks}.
	This question is usually answered constructively by designing EPP procedures that, given a choreography, produce a network behaviorally equivalent of the choreography.

	\item Expressivity regards the question of \emph{which networks can be specified by choreographic languages}.
	A constructive way to answer this question is to design \emph{extraction} procedures that, given a network satisfying certain properties, produce a choreography behaviorally equivalent to the network.
\end{itemize}
Ideally, these two directions should form two facets of the same underlying correspondence. 
The diagram in~\cref{fig:diagram} summarises this relationship: 
realisable choreographies correspond to expressible networks via behavioural equivalence, 
the classes of projectable choreographies and projected networks identify the fragments connected by projection and the classes of extracted choreographies and extractable networks identify the fragments connected by extraction, respectively. 

Most work on choreographic programming has focused on the first question. 
This is partly because endpoint projection gives a concrete compilation problem with immediate practical relevance, and partly because it is technically more tractable. 
The tractability stems from the fundamental difference the information directly available in networks and choreographies: at the network level, the overall behavior of the system emerges from local programs that describe the behavior of each endpoint separately, whereas at the choreographic level the overall behavior is explicit, with choreographic programs directly describing the intended communications and the recursive structure of infinite behaviors. 
Extraction therefore faces the more challenging task of reconstructing emergent global behavior from distributed local code, a notoriously difficult problem \cite{LLLG16}.

The few works that approach the expressivity question in CP \cite{montesi:cruz:al:choreo_extraction,montesi:cruz:extraction_spwaning,cruz:montesi:safina:implementig_extraction} (or in the adjacent field of Multiparty Session Types~\cite{yoshida:al:extraction_mpst}) provide extraction procedures and establish their correctness with ad hoc arguments, without a direct correspondence with projection. 
This leaves open a central question: \emph{what is the precise relationship between realisability and expressivity}, that is, between networks that realise choreographies and networks expressible as choreographies, or, dually, between realisable choreographies and choreographies that express networks?

\citet{acc:man:mon:ESOP25} made a first step towards answering the question by introducing a new perspective, called \emph{choreographies-as-proofs}, which replaces algorithmic constructs used in previous attempts by proof-theoretic ones. 
As the name suggest, this approach is based on the interpretation of choreographies as proofs in an extension of linear logic, using a \emph{process-as-formulas} interpretation of processes, similar what studied in \cite{miller:pi,mil:tiu:nabla,bru:02,ola:pim:viv:subexp,pimentel:procAsFor,hor:tiu:privateNames,miller:book}, where the behavior of a process is represented as a formula in linear logic, and the operational semantics of processes is represented as a proof system for the formulas, following the logic-programming paradigm.
The core outcome of this approach is the possibility to establish a correspondence between some special representations of the LTSs of a network, called \emph{execution trees}, and derivations in a sequent calculus extending multiplicative additive (first-order) linear logic.
Then, the extraction of a choreography from a network is obtained by performing a proof search with some restrictions on the search space, and the choreography naturally emerges from the structure of the proof.
However, the results are limited to choreographies with finite behaviors.

\subsection{Contribution of the paper}
In this work, we show that the choreographies-as-proofs approach can be extended to choreographies with infinite behaviors. In particular, we show that it can be used not only to design a complete extraction procedure for tail-recursive lock-free networks, but also to establish a new proof of endpoint projection.
These two results show that the class of choreographies with infinite behaviors is expressive enough to capture the class of tail-recursive lock-free networks, solving an open problem in the theory of choreographic programming, and that endpoint projection is complete with respect to this class of choreographies.
Together, these results place the theory of choreographic programming on a solid logical foundation based on the proof theory of linear logic, providing a unified and intuitive framework in which notions of realizability and expressivity can be studied in terms of proof-theoretic notions of derivability and proof search.

For this purpose, we rely on recent development in the proof theory of cyclic proofs (e.g., \cite{santocanale:circular,afs:lei:men:mu,das:pous:nonWF,das:gir:TCL,mir:stu:zen:coalgebraic,kloi:dalm:ven:CPDL,acc:cur:gue:CSL24,das:de:omega}) and in particular of the extension of multiplicative additive linear logic with fixpoints ($\mu\MALL$) \cite{baelde:fix,BaeldeM07,bae:dou:sau:16,das:de:sau:muMALL,de:jaf:sau:phase,nol:sau:tas:local}, to extend the work in \cite{acc:man:mon:ESOP25} with rules capturing (greatest) fixpoints, required to model infinite behaviors.
We introduce a novel notion of \emph{\fairness} for proof, providing the proof-theoretic counterpart of the homonymous notion used in the theory of process calculi stronger than the notion of \emph{progressiveness} used in the theory of non-wellfounded proofs,
which is sound and complete with respect to the class of sequents encoding lock-free tail-recursive networks.
We then define the system $\ChorL$ of synthetic rules (in the sense of \cite{andreoli:focSeq,miller:uniform,ola:pim:concurrentFocussing}), which allow us to restrict the proof search space while preserving completeness.
Fair proofs in $\ChorL$ are in bijection with choreographies, and the proof search procedure can be used to extract a choreography from a network.

The method we propose allows us to extract a choreography from the following network,
\begin{equation}\label{ex:intro:extraction:network}
	\pdo[p]{\psend qe \seq \pX}
	\ppar
	\pdo[q]{\precv px\seq \precv px \seq \precv ry \seq \pY}
	\ppar
	\pdo[r]{\psend qe \seq \pZ}
	\qqwith
	\begin{cases}
		\procDef{\pX}{\psend qe \seq \pX}
	\\
		\procDef{\pY}{\precv px \seq \precv ry \seq \pY}
	\\
		\procDef{\pZ}{\psend qe \seq \pZ}
	\end{cases}
\end{equation}
by performing a proof search in a restricted proof-search space in $\ChorL$ for a sequent encoding the network in \eqref{ex:intro:extraction:network}, obtaining the following cyclic proof (where the $\star$ labels two rules marking the roots of two isomorphic subtrees of the proof).
\begin{equation}\label{ex:intro:extraction:proof}
	\vlderivation{
		\vlin{\Ccomr[\ccom peqx]}{\star}{
			\fof{
				\pdo[p]{\psend qe \seq \pX}
				\ppar
				\pdo[q]{\precv px\seq \precv px \seq \precv ry \seq \pY}
				\ppar
				\pdo[r]{\psend qe \seq \pZ}
			}
		}{
			\vlin{\ldefr}{}{
				\fof{
					\pdo[p]{\pX}
					\ppar
					\pdo[q]{\precv px \seq \precv ry \seq \pY}
					\ppar
					\pdo[r]{\psend qe \seq \pZ}
				}
			}{
				\vlin{\Ccomr[\ccom peqx]}{}{
					\fof{
						\pdo[p]{\psend qe \seq \pX}
						\ppar
						\pdo[q]{\precv px \seq \precv ry \seq \pY}
						\ppar
						\pdo[r]{\psend qe \seq \pZ}
					}
				}{
					\vlin{\Ccomr[\ccom reqy]}{}{
						\fof{
							\pdo[p]{\pX}
							\ppar
							\pdo[q]{\precv ry \seq \pY}
							\ppar
							\pdo[r]{\psend qe \seq \pZ}
						}
					}{
						\vlin{\Crec[\cX]}{}{
							\fof{
								\pdo[p]{\pX}
								\ppar
								\pdo[q]{\pY}
								\ppar
								\pdo[r]{\pZ}
							}
						}{\vlin{}{\star}{
							\fof{
								\pdo[p]{\psend qe \seq \pX}
								\ppar
								\pdo[q]{\precv px\seq \precv px \seq \precv ry \seq \pY}
								\ppar
								\pdo[r]{\psend qe \seq \pZ}
							}
						}{\vlhy{}}}
					}
				}
			}
		}
	}
\end{equation}
Then, from the structure of the proof in \eqref{ex:intro:extraction:proof}, we can extract the following choreography, which we can prove be bisimilar to the initial network in \eqref{ex:intro:extraction:network}.
\begin{equation}\label{ex:intro:extraction:chore}
	\cC=\ccom peqx \seq \ccom peqx \seq \ccom qery \seq \cX
	\qqwith
	\gencProcDef
\end{equation}

The same framework, allows us for a new proof of the endpoint projection theorem, by showing that if we can translate a choreography into a fair proof in $\ChorL$, then the conclusion of the proof is a sequent encoding a network that is bisimilar to the choreography.
Interestingly, this approach allows us to project more choreographies than the ones that can be projected by the endpoint projection algorithms in the literature.

For example, the following choreography is not projectable by the standard endpoint projection algorithm (see, e.g., \cite{montesi:book}), because of the absence of communication of the \emph{knowledge of choice} \cite{CDP12} about the choice made by $\pp$ to $\pr$.
\begin{equation}\label{ex:intro:unprojectable:choreo}
	\cC= \cchoi p{\set{L,R}}q{
	\begin{array}{ll}
		L :& \com\pp e \pr x \seq \cnil
	\\
		R :& \cX
	\end{array}
	}
	\qqwith 
	\gencProcDef
\end{equation}
However, the corresponding proof $\dD$ does not require the knowledge of choice to be transmitted, because the two branches of the $\Cbrar$-rule are \emph{uniform} in their context, meaning that the non-active formulas in both branches are the same, in the same way as for the rule of the additive conjunction in linear logic.
\begin{equation}\label{ex:intro:unprojectable:proof}
	\scalevlderivation[.85]{
		\vliin{\Cbrar[\cchoi p{\set{L,R}}q{\cdots}]}{\star}{
			\fof{
				\pdo[p]{\matpsendl{q}{L: \psend re, R: {\pX[\pp]}}}
				\ppar
				\pdo[q]{\matprecvl{q}{L: \pnil, R: {\pX[\pq]}} }
				\ppar
				\pdo[r]{\precv px} 
			}
		}{
			\vlin{\Ccomr[\ccom perx]+(2\times\Cwr)}{}{
				\fof{
					\pdo[p]{\psend re}
					\ppar
					\pdo[q]{\pnil}
					\ppar
					\pdo[r]{\precv px} 
				}
			}{
				\vlin{\Caxr[\cnil]}{}{\vdash \fof{\nnil}}{\vlhy{}}
			}
		}{
			\vlin{\Crec[\gencProcDef]}{}{
				\fof{
					\pdo[p]{\pX[\pp]}
					\ppar
					\pdo[q]{\pX[\pq]}
					\ppar
					\pdo[r]{\precv px} 
				}
			}{\vlin{}{\star}{
				\fof{
					\pdo[p]{\matpsendl{q}{L: \psend re, R: {\pX[\pp]}}}
					\ppar
					\pdo[q]{\matprecvl{q}{L: \pnil, R: {\pX[\pq]}} }
					\ppar
					\pdo[r]{\precv px}
				}
			}{\vlhy{}}}
		}
	}
\end{equation}
This, allows us to define a more flexible endpoint projection procedure, which can project choreographies that would have required amendment (in the sense of \cite{DLMZ13,CM20,CM23}) in the standard approach.

\subsection{Structure of the paper}
In \Cref{sec:background} we recall some background notions on labeled transition systems and bisimilarity, to fix the notation we use in the paper.
In \Cref{sec:networks} we recall the syntax and semantics of networks, and we introduce the notion of execution tree, which is a special representation of the LTS of a network.
In \Cref{sec:chor} we introduce the syntax and semantics of choreographies, and we recall the standard endpoint projection theorem.
In \Cref{sec:pilu} we extend the system $\PIL$ with rules for fixpoints, and we study its theory.
In \Cref{sec:proofsAsExecutions} we introduce the system $\ChorL$ of synthetic rules, and we show that fair proofs in $\ChorL$ are in bijection execution trees of lock-free tail-recursive networks.
In \Cref{sec:proofsAsChoreos} we use this correspondence to define extraction and projection procedures, and we prove their correctness and completeness.
We conclude in \Cref{sec:conc} with a discussion of related and future work.

\section{Background}\label{sec:background}

In this section we recall the standard notion of (strong) bisimulation on labeled transition systems (LTSs), 
which we use throughout the paper to relate choreographies and networks under a common notion of behavioral equivalence.

A \defn{labeled transition system} is a triple $(S, \Sigma, \semto)$ where
$S$ is a set of \emph{states}, $\Sigma$ is a set of \emph{labels}, and
${\semto} \subseteq S \times \Sigma \times S$ is a \emph{transition relation}.
As customary, we write $s \semto[\mu] s'$ for $(s,\mu,s') \in {\semto}$, where
$\mu$ ranges over $\Sigma$.
A \defn{reduction sequence} is a sequence of states $s_0,\ldots,s_{n+1}$ such that $s_i \semto[\mu_i] s_{i+1}$ for all $i\in\intset0{n}$, in which case we may write $s_0 \semtos[\mu_0\cdots\mu_{n}] s_{n+1}$ (or simply $s_0\semtos s_{n+1}$ if labels are irrelevant or clear from the context).
We write $s \nrightarrow$ there is no $s'$ such that $s\semto[\mu] s'$ for some $\mu\in\Sigma$.

The choreographies and networks defined in \Cref{sec:networks,sec:chor} are both equipped with label transition systems, 
each consisting of the terms of the corresponding language, a set of labels, and a transition relation induced by its operational semantics.

We recall the standard notion of bisimulation between labelled transition systems.
\begin{definition}\label{def:bisim}
	Let $(S, \Sigma, \semto)$ be an LTS.
	A binary relation $\abisim \subseteq S \times S$ is a \defn{bisimulation} if for each pair $(s_1,s_2)\in\abisim$
	and each $\mu \in \Sigma$:
	\begin{itemize}
		\item\label{def:bisim1} if $s_1 \semto[\mu] s_1'$, then there is a $s_2'$ such that $s_2 \semto[\mu] s_2'$ and $(s_1',s_2')\in\abisim$;
		\item\label{def:bisim2} if $s_2 \semto[\mu] s_2'$, then there is a $s_1'$ such that $s_1 \semto[\mu] s_1'$ and $(s_1',s_2')\in\abisim$.
	\end{itemize}
	Two states $s_1$ and $s_2$ are \defn{bisimilar}, written $s_1 \bisim s_2$, if there is a bisimulation $\abisim$ such that $(s_1,s_2)\in\abisim$.
\end{definition}

\section{Network Calculus and Reduction Trees}\label{sec:networks}

In this section, we introduce the \emph{network calculus}, a process calculus for communicating parallel processes that serves as the target language 
of the endpoint projection of choreographies.
We assume a synchronous communication model, where each pair of processes communicate directly using a unique private communication channel; this features gave 
\emph{race freedom} by design, which is a crucial property we need to prove the completeness of our extraction procedure.
We also restrict our attention to \emph{tail-recursive} processes and networks, i.e., those whose procedure calls occur only in tail position. 
This restriction ensures that every network has a finite set of reachable states.

\subsection{Processes}\label{subsec:networks:processes}
We define \defn{process terms} from given pairwise disjoint countable sets of 
\defn{process names} ($\Pset$), 
\defn{variables} ($\Vset$), 
\defn{expressions} ($\Eset$), 
\defn{process variables} ($\PXset$), and 
\defn{label names} ($\Lset$) 
using the following grammar:
\begin{equation}\label{eq:proc-syntax}
	\begin{array}{c}
		\begin{array}{r@{\quad}r@{\;}c@{\;}l}
        \text{\defn{Process terms}}& \pP, \pQ, \pP[\lab],\pQ[\lab] & \Coloneqq &
            \pnil               		\mid 
            \psend{q}{e}  \pseq \pP   	\mid
			\precv{p}{x}  \pseq \pQ   	\mid
			\pgensendl 					\mid
			\pgenrecvl 					\mid
            \pX 				    
    \end{array}
	 \\[5pt]
        \text{where }
        \pid p, \pid q \in \Pset
        \qomma
        x\in\Vset 
        \qomma
        e\in \Eset
        \qomma
        \albl\in\Lset 
        \text{\;, and\;}
        \emptyset \neq \aLset \subseteq \Lset
	\end{array}
\end{equation}
The set of \defn{active process names} in a process term $\pP$ is the set of process names that appear in $\pP$ before any procedure call.
That is, we define $\apnof{\pP}$ as follows:
\begin{equation}
	\begin{array}{c}
		\apnof{\pnil}   =\apnof{\pX} = \emptyset 
	\qquad
		\apnof{\psend{p}{e} \pseq \pP} = \apnof{\precv{p}{x} \pseq \pP} = \set{\pid p} \cup \apnof{\pP}
	\\[10pt]
		\apnof{\pgensendl} = \apnof{\pgenrecvl} = \bigcup_{\lab \in \Lset'}\apnof{\pQ[\lab]}
	\end{array}
\end{equation}
A \defn{(process) procedure definition} is a pair $\genpProcDef$ where $\pX \in \PXset$ (said to be \defn{defined} by the procedure definition), and $\pP$ is a process term (called the \defn{body} of the procedure definition). 
A procedure definition it is \defn{guarded} if $\apnof{\pP} \neq \emptyset$.
A set of procedure definitions $\procP$ is \defn{closed} if it contains a procedure definition $\genpProcDef$ for each process variable occurring in a procedure definition in $\procP$.
It is \defn{closed} with respect to a process term $\pP$ if it is closed and minimal with respect to the property of containing all the process variables occurring in $\pP$ and in $\procP$.
In the following, as in standard presentations of process calculi \cite{SW01,S11},  whenever we define a process $\pP$, we assume that we also fix a finite closed set of guarded procedure definitions $\procP=\procP_{\pP}$, which is minimal with respect to the property of containing all the process variables occurring in $\procP$.
Then, the set of  \defn{process names} $\pnof\pP$ in a process term $\pP$ is defined by extending the function $\apnof\pP$ to also include process names occurring in all process terms in $\procP_{\pP}$.

The set of \defn{process names} $\pnof{\pP}$ in a process term $\pP$ is defined by the same inductive clauses as $\apnof{\pP}$, except for process variables. In particular, if $\pX$ is a process variable and $\procDef{\pX}{\pQ}\in\procP$, then $\pnof{\pX}=\pnof{\pQ}$. Thus, $\pnof{\pP}$ also accounts for the process names occurring in the bodies of the procedure definitions reached through recursive calls.

\subsection{Networks}\label{subsec:networks:networks}

A \defn{network} is a function $\nN$ mapping process names in $\Pset$ to process terms with finite support (i.e., $\nNof\pp\neq\pnil$ for only finitely many $\pp\in\Pset$), such that no process name $\pp$ is mapped to a process term mentioning $\pp$ itself (i.e., $\pp \notin \pnof{\nNof\pp}$).
It is \defn{atomic} if there exists a unique $\pp\in\Pset$ with $\nNof{\pp}\neq\pnil$, in which case we write $\nN=\pdo{\pP}$, where $\nNof{\pp}=\pP$.
If $\nN[1]$ and $\nN[2]$ are networks with disjoint supports (i.e., if $\nNof[i]\pp\neq\pnil$ then  $\nNof[3-i]\pp=\pnil$ with $i\in\set{1,2}$ for all $\pp\in\Pset$), their \defn{parallel composition}, is the network $\nN[1] \ppar \nN[2]$ defined as the union of the two functions.
The parallel composition is a commutative monoid for disjoint networks \cite{montesi:book} with neutral element $\nnil$ defined as the function mapping all process names to $\pnil$.
\begin{remark}
	Every network $\nN$ can be written as the (finite) parallel composition of the atomic networks of the form $\pdo[\pp_i]{\nNof{\pp_i}}$, with $\pp_i\in\set{\pp_1,\ldots,\pp_k}$ in the support of $\nN$.
\end{remark}
As for processes, given a network we assume that we also fix a finite closed set of guarded procedure definitions $\procP$, which is minimal with respect to the property of containing all the process variables occurring in $\procP$ and in the processes occurring in $\nN$.

The operational semantics of networks is given by the three reduction steps in the top of \Cref{fig:psemantics} with labels from the following set.
\begin{equation}\label{eq:semLab}
	\semLab=\Set{				\quad
		\tocom peqx 			\qqomma 
		\tolab pq 				\qqomma 
		\pchoicel{p}{\lab_i}	\quad
	}
\end{equation}
The function $\pn$ is extended to these labels by letting
$\pnof{\tocom peqz} = \pnof{\tolab pq} =\set{\pid p, \pid q}$, 
$\pnof{\pchoicel{p}{\lab_i}} = \set{\pid p}$.
Note that, we always have $\pp\neq\pq$ in the above labels because of the definition of networks.

\begin{figure}[t]
	\adjustbox{max width=\textwidth}{$
	\begin{array}{c}
		\text{Operational rules} \\
		\begin{array}{c}
			\vliiinf{\pcomrule}{}{
				\pdo[p]{\pP} \ppar \pdo[q]{\pQ} \pto{\tocom peqx} \pdo[p]{\pPp} \ppar \pdo[q]{\pQp\psubst{e}{x}}
			}{
				\pdo[p]{\pP} \pto{\psendact{p}{q}{e}} \pdo[p]{\pPp}
			}{\qquad}{
				\pdo[q]{\pQ} \pto{\precvact{q}{p}{x}} \pdo[q]{\pQp}
			}
		\\[15pt]
			\vliiinf{\plabrule}{}{
				\pdo[p]{\pP} \ppar \pdo[q]{\pQ} \pto{\tolab pq} \pdo[p]{\pPp} \ppar \pdo[q]{\pQp}
			}{
				\pdo[p]{\pP} \pto{\pselact{p}{q}{\lab_i}} \pdo[p]{\pPp}
			}{\qquad}{
				\pdo[q]{\pQ} \pto{\pbraact{q}{p}{\lab_i}} \pdo[q]{\pQp}
			}
		\\[15pt]
			\vlinf{\pchoisrule}{}{
				\pdo[p]{\pgensendl} 
				\pto{\pchoicel{p}{\lab_i}} 
				\pdo[p]{\matpsendl q{\lab_i: \pP[{\lab_i}]}}
			}{\lab_i \in \aLset}
		\end{array}
	\\\\ \hline\hline \\[-5pt] \text{Auxiliary rules } \\
		\begin{array}{cc}
			\vlinf{\psendrule}{}{\pdo[p]{\psend qe \pseq \pP} \pto{\psendact{p}{q}{e}} \pdo[p]{\pP}}{}
		&
			\vlinf{\precvrule}{}{\pdo[q]{\precv px \pseq \pQ} \pto{\precvact{q}{p}{x}} \pdo[q]{\pQ}}{}
		\\[15pt]
			\vlinf{\pselectrule}{}{
				\pdo[p]{\matpsendl q{\lab_i: \pP[{\lab_i}]}}
				\pto{\pselact{p}{q}{\lab_i}} 
				\pdo[p]{\pP[{\lab_i}]}
			}{}
		&
			\vlinf{\pbranchrule}{}{\pdo[q]{\pgenrecvl} \pto{\pbraact{q}{p}{\lab_i}} \pdo[q]{\pQ[{\lab_i}]}}{\lab_i \in \aLset}
		\\[15pt]
			\vlinf{\pnetrule}{}{\nN \ppar \nM \pto{\mu} \nNp \ppar \nM}{\nN \pto{\mu} \nNp}
		&
			\vliiinf{\pprocrule}{}{\pdo[p]{\pX} \pto{\mu} \pdo[p]{\pPp}}{\pdo[p]{\pP} \pto{\mu} \pdo[p]{\pPp}}{\qquad}{ \genpProcDef}
		\end{array}
	\end{array}
	$}
	\caption{
		Operational rules for the semantics of networks, and the auxiliary rules required to define them.
	}
	\label{fig:psemantics}
\end{figure}

\begin{definition}\label{def:execution}
	A network $\nN$ is \defn{LF} if for any network $\nNp$ such that $\nN \ptos{} \nNp$, if $\nNpof{\pp}\neq\pnil$ for some $\pp\in\Pset$, then there are networks $\nMp$ and $\nM$ such that $\nNp \ptos{} \nMp \pto{\mu}\nM$ with $\pp\in\pnof\mu$.
\end{definition}

We introduce the notion of \emph{\redtree}, whose branches represent possible executions of a network. 
Whenever a node has multiple successors, these successors correspond to distinct internal choices made by the same process: each successor is obtained by 
selecting a different label among those available to that process.

\begin{definition}\label{def:redtree}
	A \defn{\redtree} of a network $\nN$ is a (possibly infinite) tree $\predtree \nN$ of networks, with nodes networks and with edges by reductions such that:
	\begin{enumerate}
		\item the root of $\predtree \nN$ is the state $\nN$;
		\item if the state $\nM$ has a unique child $\nMp$ then $\nM \pto{\mu} \nMp$ for a   $\mu \in \set{\tocom peqx , \ \tochoi[\lab] p, \ \tolab[\lab] pq \mid \pid p, \pid q \in \Pset,\lab\in\Lset}$;
		\item\label{def:redtree3} if the state $\nM$ has more than one child, then $\nM = \pdo{\pgensendl}\ppar \nM'$ and the set of children of $\nM$ is the set of all states $\nM[\ell]$ such that $\nM \pto{\tochoi[\ell]{p}} \nM[\ell]$ with $\ell \in \aLset$;
		
		\item $\predtree \nN$ is maximal: a state $\nM$ has no children only if $\nM \nrightarrow$.
	\end{enumerate}
	An execution in $\predtree\nN$ is a maximal path $\nN=\nN_0,\ldots, \nN_n,\ldots$. 

	A \redtree $T$ is \defn{\ok} if
	every state $\nM[0]$ occurring in $T$ with $\nMof[0]{\pp}\neq\pnil$,
	has a descendant $\nM[i]$ with a child $\nM[i+1]$
	such that $\nM[0] \ptos{} \nM[i] \pto{\mu} \nM[i+1]$ (for a label $\mu$ with $\pp\in\pnof{\mu}$)
	and
	$\nMof[0]{\pp}=\ldots=\nMof[i]\pp\neq\nMof[i+1]\pp$
\end{definition}

\begin{lemma}\label{lem:generate_redtree}
	Let $\nN$ be a network admitting a \ok \redtree $\predtree\nN$.
	If $\nN \pto{\mu} \nM$ then $\nM$ admits a \ok \redtree $\predtree\nM$.
\end{lemma}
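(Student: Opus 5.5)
The idea is to build $\predtree\nM$ as the \emph{residual} of $\predtree\nN$ after the step $\mu$. Take the steps of $\predtree\nN$ that happen before the one matching $\mu$, move them after $\mu$, and delete the edge labelled $\mu$. Three facts drive this. First, communication is race-free, because each pair of processes shares a private channel. Second, a process term determines its next action uniquely, except for the internal choice of a label. Third, reductions that involve disjoint sets of process names commute.

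\emph{Step 1: locate $\mu$ in every branch.} Let $\pnof\mu=\set{\pp,\pq}$, or $\set{\pp}$ when $\mu$ is a choice. Since $\nN\pto{\mu}\nM$, we have $\nNof{\pp}\neq\pnil$. Fairness of $\predtree\nN$ then gives, along every path from the root, a first edge $\nN_i\pto{\nu}\nN_{i+1}$ that changes the term of $\pp$. I claim that in every branch this edge is $\mu$ itself. When $\mu$ is a choice, the claim holds for one of the children at the branching node, namely the one selecting the label of $\mu$.

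To prove the claim, note that $\nNof[i]{\pp}=\nNof{\pp}$. So $\nu$ is an action enabled by the same term, for instance a send to $\pq$, or a selection towards $\pq$. The term of $\pq$ can only change through a step involving $\pq$, and that step would have to be a communication with $\pp$. Hence $\nNof[i]{\pq}=\nNof{\pq}$, and $\nu=\mu$, up to the value or label. Values and labels are determined by the terms for com and label steps, and by the branching of clause \ref{def:redtree3} for choice steps.

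All edges before position $i$ involve only names disjoint from $\pnof\mu$. A standard diamond argument on the rules of \Cref{fig:psemantics} (through $\pnetrule$) shows that each of them commutes with $\mu$.

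\emph{Step 2: define $\predtree\nM$.} Let $S$ be the subtree of $\predtree\nN$ made of the nodes strictly before the first $\mu$-edge on each branch. For each node $\nN'$ in $S$, let $\nN'\!/\mu$ be the network obtained by applying $\mu$ to the $\pp,\pq$-components of $\nN'$. These components coincide with those of $\nN$, so $\nN'\!/\mu$ is well defined. Every edge $\nN'\pto{\nu}\nN''$ of $S$ yields an edge $\nN'\!/\mu\pto{\nu}\nN''\!/\mu$. A branching in $S$ is a choice by a process outside $\pnof\mu$, so it is reproduced with the same set of children.

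Where a branch of $S$ reaches its $\mu$-edge $\nN_i\pto{\mu}\nN_{i+1}$, we have $\nN_i/\mu=\nN_{i+1}$. At that point we graft the original subtree of $\predtree\nN$ rooted at $\nN_{i+1}$. The root is $\nN/\mu=\nM$.

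The result is a reduction tree:
\begin{itemize}
\item Clauses 1--3 follow from the construction above.
\item Maximality holds because every node of $S$ has a child, so the only leaves are leaves of the grafted subtrees.
\end{itemize}

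\emph{Step 3: fairness.} Take a node $\nM[0]$ of $\predtree\nM$ with $\nMof[0]{\pr}\neq\pnil$.
\begin{itemize}
\item If $\nM[0]$ lies in a grafted subtree, the witness is inherited from $\predtree\nN$.
\item If $\nM[0]=\nN'\!/\mu$ with $\nN'$ in $S$ and $\pr\notin\pnof\mu$, then $\nN'$ has a witness descendant in $\predtree\nN$. Its image under the residual map is a witness in $\predtree\nM$. Edges not involving $\pr$ stay edges not involving $\pr$, and the $\mu$-edge is deleted, which changes nothing for $\pr$.
\item If $\pr\in\pnof\mu$, the term of $\pr$ in $\nN'\!/\mu$ equals its term in $\nN_{i+1}$ for the corresponding branch. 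The witness is then the first step after $\nN_{i+1}$ that changes $\pr$, which exists by fairness of $\predtree\nN$ applied at $\nN_{i+1}$.
\end{itemize}

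The main obstacle is the last case of Step 3, together with the well-definedness in Step 1. Different branches of $S$ may reach $\mu$ at different, unbounded depths. The argument therefore has to be carried out pointwise per branch, and must not rely on a uniform bound.
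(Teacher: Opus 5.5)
Your construction is the paper's. It too performs $\mu$ first, replays $\predtree\nN$ while skipping the $\mu$-step when it is met, and keeps only the $\lab_j$-child at the matching choice node. Your residual map $\nN'\mapsto\nN'/\mu$ with grafting makes this explicit, and your third bullet of Step~3 has the right idea for processes in $\pnof\mu$, a point the paper's one-line fairness check skips. Two steps, however, do not go through as written.

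First, fairness in Definition~\ref{def:redtree} is existential. A node where $\pp$ is non-nil has \emph{some} descendant at which the term of $\pp$ changes; it does not guarantee such a change on every path. For instance, an infinite branch may keep following one child of an unrelated recurring choice while $\mu$ is scheduled only below the other children. Hence ``every branch meets a $\mu$-edge'' is unjustified, and ``the corresponding branch'' in your third bullet need not exist.

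Second, take $\mu=\pchoicel{p}{\lab_j}$. The witness for $\pr\notin\pnof\mu$ at $\nN'$ that you invoke in your second bullet may pass through a sibling $\lab_k\neq\lab_j$ of a $\pp$-choice node. Those subtrees are discarded, so that witness has no image in $\predtree\nM$.

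A single argument repairs both, using only what the definition provides. For $\nN'$ in $S$, $\pp$ still has its original non-nil term. So fairness at $\nN'$, together with your race-freedom argument, yields a path from $\nN'$ that stays in $S$ and ends with a $\mu$-edge $\nN[i]\pto{\mu}\nN[i+1]$ (the edge to the $\lab_j$-child in the choice case). If the term of $\pr$ changes on this path before $\nN[i]$, the residual image of that prefix is the witness. Otherwise $\pr$ has at $\nN[i+1]$ the same term as at $\nN'/\mu$, and fairness at $\nN[i+1]$, inside the grafted subtree, supplies the witness. This treats $\pr\in\pnof\mu$ and $\pr\notin\pnof\mu$ uniformly. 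Branches of $S$ that never meet $\mu$ are harmless: they map to infinite branches of residuals, which have no leaves because $\mu$ stays enabled along them.

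Two smaller points. Let $i$ be the first edge \emph{involving} $\pp$ rather than the first one changing its term. With $\procDef{\pX}{\psend qe\pseq\pX}$ a step can involve $\pp$ and leave $\pdo[p]{\pX}$ unchanged, while your commutation step needs the earlier edges to be disjoint from $\pnof\mu$. Also state explicitly that, like the paper, you read clause~\ref{def:redtree3} as forcing every choice step of a reduction tree to come with all its alternatives. Clause~2 on its own also allows a single child labelled $\pchoicel{p}{\lab_k}$, and then there is no $\lab_j$-child to graft.
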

\begin{proof}
If $\nM$ already occurs as a state of $\predtree\nN$, we let $\predtree\nM$ be the subtree of $\predtree\nN$ rooted at $\nM$.
The obtained tree is a \redtree and is \ok since $\predtree\nN$ is.

Otherwise, we construct an auxiliary \redtree rooted at $\nN$ extending the transition
$\nN\pto{\mu}\nM$.
We show that this tree is \ok, and conclude by considering its subtree rooted at $\nM$.
If $\mu\neq\pchoicel{p}{\lab_i}$, the root has $\nM$ as its unique child, and we continue by following the reductions of $\predtree\nN$.
When the transition labeled by $\mu$ is encountered in $\predtree\nN$, we skip it and continue from its successor.
If $\mu=\pchoicel{p}{\lab_i}$, besides $\nM$ we add all the other children required by
\Cref{def:redtree}.\ref{def:redtree3}.
For each such child $\nM[\lab]$, we continue by following the corresponding subtree of $\predtree\nN$.
Whenever the same choice node is reached, only the branch labelled $\lab$ is retained.
Hence, every branch of the new tree is obtained from a branch of $\predtree\nN$ by removing the initial reductions 
involving the processes in $\pnof{\mu}$ that have already been performed at the root.
All remaining reductions are unchanged, and therefore the resulting tree is a \redtree.
The \ok property is preserved: for every state $\nM[0]$ in the new tree and every process $\pp$ with $\nMof[0]{\pp}\neq\pnil$, 
the witnessing reduction required by the \ok condition in $\predtree\nN$ is either preserved in the new tree or coincides with the initial 
reduction $\mu$ already performed at the root.
\end{proof}

\begin{theorem}\label{thm:LF}
	Let $\nN$ be a network.
	Then, $\nN$ is \LF iff it admits a \ok \redtree.
\end{theorem}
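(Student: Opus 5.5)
We prove the two implications separately. For the right-to-left direction we rely on \Cref{lem:generate_redtree}; for the left-to-right direction we construct a fair reduction tree explicitly, using the finiteness of reachable states that tail recursion provides.

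\emph{($\Leftarrow$)} Assume $\nN$ admits a \ok \redtree $\predtree\nN$, and let $\nN \ptos{} \nNp$ with $\nNpof{\pp}\neq\pnil$. We apply \Cref{lem:generate_redtree} along the reduction sequence $\nN\pto{\mu_0}\cdots\pto{\mu_n}\nNp$, by induction on $n$. This yields a \ok \redtree $\predtree{\nNp}$ rooted at $\nNp$. The fairness condition applied to its root $\nNp$ and the process $\pp$ gives a descendant $\nM[i]$ with a child $\nM[i+1]$ such that $\nNp\ptos{}\nM[i]\pto{\mu}\nM[i+1]$ and $\pp\in\pnof{\mu}$. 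This is exactly the witness required by \Cref{def:execution}, with $\nMp=\nM[i]$ and $\nM=\nM[i+1]$.

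\emph{($\Rightarrow$)} Assume $\nN$ is \LF. We build $\predtree\nN$ coinductively, level by level, using a round-robin scheduler.
\begin{enumerate}
\item Attach to each node a queue of the process names that are still pending, meaning they are non-$\pnil$ and have not acted since they were enqueued.
\item At a node $\nM$, take the head $\pp$ of the queue. Since $\nM$ is reachable from $\nN$, lock-freedom gives a finite path $\nM\ptos{}\nMp\pto{\mu}\cdot$ with $\pp\in\pnof\mu$.
\item Extend the tree along a shortest such path. Whenever a step of the path is a choice $\pchoicel{q}{\lab}$, we must add all sibling children $\nM[\lab']$ for the other labels, as required by \Cref{def:redtree}.\ref{def:redtree3}. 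Each sibling is again reachable from $\nN$, so we restart the scheduler there with the inherited queue.
\item Once $\pp$ has acted, move it to the back of the queue, or drop it if it has become $\pnil$, and add any newly relevant names.
\item If a node admits no reduction, it becomes a leaf. Lock-freedom then forces every process at that node to be $\pnil$, so maximality holds.
\end{enumerate}
Every edge is a reduction, and nodes with several children arise only from choices, so the result is a \redtree.

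The main obstacle is checking the precise fairness condition in \Cref{def:redtree}. It requires $\nMof[0]{\pp}=\cdots=\nMof[i]{\pp}$ \emph{along every branch} from a given state $\nM[0]$, and it must hold for every state, not only those where $\pp$ reaches the head of the queue. We address this as follows.
\begin{itemize}
\item Take the first step on the branch that changes $\pp$. Only interactions involving $\pp$ change $\nMof{}{\pp}$: the $\pcomrule$ and $\plabrule$ rules, and $\pchoicel{p}{\lab}$ itself. That first step is therefore the required witness, provided some such step exists on the branch.
\item Existence follows from the scheduler. Since $\pp$ is enqueued from $\nM[0]$ onward, it reaches the head of the queue after finitely many rounds on every branch. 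Each round is finite because each lock-freedom path is finite.
\item Finitely many rounds do not follow from the above alone, because siblings created during a round restart from their own positions. Here we use that the state space is finite (tail recursion), and that the queue length is bounded by the number of processes. Together these bound the number of scheduling phases before $\pp$ is served.
\end{itemize}
If this accounting proves delicate, we fall back on a simpler scheduler. From each node we serve \emph{all} pending processes, in a fixed order, before starting a new phase. This makes the bound immediate: each phase is a finite sequence of finite paths, branch by branch.
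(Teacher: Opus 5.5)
Your ($\Leftarrow$) direction is the paper's argument: iterate \Cref{lem:generate_redtree} along $\nN\ptos{}\nNp$ and read the witness off the root of $\predtree{\nNp}$. For ($\Rightarrow$) you depart from the paper, which simply asserts that every \redtree of a \LF network is \ok. Building a specific tree is the right instinct, since a tree that forever runs only one of two independent looping pairs is a legitimate \redtree of a \LF network that is not \ok.

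Your ($\Rightarrow$) argument nevertheless has a genuine gap. You aim to show that, from every state, \emph{every} branch lets each pending process act, in a tree that adds all label-siblings at choices. You then claim that finiteness of the state space and of the queue bounds the restarts on sibling branches. That is false, and the property itself is unattainable: \Cref{def:execution} only says that \emph{some} path lets $\pp$ act, not that no branch avoids it forever. Take $\pdo[p]{\precv qx \pseq \pnil} \ppar \pdo[q]{\pX} \ppar \pdo[r]{\pY}$ with $\procDef{\pX}{\matpsendl{r}{L: \psend pe \pseq \pnil,\ R: \psend re \pseq \pX}}$ and $\procDef{\pY}{\matprecvl{q}{L: \pnil,\ R: \precv qy \pseq \pY}}$. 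This network is \LF. Yet at the root only $\pchoicel{q}{L}$ and $\pchoicel{q}{R}$ are enabled, so every tree branching on all labels contains the branch that always selects $R$, and on that branch $\pp$ never acts. Finiteness of the state space only makes that branch cyclic. Once $\pp$ heads the queue, each $R$-sibling restarts the search for $\pp$, so your claim that \emph{each round is finite} fails. Your fallback scheduler fails the same way, because its phase serving $\pp$ never ends on that branch.

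The repair is to prove what \Cref{def:redtree} actually states, which is existential (\emph{has a descendant}), and your tree does satisfy that. From any node $\nM[0]$, follow the main continuation, i.e.\ at each choice take the label selected by the scheduler's current path. Each path is finite, so as long as every non-$\pnil$ process stays in the queue, $\pp$ participates within at most as many rounds as the queue has entries. A universal reading could only range over fair branches, in the sense of Condition~\ref{fair:2} of \Cref{def:fairProofs}, and would need a separate argument.

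Take as witness the first step on this continuation with $\pp\in\pnof\mu$. Only such steps touch $\pp$'s term, so $\nMof[0]\pp=\cdots=\nMof[i]\pp$ holds. Do not look for the first step that \emph{changes} $\pp$'s term, as you do. With $\procDef{\pZ}{\psend qe \pseq \pZ}$, a step of $\pdo[p]{\pZ}$ leads back to $\pdo[p]{\pZ}$, so your scheduler guarantees participation, not change. The clause $\nMof[i]\pp\neq\nMof[i+1]\pp$ therefore has to be read as \emph{first participation}; taken literally, even the network of the introduction would violate it.
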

\begin{proof}
	If $\nN$ is \LF, then any execution of $\nN$ in a \redtree $\predtree\nN$ satisfies the \LF property, and therefore $\predtree\nN$ is \ok.
	Conversely, assume that $\nN \ptos{} \nNp$. By repeatedly applying \Cref{lem:generate_redtree}, we obtain a \ok \redtree $\predtree{\nNp}$.
	The result follows from the definition of an \ok \redtree.
\end{proof}

\begin{definition}
	A \redtree is \defn{finitely representable} if it can be represented by a finite graph (with edges labels on edges), whose paths are in bijection with the executions of $\predtree \nN$.
	We call such graph the \defn{cyclic representation} of $\predtree \nN$.
\end{definition}

\begin{remark}\label{thm:fintree}
	Since we are considering only tail-recursive networks, the set of reachable states of a network $\nN$ is finite.
	Therefore, any \redtree $\predtree \nN$ of a network $\nN$ is finitely representable.
\end{remark}

\section{Choreographies\dots now with tensor!}\label{sec:chor}

In this section, we fix the syntax and semantics of the choreographic language used in this paper. 
To establish completeness of our extraction procedure, we extend the standard syntax of choreographies with a tensor operator ($\ctens$), 
which composes choreographies with disjoint sets of process names and represents independent behaviors proceeding in parallel.

To get the intuition on why the tensor operation is needed, 
consider the following network, 
where processes $\pp$ and $\pq$ engage in a choice, 
while $\pr$ and $\pid s$ repeatedly communicate independently of that choice:
$$
\pdo[p]{\matpsendl{q}{L: \psend qe \pseq \pnil, R: {\pX[\pp]}}}
\ppar
\pdo[q]{\matprecvl{p}{L: \precv px \pseq \pnil, R: {\pX[\pq]}}}
\ppar
\pdo[r]{\psend sf \pseq \pX[\pr]}
\ppar
\pdo[s]{\precv ry \pseq \pX[{\pid s}]}
$$

where $ {\pX[\pp]} \triangleq \matpsendl{q}{L: \psend qe \pseq \pnil, R: {\pX[\pp]}}$, $ {\pX[\pq]} \triangleq \matprecvl{p}{L: \precv px \pseq \pnil, R: {\pX[\pq]}}$,
$\pX[\pr] \triangleq \psend sf \pseq \pX[\pr]$ and $\pX[{\pid s}] \triangleq \precv ry \pseq \pX[{\pid s}]$.

The behavior of this network decomposes into two fully independent
components: the interaction between $\pp$ and $\pq$, and the interaction
between $\pr$ and $\pid s$. The choreography capturing it must reflect
this separation, as in \eqref{ex:tensor}.
\begin{equation}\label{ex:tensor}
	\cX[1]
	\ctens
	\cchoi{p}{\set{L,R}}{q}{
		\begin{array}{ll}
			L :& \com\pp e \pq x \seq \cnil\!\!
		\\
			R :& \cX[2]
		\end{array}
	}
	\qwith
	\begin{cases}
		\procDef{\cX[1]}{\ccom sfry \seq \cX[1]}
	\\	
		\procDef{\cX[2]}{\cchoi{p}{\set{L,R}}{q}{
		\begin{array}{ll}
			L :& \com\pp e \pq x \seq \cnil\!\!
		\\
			R :& \cX[2]
		\end{array}
	}}
	\end{cases}	\!\!\!
\end{equation}
Without the tensor, any tail recursive choreography over all four processes $\pp$, $\pq$,
$\pr$, $\pid s$ would force their behaviors to share the same recursive variable, conflating independent components into
a single recursive structure.  
This prevents bisimilarity with the
network above: if $\pp$ and $\pq$ chose $L$ and terminated, the
choreography would also force $\pr$ and $\pid s$ to terminate, whereas in
the network they continue communicating independently.

\begin{remark}
	The semantics of choreographies we consider in this paper does not include runtime terms to represent partially unfolded procedure calls, as in \cite{montesi:book}.
	While these runtime terms are in general necessary in the presence of `$\tau$-transitions' in the network semantics to unfold the procedure calls, 
	our network semantics does not include such transitions, and the runtime terms are therefore unnecessary.

	In fact, our extraction procedure benefits from this simplification, as it eases the extraction of the choreography 
	from the network, even when amendment (in the sense of \cite{DLMZ13,CM20,CM23}) is required due to a lack of knowledge-of-choice transmission, 
	as shown in \Cref{ex:extraction}.
\end{remark}

\subsection{Syntax}\label{subsec:chor:syntax}
The set \defn{\chors} is generated from the same pairwise disjoint sets of \defn{process names} ($\Pset$), 
\defn{variables} ($\Vset$), \defn{expressions} ($\Eset$), \defn{choreographic variables} ($\CXset$), and \defn{label names} ($\Lset$) 
used in the definition of networks. Its syntax is given by the following grammar:
\begin{equation}\label{eq:def-chor}
	\begin{array}{l@{\;}c@{\;}l@{\;\vrule\;}l}
		\cC & \coloneqq & \cnil  		& \mbox{terminated choreography} 	\\
		&\mid & \cX						& \mbox{choreographic variable}		\\
		&\mid &\cgencom \seq \cC 		& \mbox{communication}				\\
		&\mid & \cgenchois 				& \mbox{choice}						\\ 
		&\mid & \cC[1] \ctens \cC[2] 	& \mbox{tensor composition}			
	\end{array}
\end{equation}
where $\emptyset\neq\aLset \subseteq \aLset'\subseteq \Lset$ and $\sizeof{\aLset}\in\set{1,\sizeof{\aLset'}}$ and $\pp, \pq \in \Pset$, $\pp\neq\pq$.

The term $\cnil$ denotes a terminated \chor, while $\cX$ denotes a choreographic variable, which we use to allow our semantics to define recursive terms (see below). 
The \chor $\ccom peqx \seq \cC$ starts with an instruction $\ccom peqx$ requiring $\pp$ to send the expression $e$ to $\pq$, and continue as $\cC$, 
where $x$ is replaced by $e$ at $\pq$.
The \chor $\cgenchois$ instructs $\pp$ to choose one label in $\aLset$ and communicate it to $\pq$, to then continue as $\cC[\lab]$ for the selected label $\lab$. Here $\pq$ could, a priori, offer more labels than those that $\pp$ can select, which is why we have $\aLset \subseteq \aLset'$.
The \chor $\cC[1]\ctens\cC[2]$ is the parallel composition of two choreographies.

A \chor is \defn{finite} if it contains no choreographic variables.
The set of \defn{active process names} in a choreography $\cC$ is the set of process names that appear in $\cC$ before any procedure call.
That is, we define $\apnof{\cC}$ as follows:
$$
\begin{array}{c}
	\apnof{\cnil} = \apnof{\cX} = \emptyset  
	\qquad 
	\apnof{\cgencom \seq \cC} = \set{\pid p, \pid q} \cup \apnof{\cC} 
	\\[5pt]
	\apnof{\cC[1] \ctens \cC[2]} = \apnof{\cC[1]} \cup \apnof{\cC[2]}
	\\[5pt]
	\apnof{\cgenchois} = \set{\pid p, \pid q} \cup (\bigcup_{\lab \in \aLset'} \apnof{\cC[\lab]}) 
\end{array}
$$
We define \defn{(choreographic) procedure definitions} similarly to process procedure definitions: they are pairs $\gencProcDef$ where the variable $\cX \in \CXset$ is defined by the choreography $\cC$ (the \defn{body} of the procedure definition). 
The set of \defn{process names} $\pnof\cC$ in a choreography $\cC$ is defined by extending the function $\apnof\cC$ to also include process names occurring in all choreographies occurring in the set of procedure definitions $\procC$.
\footnote{
	In practice, for finite closed sets of procedure definitions, $\pnof{\cX}$ can always be computed for all $\cX$ as shown in \cite{CMP23}.
}
A procedure definition $\gencProcDef$ is \defn{guarded} if{ $\apnof\cC \neq \emptyset$. }
A set of procedure definitions $\procC$ is \defn{closed} if it contains a procedure definition $\gencProcDef$ for each choreographic variable occurring in a procedure definition in $\procC$.

\begin{notation}
	In the following, 
	as in standard presentations of process calculi \cite{SW01,S11}, 
	whenever we define a choreography $\cC$ we assume that we also fix a finite closed set of guarded procedure definitions $\procC$, which is minimal with respect to the property of containing all the choreographic variables occurring in $\cC$
\end{notation}

\begin{definition}
	A \chor $\cC$ is \defn{\chorok} if, for every sub-term of the form $\cC[1] \ctens \cC[2]$, the sets of process names of $\cC[1]$ and $\cC[2]$ are disjoint. 
	That is, $\pnof{\cC[1]} \cap \pnof{\cC[2]} = \emptyset$.
\end{definition}

\begin{remark}
	By definition, a recursive variable cannot occur in both components of a parallel composition in a \chorok \chor.
	Indeed, if $\cX$ occurs in both $\cC[1]$ and $\cC[2]$, then the process names occurring in the body of $\cX$ belong to both $\pnof{\cC[1]}$ and $\pnof{\cC[2]}$. Hence, $\pnof{\cC[1]} \cap \pnof{\cC[2]} \neq \emptyset$.
\end{remark}

\begin{example}
	The choreography
	$\cC[1]= 
	\cX[1]
	{\ctens}
	\cchoi{p}{\set{L,R}}{q}{
		\begin{array}{ll}
			L :& \com\pp e \pr x \seq \cnil
		\\
			R :& \cX[2]
		\end{array}
	}$
	with 
	$\procDef{\cX[1]}{\ccom serx \seq \cX[1]}$
	and
	$\procDef{\cX[2]}{\ccom peqx \seq \cX[2]}$
	is \chorok,
	while the choreography
	$\cC[2] = \cX$, with $\procDef{\cX}{(\ccom peqx \seq \cX )\ctens (\ccom resy \seq\cX)}$ is not \chorok since 
	$\pnof{\ccom peqx \seq \cX } = \pnof{\ccom resy \seq\cX}$.
\end{example}

\subsection{Semantics}\label{subsec:chor:semantics}

In \Cref{fig:semantics}, we define the labelled operational semantics of \chors. Transitions are labelled by elements of $\semLab$, 
defined in \Cref{eq:semLab}. Intuitively, a label records the processes participating in the corresponding step. The semantics extends the standard operational 
semantics of choreographies with tensor composition $\ctens$ quite naturally:
\begin{itemize}
	\item Rule $\comrule$ executes a communication: process $\pp$ sends $e$ to $\pq$, and the choreography continues as $\cC\cvsubst{e}{x}$, that is, the choreography $\cC$ where all occurrences of $x$ at $\pq$ are replaced by $e$.

	\item Rule $\unfoldrule$ unfolds a choreographic variable according to its procedure definition.

	\item Rules $\choirule$ and $\labrule$ implement choices in two steps. First, $\pp$ selects a label $\lab_i$; then, the selected label is communicated to $\pq$,  and the choreography proceeds as $\cC[\lab_i]$. The rules $\tensrulel$ and $\tensruler$ allow either component of a tensor composition to reduce independently. 

	\item Rules $\delcomrule$ and $\delchoirule$ allow a transition in a continuation to be performed before its enclosing communication or choice, provided that  the transition does not involve the processes controlling that prefix. In rule $\delcomrule$, the delayed transition must not involve either $\pp$ or $\pq$.  In rule $\delchoirule$, it must not involve either $\pp$ or $\pq$ and must be enabled in every branch $\cC[\lab]$ for $\lab\in\aLset'$. Thus, the delayed transition is independent of the label eventually selected by $\pp$.
\end{itemize}

\begin{figure}[t]
	\adjustbox{max width=\textwidth}{$
	\begin{array}{c}
		\vlinf{\comrule}{}{\cgencom \seq \cC \ \cto{\tocom peqx} \  \cC \cvsubst{e}{x}}{}
	\qquad\qquad
		\vliiinf{\unfoldrule}{}{\cX \ \cto{\mu} \ \cCp}{\cC \ \cto{\mu} \ \cCp}{\qquad}{\gencProcDef}
	\\[12pt]
		\vlinf{\choirule}{}{\cgenchois \ \cto{\tochoi p} \ \cgenchoi}{ \lab_i \in \aLset}
	\\[12pt]
		\vlinf{\labrule}{}{\cgenchoi \ \cto{\tolab pq} \ \cC[\lab_i]}{ \lab_i \in \aLset}
	\qquad
		\vlinf{\tensrulel}{}{\cC[1] \ctens \cC[2] \ \cto{\mu} \ \cCp[1] \ctens \cC[2]}{ \cC[1] \cto{\mu} \cCp[1]}
	\\[12pt]
		\vliiinf{\delcomrule}{}{\cgencom \seq \cC \ \cto{\mu} \ \cgencom \seq \cCp}{ \cC \ \cto{\mu} \ \cCp}{\qquad}{ \pnof{\mu} \cap \set{\pp , \pq} = \emptyset}
	\qquad
		\vlinf{\tensruler}{}{\cC[1] \ctens \cC[2] \ \cto{\mu} \ \cC[1] \ctens \cCp[2]}{ \cC[2] \cto{\mu} \cCp[2]}
	\\[12pt]
		\vliiinf{\delchoirule}{}{
			\cgenchois \ \cto{\mu} \ \genchoisp
		}{
			\cC[\lab] \ \cto{\mu} \ \cCp[\lab]
			\text{ for all } \lab \in \aLset'
		}{\qquad}{\pnof{\mu} \cap \set{\pp, \pq} = \emptyset}	
	\end{array}
	$}
	\caption{Semantics of choreographies.
	}
	\label{fig:semantics}
\end{figure}

The bisimulation equivalence ($\bisim$) induced by the semantics of \chors on the set of \chors has the following properties.

\begin{proposition}\label{thm:tensor_bisimilarity}
	For all \chors $\cC, \cC[1], \cC[2], \cC[3]$, the following properties hold:
	\begin{itemize}
		\item $\cC[1] \ctens \cC[2] \sim \cC[2] \ctens  \cC[1]$
		\item $\cC[1] \ctens (\cC[2] \ctens \cC[3]) \sim (\cC[1] \ctens \cC[2]) \ctens \cC[3]$
		\item $\cC \ctens \cnil \sim \cC$
	\end{itemize}
\end{proposition}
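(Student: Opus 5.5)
We exhibit, for each of the three items, an explicit relation on choreographies and check that it is a bisimulation in the sense of \Cref{def:bisim}. The relations are closed under the transitions of the components. For commutativity we take
\[
\abisim_1=\set{(\cC[1]\ctens\cC[2],\;\cC[2]\ctens\cC[1]) \mid \cC[1],\cC[2]\text{ choreographies}}.
\]
For associativity we take
\[
\abisim_2=\set{(\cC[1]\ctens(\cC[2]\ctens\cC[3]),\;(\cC[1]\ctens\cC[2])\ctens\cC[3])}.
\]
For the unit we take
\[
\abisim_3=\set{(\cC\ctens\cnil,\;\cC)}.
\]
In each case all components range over all choreographies, so the family is closed under arbitrary residuals. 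Symmetry of the bisimulation conditions lets us treat each direction by the same argument.

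The key observation is that the only rules of \Cref{fig:semantics} whose conclusion has a tensor at the root are $\tensrulel$ and $\tensruler$. This holds because $\comrule$, $\choirule$, $\labrule$, $\delcomrule$ and $\delchoirule$ apply only to prefixed terms, and $\unfoldrule$ applies only to variables. Hence a transition $\cC[1]\ctens\cC[2]\cto{\mu}\cC'$ arises exactly in one of two ways. Either $\cC'=\cCp[1]\ctens\cC[2]$ with $\cC[1]\cto{\mu}\cCp[1]$, or $\cC'=\cC[1]\ctens\cCp[2]$ with $\cC[2]\cto{\mu}\cCp[2]$. This inversion principle is proved by a trivial case analysis on the last rule of the derivation.

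\emph{Commutativity.} A $\tensrulel$ step of $\cC[1]\ctens\cC[2]$ is matched by a $\tensruler$ step of $\cC[2]\ctens\cC[1]$ with the same label, and vice versa. The residual pair $(\cCp[1]\ctens\cC[2],\;\cC[2]\ctens\cCp[1])$ is again in $\abisim_1$.

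\emph{Associativity.} Apply inversion twice on the nested side. A step of $\cC[1]$ is obtained by $\tensrulel$ on the left term; on the right term it is obtained by $\tensrulel$ applied to a $\tensrulel$ derivation. A step of $\cC[2]$ is obtained by $\tensruler$ over $\tensrulel$ on the left term and by $\tensrulel$ over $\tensruler$ on the right term. A step of $\cC[3]$ is handled symmetrically. In all three cases the residuals have the shape required by $\abisim_2$.

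\emph{Unit.} We need that $\cnil$ has no transitions, which is immediate because no rule has $\cnil$ as the source of its conclusion. Therefore every step of $\cC\ctens\cnil$ comes from $\tensrulel$ with a step $\cC\cto{\mu}\cCp$, and its residual is $\cCp\ctens\cnil$. Conversely, any step $\cC\cto{\mu}\cCp$ lifts by $\tensrulel$ to a step of $\cC\ctens\cnil$. The residual pair $(\cCp\ctens\cnil,\cCp)$ lies in $\abisim_3$.

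The only delicate point is the inversion principle. In particular, we must make sure that $\delcomrule$ and $\delchoirule$ cannot fire on a tensor, and that substitutions $\cvsubst{e}{x}$ do not interfere. This is settled by inspecting rule conclusions. Well-formedness is not needed, since the relations are defined over all choreographies.
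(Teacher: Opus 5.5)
The paper states this proposition without proof, so there is nothing to compare against. Your argument is correct and is the routine verification the paper leaves implicit. The inversion lemma (only $\tensrulel$ and $\tensruler$ have a tensor as source, and $\cnil$ has no transitions), together with the residual-closed relations $\abisim_1,\abisim_2,\abisim_3$, settles all three items. One phrase could be tightened: ``symmetry of the bisimulation conditions'' does not literally apply to $\abisim_2$ and $\abisim_3$, which are not symmetric relations. However, your case analysis on which component moves, plus the explicit converse in the unit case, already covers both directions, so nothing is missing.
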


We recall the definition of lock-freedom and the statement of the lock-freedom holds for \chorok \chors.

\begin{definition}
	A \chor $\cC$ is \defn{\LF} if, for every $\cCp$ such that $\cC \ctos{} \cCp$ and every $\pp\in\pnof{\cCp}$, there exist $n\geq 1$ and a choreography $\cCpp$ such that
	$\cCp \ctos{\mu_1\cdots\mu_n}\cCpp$
	with $\pp\in\pnof{\mu_n}$ and $\pp\notin\pnof{\mu_i}$ for all $i\in\intset1{n-1}$.
\end{definition}

\begin{theorem}\label{thm:coreoLF}
	Let $\cC$ be a \chorok \chor, then $\cC$ is lock-free. 
\end{theorem}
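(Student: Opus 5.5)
Since lock-freedom quantifies over all reachable $\cCp$, we first show that the relevant structural properties are preserved along reductions, and then prove a local "progress for every active process" lemma. Concretely, we prove two invariants. First, \emph{preservation}: if $\cC$ is \chorok (with a closed set $\procC$ of guarded procedure definitions) and $\cC \cto{\mu} \cCp$, then $\cCp$ is \chorok and $\pnof{\cCp}\subseteq\pnof{\cC}$. This goes by induction on the derivation of the transition in \Cref{fig:semantics}. The cases $\comrule$, $\labrule$ and $\choirule$ only remove prefixes or select branches. The $\unfoldrule$ case relies on the body of $\cX$ being a subterm-closed part of the well-formed program; here we assume, as is implicit in the definition, that well-formedness is imposed on the bodies in $\procC$ as well. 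The tensor and delay rules use the induction hypothesis together with the monotonicity of $\pn$. Substitution $\cvsubst{e}{x}$ does not affect process names.

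Second, the core is a \emph{progress lemma}: for every \chorok $\cC$ and every $\pp\in\pnof{\cC}$ there is a finite sequence $\cC\ctos{\mu_1\cdots\mu_n}\cCpp$ with $\pp\in\pnof{\mu_n}$ and $\pp\notin\pnof{\mu_i}$ for $i<n$. We prove it by well-founded induction on a measure $d(\cC,\pp)$, defined as the depth of the first occurrence of $\pp$ as an active name, counting unfoldings. Guardedness and finiteness of $\procC$ make $d$ finite whenever $\pp\in\pnof{\cC}$: each unfolding of a guarded body exposes a communication or choice prefix, and since $\pp$ appears in some body reachable from $\cC$, this depth is bounded by the number of definitions times the maximal body depth. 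The cases are as follows.
\begin{itemize}
\item If $\cC=\cgencom\seq\cC[0]$ and $\pp\in\set{\pp,\pq}$ for the prefix processes, we take the single step $\comrule$. Otherwise, $\pp\in\pnof{\cC[0]}$, so by the induction hypothesis $\cC[0]$ has a sequence reaching $\pp$. Either that sequence involves neither prefix process, and we lift it step by step through $\delcomrule$, or some earlier step touches them; in that case we instead fire the prefix first with $\comrule$ (which does not mention $\pp$) and then use the induction hypothesis on $\cC[0]\cvsubst ex$.
\item If $\cC$ is a choice, we perform $\choirule$ followed by $\labrule$. This is either the witness for $\pp$, or two steps not involving $\pp$ that lead to a branch of smaller measure.
\item If $\cC=\cC[1]\ctens\cC[2]$, well-formedness ensures that $\pp$ lies in exactly one component, so we use the induction hypothesis there and lift via $\tensrulel$ or $\tensruler$.
\item If $\cC=\cX$, we unfold via $\unfoldrule$: the transitions of $\cX$ are exactly those of its body.
\end{itemize}
In fact we can avoid the delay rules entirely by always executing the outermost prefix, which simplifies the argument. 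Lock-freedom then follows by combining the two invariants: any reachable $\cCp$ is \chorok by preservation, and the progress lemma applies to it.

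The main obstacle is making the measure precise for $\pp\in\pnof{\cC}$ when $\pp$ occurs only behind procedure calls. This includes the subtle case of a choice whose branches are not all able to reach $\pp$, for instance $\cC[\lab]=\cnil$ in some branch. The definition of $\pn$ takes the union over all branches, so the progress claim can fail for the branch actually selected. To handle this, I would either read the quantifier as ranging over $\pnof{\cCp}$ computed after the choice has resolved, so that only the selected branch $\cC[\lab_i]$ counts, or strengthen the statement to $\pp\in\apnof{\cdot}$ up to unfolding. I expect to need to fix this interpretation first, and to verify that the unfolding depth stays bounded uniformly by using minimality and closedness of $\procC$.
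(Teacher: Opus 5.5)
Your proposal has a genuine gap, and it sits exactly in the case you flag at the end. No choice of measure will close it: with the paper's definitions as written, your progress lemma is false, and so is the theorem. Take
\[
\cC=\cchoi p{\set{L,R}}q{L:\ccom perx\seq\cnil,\ R:\cnil}
\;\cto{\tochoi[R]p}\;
\cCp=\cchoi p{\set{R}}q{L:\ccom perx\seq\cnil,\ R:\cnil}.
\]
Here $\cC$ is finite, contains no $\ctens$ (so it is vacuously well-formed), and uses no procedure definitions. The $\apnof{\cdot}$, and hence the $\pnof{\cdot}$, of a choice is a union over all of $\aLset'$, so $\pr\in\pnof{\cCp}$. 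Yet every transition of $\cCp$ is labelled $\tochoi[R]p$ or $\tolab[R]pq$ and leads to $\cCp$ or to $\cnil$; $\delchoirule$ is blocked because the $R$ branch is $\cnil$. So $\pr$ never acts, and $\cC$ is not lock-free.

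In your argument the failing step is the choice case ($\choirule$ then $\labrule$ into a branch of smaller measure). It needs the branch containing the target process to lie in $\aLset$, which fails once the choice has been resolved to $\aLset=\set{\lab_i}$. Your second fix (passing to $\apnof{\cdot}$) does not help, since $\pr\in\apnof{\cCp}$ too. Your first fix is the amendment that is really needed: in the clause for choices, take the union over $\lab\in\aLset$ rather than $\aLset'$. You should also compute $\pnof{\cX}$ from the definitions reachable from $\cX$ rather than from all of $\procC$; otherwise the same problem reappears after a choice between two calls. But this changes the definitions, so what you would prove is a corrected statement, not the one given.

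The paper's proof does not notice this either: it calls the finite case trivial, yet the counterexample is finite. Its recursive case is also fragile. It rests on the claim that guardedness puts every name of $\pnof{\cX[i]}$ into $\apnof{\cC[i]}$, which does not follow from guardedness as defined ($\apnof{\cdot}\neq\emptyset$). For instance, with $\procDef{\cX[1]}{\ccom peqx\seq\cX[2]}$ and $\procDef{\cX[2]}{\ccom qery\seq\cX[1]}$ we have $\pr\in\pnof{\cX[1]}$, and replacing calls by $\cnil$ loses the only action of $\pr$. So your route (fire outermost prefixes, count unfoldings) is the sounder one, and under the amendment it goes through.

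It can also be simplified. Induct on the derivation of $\pp\in\pnof{\cC}$ (a least fixpoint over the reachable definitions) rather than on a depth measure. At a choice, fire the label in $\aLset$ whose branch witnesses membership, and lift steps through $\tensrulel$, $\tensruler$ and $\unfoldrule$. This removes your concern about uniform bounds. Since the choreography semantics never blocks an outermost prefix, the argument uses neither well-formedness, nor guardedness, nor your preservation lemma.
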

\begin{proof}
	We reason by induction on the structure of $\cC$.
	For finite \chors, the claim is trivial (see, e.g., \cite{montesi:book}).
	Otherwise, we can assume $\cC = \cX[i]$ and $\gencProcDef[i]$ and, by guardedness every process name in $\pnof{\cX[i]}$ is also in $\apnof{\cC[i]}$. 
	Hence, $\cC[i]$ eventually perform a transition involving $\pp$ if and only if the finite choreography obtained from $\cC[i]$ by replacing every choreography variable with $\cnil$ can eventually perform a transition involving $\pp$.
\end{proof}

\section{Cyclic proofs in $\PIL$}\label{sec:pilu}

In this section, we introduce the proof theoretical framework used to establish the completeness of our extraction procedure.
We recall the fragment of the system $\PIL$ introduced in \cite{acc:man:mon:ESOP25} an extension of multiplicative additive linear logic with a non-commutative, non-associative, and self-dual connective $\lprec$, and nominal quantifiers to capture action prefixing and name restriction in process calculi.
However, in this paper, we are not interested in the binding operators of process calculi, therefore we take a fragment of $\PIL$ without nominal quantifiers.

In this work we are interested in reasoning about processes with procedure calls, requiring to introduce a way to capture procedure calls in the logic.
One option could be to use fixpoint operators as in $\mu\MALL$, the extension of multiplicative additive linear logic with least and greatest fixpoint operators \cite{baelde:fix,BaeldeM07,doumane:phd,das:de:sau:muMALL}.
See \Cref{rem:mumall}.
However, we want to reason about processes with procedure calls without introducing the full power of fixpoint operators, and keep the syntax and the semantics of the logic as close as possible to the one of processes with procedure calls.
Therefore, we extend the system $\PIL$ with procedural variables to model procedure calls. 
These could be seen either as second-order propositional variables quantified by an existential quantifier, but where the witness is fixed to be the body of the procedure definition, or as fixpoint formulas of the form $\nu X. A[X]$, where $A[X]$ is the formula encoding the body of the procedure definition.
\footnote{
	Notice that this intuitive fixpoint view uses the greatest fixpoint operator ($\nu$), because programs with procedure calls allow for infinite unfolding, and not only unbounded finite unfolding, which would be captured by the least fixpoint operator ($\mu$).
}

\subsection{Formulas}\label{sec:pilu:forms}

\defn{Formulas} are generated from the set of process names $\Pset$, the set of expressions $\Eset$, the set of label $\Lset$, the set of propositional variables%
\footnote{
	The use of the same notation for the set of propositional variables and the set of process variables an explicit decision:
	we want a bijection between the set of propositional variables and the set of process variables, which we subsume by using the same notation.
}
$\PXset$, and
a set of variables $\Vset$ and the \defn{unit} $\lunit$ using the following grammar:
\begin{equation}\label{eq:formulas}
	A,B \coloneqq 
	\lunit 					\mid 
	\lsend{\chan pq}{\xi}	\mid
	\lrecv{\chan pq}{\xi}	\mid
	A\lpar B				\mid
	A\ltens B				\mid
	A\lprec B				\mid
	A\lplus B				\mid
	A\lwith B				\mid
	\lFa xA					\mid
	\lEx xA					\mid
	\fX						
\end{equation}
with $\chan pq$ a pair of distinct process names $\pp,\pq \in\Pset$, $\xi\in\Vset\cup\Eset\cup\Lset$, $\fX \in \PXset$, and $x\in\Vset$.

A formula is an \defn{atom} if it is either an \defn{output} formula $\lsend{\chan pq}{\xi}$, or an \defn{input} formula $\lrecv{\chan pq}{\xi}$.
We call $\chan pq$ a \defn{channel} with \defn{endpoints} $\pp $ and $\pq$.%
\footnote{
	That is, atoms are generated by three binary predicates: the one constructing channels $\chan--$, the send $\lsend--$ and the receive $\lrecv--$.
	Additionally, we require that the channel constructor only takes as arguments pairs of distinct process names, 
	that $\lsend--$ and $\lrecv--$ only take as first argument channels and as second argument variables or expressions.
}
They can be seen as binary predicates, where the first argument is a channel (made of a pair of process names, representing its endpoints) and the second is the message which is sent or received on that channel.
The connectives $\lpar$ and $\ltens$ \resp{$\lplus$ and $\lwith$} are the multiplicative \resp{additive} disjunction and conjunction of linear logic.
The connective $\lseq$ is a non-commutative, non-associative, self-dual multiplicative connective.
The quantifiers $\forall$ and $\exists$ are the standard first-order quantifiers.
Finally, $\fX$ is a \defn{propositional variable}.

In the following, we may write $\bigoplus_{i=1}^n A_i$ and $\bigwith_{i=1}^n A_i$ as shorthand for $A_1\lplus (A_2 \lplus (\cdots \lplus A_n))$ and $A_1\lwith (A_2 \lwith (\cdots \lwith A_n))$, respectively.
In particular, we may write $\bigwith_{i=1}^1 A_i$ to simply denote $A_1$.

We assume the following de Morgan duality for formulas:
\begin{equation}\label{eq:de_morgan}
	\begin{array}{cccc}
		\lunit^\perp = \lunit
	&
		\lsend{\chan pq}{\xi}^\perp = \lrecv{\chan pq}{\xi}
	&
		\lrecv{\chan pq}{\xi}^\perp = \lsend{\chan pq}{\xi}
	&
		\lneg{(\lneg A)} = A
	\\
		(A\lpar B)^\perp = A^\perp \ltens B^\perp
	&
		(A\lprec B)^\perp = A^\perp \lprec B^\perp
	&
		(A\lplus B)^\perp = A^\perp \lwith B^\perp
	&
		(\lFa xA)^\perp = \lEx xA^\perp
	\end{array}
\end{equation}

The notion of free and bound variables is defined in the standard way, with the quantifiers $\forall$ and $\exists$ binding the variable $x$ in their scope, and the propositional variables $\fX$ being always free.
We denote by $\freeof A$ the set of free variables of a formula $A$.
\begin{gather*}
	\freeof{\lunit} = \freeof{\lsend{\chan pq}{\xi}} = \freeof{\lrecv{\chan pq}{\xi}}= \freeof{\fX}=\emptyset
\\
	\freeof{A\lpar B} = \freeof{A\ltens B} = \freeof{A\lprec B} = \freeof{A\lplus B} = \freeof{A\lwith B} = \freeof{A} \cup \freeof{B}
\\
	\freeof{\lFa xA} = \freeof{\lEx xA} = \freeof{A} \setminus \set{x}
\end{gather*}
	
We consider formulas to be defined up to $\alpha$-equivalence generated by the following clauses:
\begin{equation}
	\begin{array}{r@{\alphaeq}l@{\qquad}l}
		a 	 					& 	a
	&
		\text{if } a\in\set{\lunit, \lsend{\chan pq}{\alpha}, \lrecv{\chan pq}{\alpha}}
	\\
		A_1 \circleddot A_2 	& B_1 \circleddot B_2
	&	
		\text{if $A_1 \alphaeq B_1$, $A_2 \alphaeq B_2$, and $\circleddot \in \set{\lpar,\lprec,\ltens,\lplus,\lwith}$}
	\\
		\lQu x A 				& 	\lQu y A \fsubst yx
	&
		\text{
			if $y$ fresh for $A$, and $\lqusymb \in \set{\forall, \exists}$, 
			and $\circleddot \in \set{\lpar,\lprec,\ltens,\lplus,\lwith}$
		}
	\end{array}
\end{equation}

We can encode process terms as formulas in $\PIL$ 
as follows using the encoding $\fof\cdot$ defined by induction as follows:
\begin{equation}\label{eq:fof}
	\adjustbox{max width=\textwidth}{$
	\begin{array}{r@{\;=\;}l@{\quad}r@{\;=\;}l}
		\fof{\pdo{\pnil}} 	& \lunit
	&
		\fof{\pdo{\pX}} 	& \fX
	\\[5pt]
		\fof{\pdo{\psend qe \pseq \pP}} & \lsend{\chan{p}{q}}{e} \lprec \fof{\pdo{\pP}}
	&
		\fof{\pdo[p]{\pgensendl}} 		& \biglbra[\ell\in\Lset]{\left(\lsend{\chan{p}{q}}{\ell} \lprec \fof{\pdo[p]{\pP[\ell]}} \right)}
	\\[5pt]
		\fof{\pdo[q]{\precv pk \pseq \pP}} 	& \lExp{k}{\lrecv{\chan{q}{p}}{k} \lprec \fof{\pdo[q]{\pP}}}
	&
		\fof{\pdo[q]{\pgenrecvl}} 			& \biglsel[\ell\in\Lset]{\left(\lrecv{\chan{q}{p}}{\ell} \lprec \fof{\pdo[q]{\pP[\ell]}}\right)}
 	\end{array}
	$}
\end{equation}

We encode process terms as formulas in $\PIL$ by the inductively defined encoding $\fof{\cdot}$ given in \Cref{eq:fof}. 
Note that, $\fof{\pdo{\pX}} $ is defined as the identity on $\pX\in\PXset$ being at the same time the set of process variables used to define process terms and the set of propositional variables used to define formulas.

The terminated process is encoded as the multiplicative unit $\lunit$. A process variable is encoded as its corresponding propositional variable. 
Output prefixes are encoded by a $\lprec$-formula with the formula representing the output action on the left and the encoding of the continuation on the right.
Similarly, an input prefix is encoded by a $\lprec$-formula with the formula representing the input action on the left and the encoding of the continuation on the right. However, here the encoding uses the existential quantification to capture the binding of the input variable $k$ in the continuation.
Selections and branchings are encoded as additive choices. 
A process making an internal choice is encoded using the additive conjunction (ranging over the labels in $\aLset$) of $\lprec$-formulas with the formula representing the output of the label on the left and the encoding of the continuation on the right.
Similarly, a process making an external choice is encoded using the additive disjunction (ranging over the labels in $\aLset$) of $\lprec$-formulas with the formula representing the input of the label on the left and the encoding of the continuation on the right.

\begin{remark}\label{rem:local_process:formulas}
	By definition of the encoding of atomic networks, $\fof{\pdo\pP}$ only contains atoms of the form $\lsend{\chan{p}{q}}{\xi}$ or $\lrecv{\chan{p}{q}}{\xi}$ for some channel variables of the form $\chan pq$,
	where the $\pq$ may not necessarily be the same for all atoms in $\fof{\pdo\pP}$, but the process name $\pp$ is the same for all atoms in $\fof{\pdo\pP}$.
\end{remark}

\subsection{Sequent calculi}\label{sec:pilu:sequent}
We assume the reader to be familiar with the general notions of sequent calculus and cyclic proofs, which we recall briefly for the sake of completeness.

A \defn{sequent} is a set of formula occurrences.%
\footnote{
	Occurrences are assumed to be uniquely identified, in contrast to the multiset presentation where formulas carry multiplicities. 
	This convention simplifies the tracking of occurrences in a \derivation.
}
The set $\freeof\Gamma$ of free variables of a sequent $\Gamma$ is the union of the free variables of the formulas in $\Gamma$.

\begin{definition}
	We extend the encoding of `processes as formulas' in \Cref{eq:fof} to `networks as sequents' by letting the encoding of a network be the sequent containing the encoding of each process in the atomic network constituting it. That is,
	\begin{equation}\label{eq:encoding_networks}
		\fof{
			\pdo[p_1]{\pP[1]} \ppar \cdots \ppar \pdo[p_n]{\pP[n]}
		}
		= 
		\fof{\pdo[p_1]{\pP[1]}} , \ldots , \fof{\pdo[p_n]{\pP[n]}}
	\qquand
		\fof{\nnil}=\lunit
	\end{equation}
	We may also write $\fofclean{\nN}$ for the sequent obtained from $\fof{\nN}$ by removing all occurrences of the unit $\lunit$ except when $\fof{\nN}=\lunit$.
	
	Note that the encoding of a network $\nN$ is a set of formulas where all atoms have the same process name $\pp$ appearing in their channel variables, and such $\pp$ is the same for all formulas in $\fof\nN$ (see \Cref{rem:local_process:formulas}).
	We call such process name the \defn{location} of a formula in $\fof\nN$.
\end{definition}

\begin{definition}
	A \defn{sequent rule} $\rrule$ with $n\geq 0$ \defn{premises} $\Gamma_1,\ldots,\Gamma_n$ and \defn{conclusion} $\Gamma$ is written $\vliiinf{\rrule}{}{\Gamma}{\Gamma_1}{\cdots}{\Gamma_n}$ .
	Given a set of rules $\XS$, a \defn{derivation} $\dD$ in $\XS$ of a sequent $\Gamma$ is a tree of sequents, with \defn{conclusion} (i.e., root) $\vdash \Gamma$, where the children of each sequent in $\dD$ are the premises of an instance of a rule in $\XS$ with conclusion the sequent labeling the parent node.
	If a node has no children, then either is the conclusion of a rule with no premises (e.g., rules $\axrule$ and $\lunit$ in \Cref{fig:rules}), or it is called an \defn{open premise} of the derivation.
	A \defn{proof} is a derivation with no open premises.
	A \defn{subderivation} of a derivation $\dD$ is a subtree $\dD'$ of  $\dD$.

	Two derivations are \defn{isomorphic} if there is a bijection between their nodes preserving the structure of the tree and the sequents labeling the nodes.
\end{definition}

\begin{notation}
	We write $\vlderivation{\vlpr{\dD}{}{\vdash \Gamma}{}}$ for a proof $\dD$ with conclusion $\vdash \Gamma$, and $\vlderivation{\vlde{\dD}{}{\vdash \Gamma}{\vlhy{\vdash \Gamma'}}}$ for a derivation $\dD$ with conclusion $\vdash \Gamma$ and one open premise $\vdash \Gamma'$, or $\vlsmash{\vliqf{\rrule}{}{\vdash\Gamma}{\vdash \Gamma'}}$ if it is made only of instances of a same rule $\rrule$.
	We may denote by
	$\vlderivation{\vlin{\rrule}{}{\vdash \Gamma}{\multiprem{\vdash \Gamma_i}{i \in I}}}$ a rule with conclusion $\vdash \Gamma$ and a premise $\vdash \Gamma_i$ for each $i\in I$.
\end{notation}

The rules for the sequent calculus $\PIL$ on the formulas defined in \Cref{eq:formulas} are  given in \Cref{fig:rules}.
They include the standard rules for first-order multiplicative-additive linear logic (with a very limited set of atoms), plus a rule for the non-commutative connective $\lprec$.
It is a generalization of the standard rules for the tensor to the case where we have multiple pairs of formulas connected by $\lprec$.
In particular, the case $n=0$ corresponds to the rule for $\mixr$ \cite{retore:mix}.

\begin{figure}[t]
	\centering
	\adjustbox{max width=\textwidth}{$\begin{array}{c}
		\vlinf{\axrule}{}{\vdash {\lsend{\chan pq}{\xi}}, {\lrecv{\chan qp}{\xi}}}{}
	\qquad
		\vlinf{\lpar}{}{\vdash \Gamma, A\lpar B}{\vdash \Gamma, A, B}
	\qquad
		\vliinf{\ltens}{}{\vdash \Gamma, A\ltens B,\Delta}{\vdash \Gamma, A}{\vdash B, \Delta}
	\qquad
		\vliinf{\lprec}{\text{\scriptsize{$n\geq 0$}}}{
			\vdash \Gamma,\Delta, A_1\lprec B_1,\ldots, A_n\lprec B_n
		}{ \vdash \Gamma, A_1, \ldots, A_n}{\vdash \Delta, B_1, \ldots, B_n}
	\\[15pt]
		\vlinf{\lunit}{}{\vdash \lunit}{}
	\quad
		\vlinf{\lplus}{
			\text{\scriptsize{$k\in\intset1n$}}
		}{\vdash\Gamma, \bigoplus_{i=1}^n A_i}{ \vdash \Gamma, A_k}
	\quad
		\vliiinf{\lwith}{}{\vdash \Gamma,\bigwith_{i=1}^n A_i }{\vdash \Gamma,A_1}{\cdots}{\vdash \Gamma, A_n}
	\quad
		\vlinf{\forall}{\text{\scriptsize{$x\notin \freeof{\Gamma}$}}}{\vdash \Gamma, \lFa x{A}}{\vdash \Gamma, A}
	\quad
		\vlinf{\exists}{}{\vdash \Gamma, \lEx x{A}}{\vdash \Gamma, A\fsubst yx}
	\end{array}$}
	\caption{
		Sequent calculus rules for $\PIL$.
	}
	\label{fig:rules}
\end{figure}

We write $\proves[\PIL] \Gamma$ to denote that there is a derivation with conclusion a sequent $\Gamma$ of formulas using the rules in \Cref{fig:rules}.

We do not go into the details of the proof theory of $\PIL$, but the system satisfies both initial and final coherence (that is, the general axiom rule $\AXrule$ is derivable, and the cut rule $\cutr$ is admissible), end the structural congruence of processes is captured by the logical equivalence in the logic.

\subsection{Cyclic Proofs}\label{sec:pilu:cyclic}
To extend the expressiveness of the sequent calculus we consider in this paper, we allow for cyclic proofs, which are a generalization of standard (well-founded) proofs where the underlying proof tree can be infinite, but it is required to satisfy a certain correctness condition.

\begin{definition}
	A \defn{procedure definition} is an equation of the form $\genfProcDef$, where $\fX$ is a propositional variable (\defn{defined} by the procedure definition) and $A_{\fX}$ is a formula, called the \defn{body} of the procedure definition.
	The sequent system $\uPIL$ is defined by all rules in \Cref{fig:rules} plus the following \defn{unfolding} rule, replacing propositional variable $\fX$ in the sequent with a formula $A_{\fX}$ according to a procedural definition from a set $\fprocset$, used as parameter of the rule (and the system).
	\begin{equation}\label{eq:unfolding}
		\vlinf{\unfoldr}{
			\text{\scriptsize$\genfProcDef \in \fprocset$}
		}{\vdash \Gamma, \fX}{\vdash \Gamma, F_{\fX}}
	\end{equation}
	We denote by $\iPIL=\PIL\cup\set\unfoldr$.
\end{definition}

\begin{definition}
	Let $\fprocset$ be a set of procedure definitions, and $\Gamma$ a sequent.
	The set $\pnof\Gamma$ is defined as the set of process names occurring in the channel variables of the formulas in $\Gamma$, together with the process names occurring in the bodies of the procedure definitions in $\fprocset$ that are used to recursively unfold propositional variables occurring in $\Gamma$.
\end{definition}

\begin{remark}\label{rem:mumall}
	As mentioned at the beginning of this section, we could have considered a logical framework obtained by extending the system for $\mu\MALL$ with the non-commutative connective $\lprec$ and the first-order quantifiers ($\forall$ and $\exists$).
	In this setting, the encoding of processes as formulas would be similar to the one we have defined in \Cref{eq:fof}, but would have to use greatest fixpoint formulas.
	That is, the encoding of the atomic network $\pdo\pP$ should be the formula $\fX_1\ldots\fX_n. \fof\pP$, where $\fX_1,\ldots,\fX_n$ are the propositional variables corresponding to the procedure calls in $\pP$.
	
	In the following, the reader familiar with non-wellfounded proofs may find useful to keep in mind the nature of procedural definitions as greatest fixpoints to get an intuition about why the unfolding rules are used to define the progress condition.
	This ensures that the system is non-trivial, i.e., it cannot derive every formula of the language.
\end{remark}

As standard in the literature of non-wellfounded proofs, we define the notion of progress allowing us to identify infinite derivations that are sound, and therefore can be considered as proofs.

\begin{definition}
	An occurrence of a formula in the conclusion \resp{in a premise} of a rule, which does not appear in its premises \resp{not in its conclusion} is called \defn{principal} \resp{\defn{active}}.
	A \defn{\thread} in a derivation is a maximal sequence of formula occurrences $A_0, A_1, A_2, \ldots$ traced along the sequents of the tree, such that:
	\begin{itemize}
		\item if $A_i$ is principal, then $A_{i+1}$ is an active sub-formula from one of its premises.
		For instance, if $A_i=C\with D$, then $A_{i+1}\in\set{C,D}$, while if $A_i=\fX$ is an occurrence of a propositional variable, then $A_{i+1}$ is the formula $A_\fX$ from the procedural definition $\genfProcDef$ used in the unfolding rule that makes $A_i$ principal.
		
		\item if $A_i$ is not principal, then $A_{i+1}$ is the corresponding occurrence carried into a premise. 
		In the case of $\lwith$, the occurrence is duplicated in both premises, and the \thread\ chooses one copy.
	\end{itemize}
	
	A \defn{progress point} for a \thread in a derivation in $\uPIL$ is an occurrence of a propositional variable $\fX$ that occurs principal for the rule $\ldefr$ in the \thread.
\end{definition}

In the definition of \redtree, we identified the counterpart of the notions of fairness and progress from the standard literature on process calculi.
We now define the corresponding notions in the proof-theoretical setting.

\begin{definition}\label{def:fairProofs}
	A \defn{branch} of a derivation $\dD$ is a maximal path of sequents from the conclusion (the root) in the underlying tree of $\dD$.
	It is \defn{\fair} if:
	\begin{enumerate}
		\item\label{fair:1} every non-principal formula in a sequent $\Gamma$ of the branch is principal in a sequent above $\Gamma$ in the same branch;
		\item\label{fair:2} every time a formula of the form $\biglbra[\lab\in\aLset] A_{\lab}$ occurs principal infinitely often in the branch, each formula $A_{\lab}$ is also active infinitely often for every $\lab\in\aLset$.
	\end{enumerate} 

	A \defn{progress point} for a thread in a derivation is an element of the thread that is principal for some specific rules\footnote{
		For the case of $\mu\MALL$, progress points are defined by the $\nu$ rule.
		In the case of $\iPIL$, we progress points are defined by the $\unfoldr$ rule.
		We do define these latter because the proof theoretical framework we use, while based on $\iPIL$, has a different definition of progress points, using a rule we introduce later. See \Cref{def:chorl}.
	}.
	A derivation is:
	\begin{itemize}
		\item \defn{\progressing} if every infinite branch contains a \thread with infinitely many progress points.
		\item \defn{\regular} if it contains only a finite number of distinct non-isomorphic subderivations. It is \defn{\cyclic} if infinite and \regular. 
		\item \defn{\fair} if every \fair branch is \fair.
	\end{itemize} 

	If $\XS$ is a set of derivations, we say that a sequent $\Gamma$ is \defn{provable} in $\XS$, denoted $\proves[\XS] \Gamma$, if there is a derivation in $\XS$ with conclusion $\Gamma$.
\end{definition}

\begin{remark}\label{rem:fairImplyProgressing}
	Every \fair proof derivation is necessarily \progressing.
\end{remark}

\begin{remark}
	In process calculi, the \emph{justness} condition is used to rule out infinite executions where some enabled action is never performed.
	This, in proof-theoretical terms, corresponds Condition~\ref{fair:1} in \Cref{def:fairProofs}, which requires that every non-principal formula in a sequent $\Gamma$ of a branch is principal in a sequent above $\Gamma$ in the same branch.

	However, this condition does not have specific requirement on the behavior of a process when a choice is enabled infinitely often: it only requires that if a choice is enabled infinitely often, then it is eventually chosen.
	A stronger condition, called \emph{fairness}, is often used in process calculi to rule out infinite executions where a choice is always resolved in the same way.
	Condition~\ref{fair:2} in \Cref{def:fairProofs} is the proof-theoretical counterpart of the additional requirement imposed by fairness on the behavior of a process when a choice is enabled infinitely often, in addition to the requirement imposed by justness.
\end{remark}

\begin{notation}
	Regular derivations (also called \defn{cyclic} derivations) are a standard notion in the theory of cyclic proofs.
	They can be represented as finite graphs, where we mark with matching symbols rules whose conclusions are the roots of isomorphic subderivations.
	For example, the following infinite derivation on the left in $\iPIL$ with $\fprocset=\set{\procDef\fX{a \lprec \fX} \;,\; \procDef\fY{\lneg a \lprec \fY}}$, where the right-most branch continue infinitely reproducing the same structure,
	can be represented as show on the right, where we mark with $\star$ the two instances of $\lprec$ whose conclusions are the roots of isomorphic subderivations.
	\[
	\vlderivation{
		\vliin{\lprec}{}{
			\vdash 
			a \lprec \fX , 
			\lneg a \lprec \fY,A
		}{
			\vlin{\axrule}{}{\vdash a, \lneg a}{\vlhy{}}
		}{
			\vliq{\unfoldr}{}{
				\vdash \fX,\fY,A
			}{
				\vliin{\lprec}{}{
					\vdash 
					a \lprec \fX , 
					\lneg a \lprec \fY,A
				}{
					\vlin{\axrule}{}{\vdash a, \lneg a}{\vlhy{}}
				}{
					\vliq{\unfoldr}{}{
						\vdash \fX,\fY,A
					}{
						\vlin{\lprec}{}{
							a \lprec \fX , 
							\lneg a \lprec \fY,A
						}{\vlhy{\quad\vdots\quad}}
					}
				}
			}
		}
	}
	\quad\overset{\mathsf{notation}}=\quad
	\vlderivation{
		\vliin{\lprec}{\star}{
			\vdash 
			a \lprec \fX , 
			\lneg a \lprec \fY, A
		}{
			\vlin{\axrule}{}{\vdash a, \lneg a}{\vlhy{}}
		}{
			\vliq{\unfoldr}{}{
				\vdash \fX,\fY,A
			}{
				\vlin{\lprec}{\star}{
					\vdash 
					a \lprec \fX , 
					\lneg a \lprec \fY,A
				}{\vlhy{}}
			}
		}
	}
	\]

	If we assume the rule $\unfoldr$ to be the rule defining progress points, then the above derivation is \progressing, but it is not \fair because the right-most branch contains a \thread whose formulas (the $A$'s) are never active in any rule.
\end{notation}

\section{Proofs as Execution Trees}\label{sec:proofsAsExecutions}

In this section, we show how we can establish a correspondence between \redtrees of a network and a special class of proofs in $\iPIL$.
To simplify the correspondence, we introduce a set of derivable rules in $\iPIL$, called $\ChorL$, which are sound and complete with respect to set of networks-as-sequents derivable in $\iPIL$.
Then, we prove that the set of cyclic proofs in $\ChorL$ is sound and complete with respect to sequents that are encoding of \LF networks.

\subsection{The system $\ChorL$}\label{sec:proofsAsExecutions:chorl}

The sequent system $\ChorL$ is given by the rules in \Cref{fig:tchorl}, which are synthetic rules (in the sense of \cite{andreoli1992logic,miller:book}) derivable in  $\uPIL$.

\begin{figure}[t]
	\[\begin{array}{c}
		\vlinf{\Ccomr}{}{ 
			\vdash 
			\lsend{\chan pq}{e} \lprec F_\pp 
			\;,\;
			\lEx k\lrecv{\chan qp}{k} \lprec F_\pq 
			\;,\;
			\Gamma
		}{ 
			\vdash 
			F_\pp, 
			F_\pq\fsubst ek,
			\Gamma 
		}
		\qquad	
		\vliinf{\Cmixr}{
			\text{\scriptsize$\pnof{\Gamma_1}\cap\pnof{\Gamma_2}=\emptyset$}
		}{ \vdash \Gamma}{ \vdash \Gamma_1}{ \vdash \Gamma_2}
	\\[12pt]
		\vlinf{\Cwr}{
			\text{\scriptsize$\Gamma\neq\emptyset$}
		}{ 
			\vdash \Gamma, \lunit
		}{ \vdash \Gamma }
	\qquad
		\vlinf{\ldefr}{
			\text{\scriptsize$\genfProcDef \in \fprocset$}
		}{
			\vdash \Gamma, \fX
		}{
			\vdash \Gamma, F_\fX
		}
	\qquad
		\vlinf{\Crec}{
			\text{\scriptsize{\begin{tabular}{l}
				$\procDef{{X_i}}{{F_{\fX_i}}} \in \fprocset$ for all $i\in\intset1n$
				\\
				no subformula in $\Gamma$ is a $\fX\in\PXset$
			\end{tabular}}}
		}{
			\vdash \Gamma, \fX_1, \ldots, \fX_n
		}{
			\vdash \Gamma, F_{\fX_1}, \ldots, F_{\fX_n}
		}
	\\[12pt]
		\vlinf{\Caxr}{}{ 
			\vdash \lunit
		}{}
	\qquad
		\vlderivation{\vlin{\Cchor}{}{
			\vdash 
			\biglbra[{\lab\in\aLset}]
			\lsend{\chan pq}{\lab} \lprec F_\pp^\lab
			\;,\;
			\Gamma
		}{
			\multiprem{
				\vdash  F_\pp^\lab, \Gamma
			}{\lab \in \aLset} 
		}}
	\qquad
		\vlderivation{\vlin{\Clabr}{
			\text{\scriptsize{$\lab_i\in\aLset$}}
		}{
			\vdash 
			\biglsel[\lab\in\aLset]\lrecv{\chan pq}{\lab} \lprec F_\pp^\lab
			\;,\;
			\lsend{\chan qp}{\lab_i} \lprec F_\pq
			\;,\;
			\Gamma
		}{
			\vlhy{\vdash F_\pp^{\lab_i}, F_\pq, \Gamma}
		}}
	\end{array}\]

	\caption{Sequent calculus rules for the system $\ChorL$.}
	\label{fig:tchorl}
\end{figure}

\begin{proposition}\label{prop:chorl-sound}
	Rules in $\ChorL$ are derivable in $\uPIL$, that is, for every rule $\rrule$ in $\ChorL$, with conclusion $\Gamma$ and premises $\vdash\Gamma_1,\ldots,\vdash\Gamma_n$, there is a derivation $\dD$ in $\uPIL$ with conclusion $\vdash \Gamma$ and open premises $\vdash \Gamma_1,\ldots,\vdash \Gamma_n$.
\end{proposition}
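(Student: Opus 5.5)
The plan is to go rule by rule through \Cref{fig:tchorl}. For each rule we build an explicit finite derivation in $\uPIL$ from the rules of \Cref{fig:rules} plus $\unfoldr$, with the conclusion of the synthetic rule at the root and its premises as open leaves. Most cases are immediate. The rule $\ldefr$ is literally $\unfoldr$. The rule $\Crec$ is $n$ consecutive instances of $\unfoldr$, one for each $\fX_i$; its side condition on $\Gamma$ only restricts when the rule may be applied and plays no role in derivability. The rule $\Caxr$ is the $\lunit$ rule. The rule $\Cmixr$ is the instance $n=0$ of the $\lprec$ rule, which is $\mixr$, with $\Gamma_1$ and $\Gamma_2$ as premises; the condition on process names is again irrelevant for soundness.

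\textbf{Weakening and communication.} For $\Cwr$ we use $\mixr$ again, with premises $\vdash\Gamma$ and $\vdash\lunit$; the latter is closed by the $\lunit$ rule. For $\Ccomr$ we first apply the $\exists$ rule with witness $e$, turning $\lEx k{\lrecv{\chan qp}{k}\lprec F_\pq}$ into $\lrecv{\chan qp}{e}\lprec F_\pq\fsubst ek$. Here we use that the substitution does not touch the atom $\lsend{\chan pq}{e}$ and that $k$ can be taken fresh by $\alpha$-equivalence. Then we apply the $\lprec$ rule with $n=2$, taking $A_1=\lsend{\chan pq}{e}$, $A_2=\lrecv{\chan qp}{e}$, $B_1=F_\pp$ and $B_2=F_\pq\fsubst ek$. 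We put the whole context on the right, so $\Gamma$ in the rule is empty and $\Delta$ is our $\Gamma$. The left premise $\vdash\lsend{\chan pq}{e},\lrecv{\chan qp}{e}$ is an instance of $\axrule$, and the right premise is exactly the required open premise.

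\textbf{Choice and label.} For $\Cchor$ we apply the $\lwith$ rule to $\biglbra[\lab\in\aLset]\lsend{\chan pq}{\lab}\lprec F^\lab_\pp$. This yields one premise $\vdash\lsend{\chan pq}{\lab}\lprec F^\lab_\pp,\Gamma$ for each $\lab\in\aLset$. This is the main obstacle. The stated premise is $\vdash F^\lab_\pp,\Gamma$, without the output atom $\lsend{\chan pq}{\lab}$, and no closed $\PIL$ derivation of the single atom exists, so the derivation cannot be completed as literally stated.

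I would first check whether the intended reading is that the corresponding branching prefix $\lrecv{\chan qp}{\lab}\lprec(\cdots)$ lies in $\Gamma$. In that case the rule should be read together with $\Clabr$, with $\Cchor$ recording the choice and $\Clabr$ consuming the two atoms. If so, the soundness statement has to be read as saying that the pair $\Cchor$ followed by $\Clabr$ is derivable. Otherwise I would propose amending $\Cchor$ to keep the selected atom in its premise, namely $\vdash\lsend{\chan pq}{\lab}\lprec F^\lab_\pp,\Gamma$. Then $\Cchor$ is exactly $\lwith$, and $\Clabr$ takes this $\lprec$-prefixed formula as its selection.

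For $\Clabr$ we apply the $\lplus$ rule, selecting the disjunct $\lrecv{\chan pq}{\lab_i}\lprec F^{\lab_i}_\pp$. Then the $\lprec$ rule with $n=2$ pairs it with $\lsend{\chan qp}{\lab_i}\lprec F_\pq$, exactly as in the $\Ccomr$ case. The left premise is the axiom $\vdash\lrecv{\chan pq}{\lab_i},\lsend{\chan qp}{\lab_i}$, and the right premise is $\vdash F^{\lab_i}_\pp,F_\pq,\Gamma$.

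The remaining bookkeeping is to check that each of these derivations is finite and uses only rules of $\uPIL$. Since they are finite, they introduce no new infinite branches, so they can be substituted for the synthetic rules inside any cyclic derivation.
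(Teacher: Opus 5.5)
Your proposal is correct and follows the same rule-by-rule route as the paper: $\lunit$ for $\Caxr$, $\lprec$ with $n=0$ (i.e.\ $\mixr$) for $\Cmixr$ and $\Cwr$, iterated $\unfoldr$ for $\ldefr$ and $\Crec$, and $\exists$ followed by $\lprec$ with $n=2$ and $\axrule$ for $\Ccomr$. Your diagnosis of $\Cchor$ is also right: as printed, its premise drops the output atom, so it is not derivable; the paper's proof silently fuses it with the $\lprec$/$\axrule$ step of the matching branching (your first reading, without mentioning the $\lplus$ step and without treating $\Clabr$ separately), while the paper's actual uses of $\Cchor$ (\Cref{fig:ctTOder,fig:choreoTrans} and the niceness condition) match your amended premise $\vdash\lsend{\chan pq}{\lab}\lprec F^\lab_\pp,\Gamma$.
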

\begin{proof}
	The rule $\Caxr$ is an instance of the $\lunit$-rule.
	The rule $\Cmixr$ is an instance of the $\lprec$-rule with $n=0$ with additional side condition on the process names of the sequents.
	The rule $\Cwr$ is derivable using $\Cmixr$ and $\Caxr$.
	The rule $\ldefr$ is a rule in $\uPIL$, and the rule $\Crec$ can be derived using a some $\ldefr$-rules.
	The rule $\Ccomr$ is derivable by using bottom-up, a $\exists$-rule to substitute the variable $k$ with the expression $e$ in $\fof{\precv{p}{k} \pseq \pQ}=\lEx k \lrecv pk \lprec \fof\pQ$, a $\lprec$-rule with $n=2$ with left premise a sequent of the form $\lsend{\chan pq}{e},\lrecv{\chan qp}{e}$, on which we can apply an $\axrule$-rule.
	The rule $\Cchor$ is derivable by using bottom-up, possibly multiple $\lwith$-rules branching the derivation into $|\aLset|$ branches, each starting with a $\lprec$-rule with $n=2$ with the left premise a sequent of the form $\lsend{\chan pq}{\ell},\lrecv{\chan qp}{\ell}$, on which we can apply an $\axrule$-rule.
\end{proof}

We define the class of proofs in $\ChorL$ which we are interested in, which are the ones that correspond to \redtrees of networks.
For this, the notion of progress point is defined in terms of the rule $\Crec$.
This condition ensures two important properties of the progressing proofs in $\gChorL$ by construction: they are \regular and \fair.
\begin{definition}\label{def:chorl}
	Let $\fprocset$ be a set of procedural definitions.
	A \defn{progress point} of a \thread in a derivation in $\ChorL$ is a formula occurrence that is principal for the rule $\Crec$.
	A proof in $\ChorL$ is \defn{\progressing} if every infinite branch contains a \thread with infinitely many progress points;
	it is \defn{\nice} if it is \progressing 
	and such that 
	the premise of any rule other than $\Cmixr$ cannot be the premise of a $\Cmixr$-rule,
	and that the active formulas of $\Cchor$-rule is always principal for a $\Clabr$-rule.

	We define $\gChorL$ \resp{$\nChorL$} as the set of progressing \resp{\nice} proofs in $\ChorL$.
\end{definition}
\begin{lemma}\label{cor:chorl_cyc}
	Let $\nN$ be a \net.
	If $\proves[\gChorL]\fof{\nN}$, then there is a \regular proof of $\fof{\nN}$ in $\gChorL$.
\end{lemma}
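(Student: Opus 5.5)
We turn an arbitrary progressing proof of $\fof{\nN}$ into a regular one by the standard ``cut the branch at a repeated sequent and loop back'' argument. This needs two ingredients: a finiteness bound on the sequents that can occur, and a check that folding the proof preserves progress.

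\textbf{Step 1: finitely many sequents.} Every sequent in a $\ChorL$ derivation with conclusion $\fof{\nN}$ has the form $\fof{\nM}$ (up to removing units via $\Cwr$, or splitting via $\Cmixr$) for a network $\nM$ built from subterms of the processes of $\nN$ and of the bodies in the finite set $\procP$. We check this rule by rule. $\Ccomr$ and $\Clabr$ consume matching prefixes. $\Cchor$ selects a branch. $\ldefr$ and $\Crec$ replace $\fX$ by its body. $\Cwr$ and $\Cmixr$ only delete units or split the sequent. The processes are tail-recursive and the procedure definitions finite, so each formula in such a sequent is a suffix of a body or of an initial process. The substitution $\fsubst ek$ in $\Ccomr$ only instantiates variables with expressions that already occur syntactically, so only finitely many instances arise; alternatively we can work up to renaming of data. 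Finally, each sequent has at most one formula per location, because locations are preserved and $\Cmixr$ only partitions them. The set $S$ of sequents reachable from $\fof{\nN}$ is therefore finite, in line with \Cref{thm:fintree}.

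\textbf{Step 2: folding.} Take a progressing proof $\dD$ of $\fof{\nN}$. Along each infinite branch, consider the nodes that are conclusions of a $\Crec$-rule. Guardedness and the progress condition give infinitely many such nodes on each infinite branch. We use $\Crec$ nodes as cut points because $\Crec$ unfolds \emph{all} variables of the sequent at once, so every thread passing through such a node progresses there. By Step 1 and the pigeonhole principle, each infinite branch contains two $\Crec$-conclusions with the same sequent $\Gamma$, one strictly above the other, and with at least one $\Crec$ application between the lower and the upper occurrence. Following König-style reasoning, we choose for each branch the first such repetition, measured from the root. We then replace the subderivation rooted at the upper occurrence by a back-edge to the lower one.

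It is cleaner to build the regular proof directly. For each $\Gamma\in S$ that is the conclusion of some $\Crec$ node in $\dD$, fix one subderivation $\dD_\Gamma$ of $\dD$ rooted there. Cut $\dD_\Gamma$ at the first $\Crec$ nodes strictly above its root. Each cut point carries some $\Gamma'\in S$, and we let it continue with $\dD_{\Gamma'}$. Between two consecutive $\Crec$ nodes of $\dD$ the derivation is finite, by the progress condition together with finite branching. So each piece is finite, and there are at most $|S|$ pieces. Unravelling the pieces yields a derivation with finitely many non-isomorphic subderivations, i.e.\ a regular derivation.

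\textbf{Step 3: the result is a progressing $\ChorL$ proof.} Local correctness holds because every rule instance is copied from $\dD$. Every infinite branch of the unravelling passes through infinitely many piece boundaries, i.e.\ $\Crec$ instances. Each such instance makes principal every propositional variable in its conclusion, which gives a thread with infinitely many progress points; the threads glue across boundaries because $\Crec$ unfolds all variables.

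The main obstacle is Step 3's hidden assumption in Step 2: that between two consecutive $\Crec$ nodes of an arbitrary progressing proof only finitely many rules occur on every branch. If $\dD$ could have an infinite $\Crec$-free branch, that branch would have no progress points at all, contradicting progressiveness. Hence every infinite branch meets $\Crec$ infinitely often. By König's lemma applied to the finitely branching tree of rules strictly between a node and its next $\Crec$ nodes, each piece is then finite.

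A secondary worry is the rigorous justification of Step 1's finiteness under the substitutions performed by $\Ccomr$. I would handle it by observing that expressions sent are syntactic constants from a finite set, so only finitely many instances of each formula can arise.
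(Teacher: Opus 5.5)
Your proposal is correct and follows the same route as the paper: the sequents reachable from $\fof{\nN}$ form a finite set (finitely many subformulas of the network and of the bodies in $\fprocset$, closed under the finitely many substitutions $\fsubst ek$), so sequents must repeat along infinite branches. Your version is more careful than the paper's, which just asserts that $\dD$ itself is regular; you instead re-fold $\dD$ at $\Crec$-conclusions into a new regular proof and check that progress survives. One small correction: not every thread through a $\Crec$ node progresses there, since the context formulas do not. The progressing thread on each infinite branch of the glued proof should come from pigeonhole over the finitely many locations, using the fact that each sequent has one formula per location.
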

\begin{proof}
	Let $\dD$ be a derivation in $\gChorL$ with conclusion
    $\fof{\nN}$. There are only finitely many sequents that can occur in
    $\dD$ because $\fof{\nN}$ contains only finitely many subformulas, 
	$\fprocset$
    is finite and closed,
	and, except for $\unfoldr$ and $\Crec$, rules in $\ChorL$ do not introduce new formulas. 
	We conclude observing that, since any infinite branch of $\dD$ must repeat a sequent,
	$\dD$ is a \regular proof.
\end{proof}


\begin{lemma}\label{cor:chorl_ok}
	Let $\nN$ be a \net.
	If $\proves[\gChorL]\fof{\nN}$, then $\proves[\nChorL]\fof{\nN}$.
\end{lemma}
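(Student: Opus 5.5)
We plan to transform a given progressing proof $\dD$ of $\fof{\nN}$ into a \nice one by a sequence of local rule permutations, working on a finite representation. By \Cref{cor:chorl_cyc} we may assume $\dD$ is \regular, so it is a finite graph with back-edges. Every local transformation will be applied uniformly to all isomorphic copies of a subderivation, so regularity is preserved. Progress is preserved as long as each transformation keeps the $\Crec$ instances on every infinite branch and keeps the threads through them.

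The first condition of \niceness requires that $\Cmixr$ rules be pushed towards the root: a premise of a non-$\Cmixr$ rule must not be the premise of a $\Cmixr$. We therefore permute every $\Cmixr$ below any rule $\rrule$ whose conclusion it splits. Suppose a $\Cmixr$ with premises $\Gamma_1,\Gamma_2$ sits above a rule $\rrule$. Because $\pnof{\Gamma_1}\cap\pnof{\Gamma_2}=\emptyset$, and the principal formulas of $\rrule$ have locations in $\pnof{\Gamma_1}$ or in $\pnof{\Gamma_2}$ only, $\rrule$ acts entirely inside one component. This holds because $\Ccomr$ and $\Clabr$ involve two formulas sharing a channel, $\Cchor$ involves a single location, and $\ldefr$/$\Crec$ preserve $\pn$ by definition of $\pnof\Gamma$. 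Hence $\rrule$ can be moved into that component. For $\Cchor$, the other component is duplicated in each branch, which is harmless. For $\Crec$, the side condition ``no propositional variable in $\Gamma$'' may fail after splitting. We handle this by splitting the $\Crec$ into one $\Crec$ per component, using $\ldefr$ where needed, so that each component's threads still meet a $\Crec$. Termination of this rewriting on the cyclic graph follows by a measure on the finite set of nodes: the number of non-$\Cmixr$ rules below each $\Cmixr$ strictly decreases.

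The second condition of \niceness requires that the active formula of each $\Cchor$ be immediately principal for a $\Clabr$. In each branch of a $\Cchor$ with label $\lab$, the formula $\lsend{\chan qp}{\lab}\lprec F^\lab_\pp$ (up to location) can only disappear through the $\Clabr$ that consumes it with the matching $\biglsel$ formula. We take $\dD$ to be fair enough that this occurs on every branch; otherwise it persists and blocks its counterpart forever. We then permute that $\Clabr$ downward until it sits directly above the $\Cchor$. This is possible because the intermediate rules do not touch either of its principal formulas: the $\biglsel$ formula at $\pq$ cannot be consumed otherwise, since the atoms of $\pq$ on channel $\chan qp$ are blocked by the pending selection. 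Rules with side conditions, namely $\Crec$ (no other variable in the context) and $\Cmixr$, are again the delicate cases. For these we rely on the previous step having already placed $\Cmixr$ at the bottom.

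The main obstacle is making these permutations coexist with cycles. Moving a $\Clabr$ down past a $\Crec$ may break a back-edge, since the looping sequents change, and the moved rule may lie in a cycle whose bud lies before it. To deal with this, we first unfold $\dD$ into a finite-depth-bounded tree up to repeated sequents. We then perform the permutations on the infinite tree, which is still finitely branching and locally well-defined. After that, we re-fold the result using \Cref{cor:chorl_cyc}, checking that progress survives because every infinite branch still crosses $\Crec$ instances infinitely often, each carried along unchanged threads.
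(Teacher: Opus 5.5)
Your second step, permuting each $\Clabr$ down onto the $\Cchor$ that created its selection, is the paper's. Your first step, however, runs in the wrong direction and rests on a false claim.

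You argue that a rule $\rrule$ directly below a $\Cmixr$ with premises $\Gamma_1,\Gamma_2$ must act inside one component, because the principal formulas of $\Ccomr$ and $\Clabr$ share a channel. That channel lives in the conclusion of $\rrule$ and is consumed by it. The side condition $\pnof{\Gamma_1}\cap\pnof{\Gamma_2}=\emptyset$ only constrains the residuals.

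Consider $\pdo[p]{\psend qe \seq \pX[\pp]} \ppar \pdo[q]{\precv px \seq \pY[\pq]} \ppar \pdo[r]{\pX[\pr]} \ppar \pdo[s]{\pY[\ps]}$, where $\pX[\pp],\pX[\pr]$ loop on a $\pp$--$\pr$ communication and $\pY[\pq],\pY[\ps]$ loop on a $\pq$--$\ps$ one. A natural progressing proof applies $\Ccomr$ at the root, then a $\Cmixr$ separating $\pX[\pp],\pX[\pr]$ (names $\set{\pp,\pr}$) from $\pY[\pq],\pY[\ps]$ (names $\set{\pq,\ps}$). This $\Cmixr$ can never be moved below the $\Ccomr$. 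There the formula at $\pp$ mentions $\chan pq$ and the one at $\pq$ mentions $\chan qp$, so that split is illegal.

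In general the names of every premise are contained in those of the conclusion. Disjointness therefore propagates upwards, never downwards, so your rewriting simply gets stuck and the termination measure is beside the point. Moreover, this very $\Cmixr$ is what $\DofC{\cdot}$ produces for $\ccom peqx \seq (\cX[1]\ctens\cX[2])$, and what extraction needs to recover it (cf.\ \Cref{rmk:nice}). So ``every $\Cmixr$ at the root'' is not the property niceness is meant to express, and it is not reachable by permutations anyway.

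The paper uses disjointness in the valid direction. When a sequent $\fof{\nN}$ decomposes as $\fof{\nN[1]},\fof{\nN[2]}$ with disjoint names, it \emph{inserts} a fresh $\Cmixr$ there. It then splits the subproof above into the rules acting on $\fof{\nN[1]}$ and those acting on $\fof{\nN[2]}$. This works precisely because no rule above can have principal formulas on both sides. Replacing your first step by this insertion-and-projection argument repairs the proposal.

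Separately, your ``we take $\dD$ to be fair enough'' does not follow from progressiveness. A pending selection $\lsend{\chan pq}{\lab}\lprec F$ and its matching branching formula, if they contain no propositional variable, can sit in the context of every $\Crec$ forever while other processes supply the progress points. On such branches the $\Clabr$ has to be inserted rather than permuted. The paper's sketch takes this for granted too, but you should not present it as a consequence of the hypothesis.
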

\begin{proof}
    Let $\dD$ be a proof in $\gChorL$ of $\fof{\nN}$. We transform $\dD$ into a \nice proof in two steps.
    First, if $\nN = \nN[1] \ppar \nN[2]$ with disjoint supports, 
	then we can insert a $\Cmixr$-rule to the sequent $\fof{\nN}=\fof{\nN[1]}, \fof{\nN[2]}$, and split the subproof above the sequent $\nN$ into the two proofs in the premises of the $\Cmixr$-rule, one with conclusion $\fof{\nN[1]}$ and the other with conclusion $\fof{\nN[2]}$, made of the rules applied to the formulas in $\fof{\nN[1]}$ and $\fof{\nN[2]}$, respectively.
	Then, we permute the $\Clabr$-rules downward until they reach the $\Cchor$-rules that make their active formulas principal, using the permutations of \Cref{fig:equivTCHORL}. The resulting derivation is \nice by construction.
\end{proof}

\def\backedge{{\tiny \rclr{\hookrightarrow }}}
\begin{figure}[t]
	\centering
	\adjustbox{max width=\textwidth}{$
	\begin{array}{c}
		\vlderivation{
			\vlin{\rclr{\rrule[1]}}{}{\vdash \Gamma, \rclr{\Delta}, \Theta }{
				\vlin{\bclr{\rrule[2]}}{}{ 
				\vdash \Gamma, \rclr{\Delta_1}, \bclr{\Theta}}{\vlpr{\dD}{}{\vdash \Gamma, \Delta_1, \bclr{\Theta_1}}}
			}
		}
	\peq
		\vlderivation{
			\vlin{\bclr{\rrule[2]}}{}{\vdash \Gamma, \Delta, \bclr{\Theta}}{
				\vlin{\rclr{\rrule[1]}}{}{\vdash \Gamma,\rclr{\Delta},\bclr{\Theta_1}}{\vlpr{\dD}{}{\vdash \Gamma, \rclr{\Delta_1}, {\Theta_1}}}
			}
		}
	\qquad
		\vlderivation{
			\vlin{\rclr{\Cchor}}{}{\vdash \Gamma, \rclr{\Delta}, \Theta}{
				\multiprem{
					\vlin{\bclr{\Cchor}}{}{
						\vdash \Gamma, \rclr{\Delta_{\lab}}, \bclr{\Theta}
					}{\multiprem{
						\vlpr{\dD_{\lab,\lab'}}{}{
							\vdash \Gamma, \rclr{\Delta_{\lab}}, \bclr{\Theta_{\lab'}}
						}
					}{\lab' \in \aLset'}
					}
				}{\lab \in \aLset}
			}
		}
	\peq
		\vlderivation{
			\vlin{\bclr{\Cchor}}{}{\vdash \Gamma, \Delta, \bclr{\Theta}}{
				\multiprem{
					\vlin{\rclr{\Cchor}}{}{
						\vdash \Gamma, \rclr{\Delta}, \bclr{\Theta_{\lab'}}
					}{\multiprem{
						\vlpr{\dD_{\lab_1,\lab'}}{}{
							\vdash \Gamma, \rclr{\Delta_{\lab_1}}, \bclr{\Theta_{\lab'}}
						}
					}{\lab\in \aLset}
					}
				}{\lab'\in \aLset'}
			}
		}
	\\[12pt]
		\vlderivation{
			\vlin{\rclr{\rrule[1]}}{}{\vdash \Gamma, \rclr{\Delta}, \Theta}{
				\vlin{\bclr{\Cchor}}{}{ 
				\vdash \Gamma, \rclr{\Delta_1}, \bclr{\Theta}
				}{
					\multiprem{\vlpr{\dD_{\lab}}{}{
						\vdash \Gamma, \Delta_1, \bclr{\Theta_{\lab}}
					}}{\lab \in \Lset}
				}
			}
		}
	\peq
		\vlderivation{
			\vlin{\bclr{\Cchor}}{}{
				\vdash \Gamma, \Delta, \bclr{\Theta}
				}
			{
				\multiprem{\vlin{\rclr{\rrule[1]}}{}{
					\vdash \Gamma, \rclr{\Delta}, \bclr{\Theta_{\lab}}
				}{\vlpr{\dD_{\lab}}{}{\vdash \Gamma, \rclr{\Delta_1}, {\Theta_{\lab}}}}}{\lab \in \Lset}
			}
		}
	\qquad
		\vlderivation{\vlin{\Crec}{}{
			\vdash \Gamma, \fX_1, \ldots, \fX_n
		}{
			\vlpr{\dD}{}{\vdash \Gamma, F_{\fX_1}, \ldots, F_{\fX_n}}
		}}
	\peq
		\vlderivation{
			\vlin{\ldefr}{}{
				\vdash \Gamma, \fX_1, \ldots, \fX_n
				}{
					\vldd{}{}{\vdash \Gamma, F_{\fX_1}, \ldots, \fX_n}{
						\vlin{\ldefr}{}{
							\Gamma, F_{\fX_1}, \ldots, F_{\fX_{n-1}},  \fX_n
						}{
							\vlpr{\dD}{}{\vdash \Gamma, F_{\fX_1}, \ldots, F_{\fX_n}}
						}
						}
				}
		}
	\\[12pt]
		\vlderivation{
			\vliin{\Cmixr}{}{\vdash \Gamma_1, \Gamma_2, \Delta}{
				\vlin{\bclr{\rrule[3]}}{}{  
				\vdash \Gamma_1, \bclr{\Delta}
				}{\vlpr{\dD_{1}}{}{\vdash \Gamma_1, \bclr{\Delta_1}}}
			}{
				{\vlpr{\dD_2}{}{\vdash \Gamma_2}}
			}
		}
	\peq
		\vlderivation{
			\vlin{\bclr{\rrule[3]}}{}{\vdash \Gamma_1, \Gamma_2, \bclr{\Delta}}{
				\vliin{\Cmixr}{}{\vdash \Gamma_1, \Gamma_2, \bclr{\Delta_1}}{\vlpr{\dD_1}{}{\vdash \Gamma_1, \Delta_1}}{\vlpr{\dD_2}{}{\vdash \Gamma_2}}
			}
		}
	\qquad
		\vlderivation{
			\vlin{\rclr{\Cchor}}{}{\vdash \Gamma_1, \Gamma_2, \rclr{\Delta}}{
				\multiprem{\vliin{\Cmixr}{}{\vdash \Gamma_1, \Gamma_2, \rclr{\Delta_\lab}}{\vlpr{\dD_1}{}{\vdash \Gamma_1, \Delta_{\lab}}}{\vlpr{\dD_2}{}{\vdash \Gamma_2}}}{\lab \in \Lset}
			}
		}
	\peq
		\vlderivation{
			\vliin{\Cmixr}{}{\vdash \Gamma_1, \Gamma_2, \Delta}{
				\vlin{\rclr{\Cchor}}{}{  
				\vdash \Gamma_1, \rclr{\Delta}
				}{
					\multiprem{\vlpr{\dD_{\lab}}{}{\vdash \Gamma_1, \rclr{\Delta_{\lab}}}}{\lab \in \Lset}
				}
			}{
				{\vlpr{\dD_2}{}{\vdash \Gamma_2}}
			}
		}
	\end{array}
	$}

	\caption{
		Rule permutations defining the proof equivalence in $\ChorL$,
		with $\rrule[1], \rrule[2] \in \set{\Ccomr,\ldefr, \Crec}$
		and $\rrule[3]\in \set{\Ccomr,\ldefr}$.
	}
	\label{fig:equivTCHORL}
\end{figure}

\subsection{Proofs as Execution Trees, and Lock-freedom as provability}\label{sec:proofsAsExecutions:lockfreedom}

Each rule in $\ChorL$ is designed to faithfully correspond to a single transition in the network semantics.
In particular, we here show that the results in \cite{acc:man:mon:ESOP25} can be extended to networks with infinite behavior.


\begin{figure}[t]
	\adjustbox{max width=\textwidth}{$\begin{array}{c}
		\fof{\vmod1{\nnil}}
	\quad=\quad
		\vlderivation{
			\vlin{\Caxr[\cnil]}{}{\vdash \lunit}{\vlhy{}}	
		}
	\qquad\qquad
		\fof{\begin{array}{c}
			\vmod2{ \pdo{\pPp} \ppar \pdo[q]{\pQp\psubst ex}  \ppar \nM}
			\\[30pt]
			\vmod1{ \pdo{\pP} \ppar \pdo[q]{\pQ}  \ppar \nM}
		\end{array}}
		\Dledges{mod1/mod2/{\tocom peqx}}
	\quad=\quad	
	\vlderivation{
		\vlid{\set{\unfoldr,\Cmixr}}{}{
			\vdash \fof{\pdo{\pP} \ppar \pdo[q]{\pQ}  \ppar \nM}
		}{
			\vlin{\Ccomr}{}{
				\vdash \fof{\pdo{\psend qe \pseq \pPp} \ppar \pdo[q]{\precv qx \pseq \pQp }\ppar \nM}
			}{
				\vlid{\Cwr}{}{
					\vdash \fof{\pdo{\pPp} \ppar \pdo[q]{\pQp\psubst ex} \ppar \nM}
				}{\vlhy{
					\vdash \fofclean{\pdo{\pPp} \ppar \pdo[q]{\pQp\psubst ex} \ppar \nM}
				}}
			}
		}
	}
	\\\\
		\fof{\begin{array}{c}
			\vmod2{
				\pdo{\pP[\lab_i]} \ppar \pdo[q]{\pQ[\lab_i]}\ppar \nM
			}
			\\[30pt]
			\vmod1{
				\pdo{\pP} \ppar \pdo[q]{\pQ}\ppar \nM
			}
		\end{array}}
		\Dledges{mod1/mod2/{\tolab pq}}
	\quad=\quad		
	\vlderivation{
		\vlid{\set{\unfoldr,\Cmixr}}{}{
			\vdash \fof{\pdo{\pP} \ppar \pdo[q]{\pQ}  \ppar \nM}
		}{
			\vlin{\Clabr}{}{
				\vdash \fof{\pdo{\pgenselectl} \ppar \pdo[q]{\pgenrecvl} \ppar  \nM}
			}{
				\vlid{\Cwr}{}{
					\vdash \fof{\pdo{\pP[\lab_i]} \ppar \pdo{\pQ[\lab_i]} \ppar  \nM}
				}{\vlhy{
					\vdash \fofclean{\pdo{\pP[\lab_i]} \ppar \pdo{\pQ[\lab_i]} \ppar  \nM}
				}}
			}
		}
	}
	\\\\
		\fof{\begin{array}{c}
			\vmod2{\pdo{\pgenselectl} \ppar \nM}
			\qquad
			\cdots
			\qquad
			\vmod3{\pdo{\pgenselectl} \ppar \nM}
			\\[30pt]
			\vmod1{\pdo{\pP} \ppar \nM}
		\end{array}}
		\Dledges{mod1/mod2/{\tochoi p},mod1/mod3/{\tochoi p}}
	\quad=\quad
		\vlderivation{
			\vlid{\set{\unfoldr,\Cmixr}}{}{
				\vdash \fof{\pdo{\pP} \ppar \nM}
			}{
				\vlin{\Cchor}{}{
					\vdash \fof{\pdo{\pgensendl} \ppar\nM}
				}{
					\multiprem{
						\vlid{\Cwr}{}{
							\vdash \fof{\pdo{\pgenselectl} \ppar \nM}
						}{\vlhy{
							\vdash \fofclean{\pdo{\pgenselectl} \ppar \nM}
						}}
					}{\lab_i \in L}
				}
			}
		}
	\end{array}
	$}
	\caption{
		Translation of execution trees to derivations, where dotted lines represents possible applications of $\ldefr$.
		Dotted lines in the derivations represent possible applications the rule noted on the left of the line.
	}
	\label{fig:ctTOder}
\end{figure}

\begin{lemma}\label{lem:proofsAsExecutions}
	Let $\nN$ be a \net.
	There is a one-to-one correspondence between \redtrees of $\nN$ and proofs of $\fof{\nN}$ in $\ChorL$.
\end{lemma}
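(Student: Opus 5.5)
We construct two maps: one from reduction trees of $\nN$ to proofs of $\fof{\nN}$ in $\ChorL$, and one in the reverse direction. We then check that they are mutually inverse, up to the bookkeeping rules $\ldefr$, $\Cwr$ and $\Cmixr$, which carry no transition. Both maps are defined corecursively, node by node, so infinite trees and infinite derivations are handled uniformly. The finite-branching structure guarantees that each step of the corecursion is well defined.

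\emph{From trees to proofs.} We use the translation of \Cref{fig:ctTOder}, reading it as a corecursive definition. Let $\nM$ be a node of $\predtree\nN$, and proceed by cases on the reduction leaving it.
\begin{itemize}
\item If $\nM$ has no children, maximality gives $\nM\nrightarrow$. We must argue that $\nM=\nnil$, up to units removed by $\Cwr$, so that $\Caxr$ closes the branch. Otherwise the node is translated as an open premise, and the correspondence is then with derivations whose open premises are stuck networks.
\item If the unique child is reached by $\tocom peqx$, we unfold with $\ldefr$ the variables heading the formulas located at $\pp$ and $\pq$, possibly separate the context with $\Cmixr$, apply $\Ccomr$, and erase units with $\Cwr$.
\item If the unique child is reached by $\tolab pq$, we proceed in the same way with $\Clabr$.
\item If the node has several children, by \Cref{def:redtree}.\ref{def:redtree3} they are exactly the $\tochoi[\lab]{p}$-successors, and we apply one $\Cchor$ with one premise per $\lab\in\aLset$.
\end{itemize}
The key check is that the premise sequent of each rule is exactly $\fof{\nMp}$ or $\fofclean{\nMp}$ for the corresponding child $\nMp$. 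This follows from the encoding \eqref{eq:fof}; for $\Ccomr$ we additionally need to verify that $\fof{\pQp}\fsubst ek=\fof{\pQp\psubst ek}$, which is a routine induction on $\pQp$.

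\emph{From proofs to trees.} Conversely, given a proof of $\fof\nN$, we observe that every sequent in it encodes a network. This is shown by induction along branches: each rule of $\ChorL$ maps the encoding of a network to the encoding of a network. The point is that the principal formulas of $\Ccomr$ and $\Clabr$ have dual atoms on the channels $\chan pq$ and $\chan qp$, and this forces two distinct locations $\pp$ and $\pq$ (\Cref{rem:local_process:formulas}). Next, we collapse each maximal block of $\ldefr$, $\Crec$, $\Cwr$ and $\Cmixr$ rules into a single node, and read each $\Ccomr$, $\Clabr$ or $\Cchor$ as the corresponding edge, using the operational rules of \Cref{fig:psemantics} together with $\pprocrule$ for the unfolded variables. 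The reading of $\Cchor$ matches \Cref{def:redtree}.\ref{def:redtree3}, since $\Cchor$ has exactly one premise per label in $\aLset$. Maximality of the resulting tree follows because a proof has no open premises: every leaf is a $\Caxr$, hence $\nnil$, and $\nnil\nrightarrow$.

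The main obstacle is that the correspondence is not literally a bijection on raw derivations. The placement of $\ldefr$, $\Cwr$ and $\Cmixr$ is not determined by the tree; $\Cmixr$ in particular can split the context in several ways. We therefore intend to read ``proofs'' modulo a canonical placement of these rules. Concretely, we put them in the normal form of \Cref{fig:ctTOder}: unfold only the variables needed for the next active rule, put $\Cwr$ immediately after each transition rule, and use no $\Cmixr$ except where it is forced. Alternatively, we can quotient by the permutations of \Cref{fig:equivTCHORL}. With this convention, the two maps are inverse by coinduction on the tree and derivation structure, since each map is defined locally and the two local definitions are inverse to each other.

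The second delicate point concerns trees with stuck non-empty leaves, which have no counterpart among proofs. We will either restrict the statement to trees whose leaves are $\nnil$, which is automatic for the fairly lock-free trees used afterwards, or allow open premises on the proof side.
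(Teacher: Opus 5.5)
Your forward direction is the paper's proof. The paper's whole argument is to read \Cref{fig:ctTOder} as a translation of reduction trees into $\ChorL$ derivations; it gives no converse and addresses neither of your caveats. Both caveats are correct. The optional $\ldefr$, $\Cmixr$ and $\Cwr$ steps in \Cref{fig:ctTOder} send one tree to many derivations. A deadlocked network such as $\pdo[p]{\psend qe\pseq\pnil}\ppar\pdo[q]{\psend pe\pseq\pnil}$ has a one-node reduction tree but no proof in $\ChorL$. So the statement does need reformulating, but several of your repairs fail as written.

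\emph{The quotient option.} Drop the alternative of quotienting by \Cref{fig:equivTCHORL}. It identifies, for example, two $\Ccomr$ steps on disjoint pairs of processes taken in either order. These two orders come from two distinct reduction trees, since their intermediate nodes differ, so the quotient destroys injectivity.

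\emph{The rule $\Cmixr$.} It cannot be ``collapsed into a single node''. It is binary, and its premises are independent subproofs, not alternative selections. Reading it as a reduction tree therefore requires choosing an interleaving of the two subproofs (a fair one if both are infinite), and that choice is not canonical. Dually, ``possibly separate the context with $\Cmixr$'' in your forward map would require projecting the tree onto a group of processes, which the tree does not determine. $\Cmixr$ is never forced, because every transition rule of $\ChorL$ carries an arbitrary context and $\Cwr$ disposes of units. Hence your normal form must be $\Cmixr$-free, with a fixed convention for $\ldefr$ versus $\Crec$. Proofs containing a genuine $\Cmixr$ then fall outside the correspondence.

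\emph{A missing case.} Your case analysis omits a node whose unique child is reached by $\tochoi[\lab]{p}$. Item 2 of \Cref{def:redtree} permits this even when $|\aLset|\geq 2$. Such a tree has no counterpart, because $\Cchor$ requires a premise for every label. You must therefore read the definition as forcing all selections at every choice step.

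\emph{The $\Cchor$ premises.} Your check that each premise encodes the corresponding child holds for $\Cchor$ only under one reading. Its premises must be $\lsend{\chan pq}{\lab}\lprec F_\pp^\lab,\Gamma$, as in \Cref{fig:ctTOder}, not $F_\pp^\lab,\Gamma$ as printed in \Cref{fig:tchorl}.
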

\begin{proof}
    We extend the translation of networks into sequents $\fof{\cdot}$ to \redtrees as show in \Cref{fig:ctTOder} allowing us to translate a \redtree $\predtree\nN$ into a derivation in $\ChorL$.
\end{proof}
\begin{corollary}
	The translation $\fof\cdot$ maps \regular \resp{\ok} \redtrees into \regular \resp{\nice} proofs in $\ChorL$.
\end{corollary}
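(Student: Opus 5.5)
We establish the two claims of the statement separately, working directly with the translation $\fof\cdot$ of \Cref{fig:ctTOder} that underlies \Cref{lem:proofsAsExecutions}. The key fact is that this translation is \emph{local}: every node $\nM$ of $\predtree\nN$ with its outgoing edges yields a finite block of rules in $\ChorL$ with conclusion $\fof{\nM}$. This block consists of $\ldefr$/$\Cmixr$ steps, one principal rule ($\Ccomr$, $\Clabr$ or $\Cchor$, matching the label), and $\Cwr$ steps. The premises of the block are exactly $\fofclean{\nM'}$ for the children $\nM'$ of $\nM$. Consequently, the derivation $\fof{\predtree\nN}$ is obtained by gluing these blocks along the tree. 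Moreover, the subderivation rooted at $\fof{\nM}$ depends only on the subtree of $\predtree\nN$ rooted at $\nM$.

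\textbf{Regularity.} Suppose $\predtree\nN$ is \regular, that is, it has a cyclic representation, or equivalently only finitely many non-isomorphic subtrees (\Cref{thm:fintree}). Since the block for $\nM$ is a function of $\nM$ alone, isomorphic subtrees are sent to isomorphic subderivations. Each block contributes only finitely many intermediate sequents. Hence the set of non-isomorphic subderivations of $\fof{\predtree\nN}$ is bounded by the number of subtree classes times the maximal block size, which is finite. The back-edges of the cyclic representation become the $\star$-marked roots.

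\textbf{Fair lock-freedom implies niceness.} We must check three things.

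\emph{Progressing.} Take an infinite branch of the derivation. It corresponds to an infinite execution of $\predtree\nN$. Because networks are tail-recursive and procedure definitions are guarded, some process $\pp$ acts infinitely often along the execution. By the fairness condition of \Cref{def:redtree}, $\pp$ cannot act forever without unfolding a call: finitely many prefixes precede every call. Following the formula located at $\pp$ therefore gives a \thread that is principal for infinitely many unfoldings. We then choose the blocks so that these unfoldings are grouped into $\Crec$ instances, which are the progress points of \Cref{def:chorl}. The natural choice is to apply $\Crec$ exactly when all remaining non-unit formulas are variables, and to use $\ldefr$ otherwise.

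\emph{The $\Cmixr$ condition.} Here we fix the translation so that $\Cmixr$ is applied only at the bottom of a block, splitting off independent components. This is what \Cref{cor:chorl_ok} does, and it means no non-$\Cmixr$ premise is also a $\Cmixr$ premise.

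\emph{The $\Cchor$/$\Clabr$ condition.} Consider a $\Cchor$ step at $\pp$ with selected label $\lab$. Its active formula $\lsend{\chan pq}{\lab}\lprec F$ must eventually be consumed, since the tree is \ok. The only rule that can consume it is $\Clabr$, so it is principal for a $\Clabr$-rule.

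The main obstacle is the progressing argument. It has to reconcile the granularity of $\Crec$, which requires every remaining variable to be unfolded simultaneously, with executions in which only some processes reach a call. To handle this, we first apply $\Cmixr$ to isolate independent components. Within a connected component, lock-freedom ensures that every process eventually reaches its call. The permutations of \Cref{fig:equivTCHORL} then let us delay the single $\ldefr$ steps and merge them into one $\Crec$.
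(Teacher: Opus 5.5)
The gap is in the progressing argument, at the step you yourself flag as the main obstacle. You claim that, inside a connected component, the single $\ldefr$ steps can be postponed with the permutations of \Cref{fig:equivTCHORL} and merged into one $\Crec$. This is false. Lock-freedom says that each process eventually reaches a call, not that all of them sit at a call at the same time. Moreover, an $\ldefr$ can never be permuted above the rule whose principal formula is the body it produced. Take $\pdo[p]{\psend qe \pseq \pX} \ppar \pdo[q]{\pY}$ with $\procDef{\pX}{\psend qe \pseq \psend qe \pseq \pX}$ and $\procDef{\pY}{\precv px \pseq \precv px \pseq \pY}$. It is guarded, tail-recursive, cannot be split by $\Cmixr$, and is lock-free. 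Its only execution alternates between this network and $\pdo[p]{\pX} \ppar \pdo[q]{\precv px \pseq \pY}$. In neither state are both processes at a call, and in each one the unfolding you would postpone is precisely the one the next $\Ccomr$ needs. Under your policy ($\Crec$ only when all non-unit formulas are variables, $\ldefr$ otherwise) the proof therefore contains no $\Crec$ at all, hence no progress point in the sense of \Cref{def:chorl}. The paper's one-line proof (fairly lock-free, hence fair, hence progressing by \Cref{rem:fairImplyProgressing}) passes over the fact that progress points of $\ChorL$ are $\Crec$ instances. You were right to spot this, but your fix does not work.

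What makes the argument work is eagerness, not simultaneity. Read the side condition of $\Crec$ as forbidding only top-level variables in $\Gamma$. With the literal wording, ``no subformula of $\Gamma$ is a variable'', no $\Crec$ ever applies in the example, and the statement fails. Then perform every unfolding as one $\Crec$ that unfolds all top-level variables of the current sequent. Every one of the infinitely many unfoldings on the thread of your process $\pp$ is then a progress point. Block premises may now contain calls already replaced by their bodies. But a call and its body have the same transitions (rule $\pprocrule$), so the rest of the tree is followed unchanged and only finitely many sequents arise; your regularity argument survives.

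Two smaller points. First, if $\Cmixr$ is allowed at the bottom of every block, it lands above the principal rule of the preceding block. The $\Cmixr$ clause of niceness is meant to exclude exactly this, as it is used in \Cref{cor:chorl_ok} and \Cref{rmk:nice}, where $\Cmixr$ is only inserted at the root. Keep it only in the first block; this is harmless since all other rules carry a context. Second, for the $\Cchor$/$\Clabr$ clause, eventual consumption of the selected output is weaker than the adjacency that \Cref{cor:chorl_ok} and the extraction in \Cref{fig:DerToChor} rely on. A permutation step is still needed. When $\pq$ must first interact with a third process, it is $\Cchor$ that has to move up, because $\Clabr$ cannot move below that interaction.
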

\begin{proof}
	For \ok \redtrees, every non-nil state eventually advances, so the translation produces a \fair proof (thus progressing ones).
\end{proof}

We can now establish the central result of this section: the system $\nChorL$ is a proof system deciding lock-freedom for networks.

\begin{theorem}\label{thm:LFinTCHORL}
	A \net $\nN$ is \LF if and only if $\proves[\gChorL] \fof \nN$.
\end{theorem}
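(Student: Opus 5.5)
The argument is a chain of equivalences through fairly lock-free reduction trees. By \Cref{thm:LF}, $\nN$ is lock-free iff it admits a \ok \redtree. It then suffices to prove two things: that a \ok \redtree of $\nN$ yields a progressing proof of $\fof{\nN}$ in $\ChorL$, and that every proof in $\gChorL$ with conclusion $\fof{\nN}$ comes from such a tree. In both directions I would use the translation of \Cref{fig:ctTOder} and the correspondence of \Cref{lem:proofsAsExecutions}.

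\emph{($\Rightarrow$)} Let $\nN$ be lock-free and fix a \ok \redtree $\predtree\nN$ via \Cref{thm:LF}. By \Cref{thm:fintree} we may take it regular. Translating it with \Cref{fig:ctTOder} gives a derivation $\fof{\predtree\nN}$ in $\ChorL$ with no open premises: every leaf is a state $\nM\nrightarrow$, and lock-freedom forces $\nM=\nnil$, which is closed by $\Caxr$. By the corollary following \Cref{lem:proofsAsExecutions}, this derivation is \nice, hence progressing (cf. \Cref{rem:fairImplyProgressing}). To make progress visible, I would check that on each infinite branch some $\Crec$ fires infinitely often. Because procedure definitions are guarded and the reachable state space is finite, an infinite execution must unfold procedure calls infinitely often. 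Since the tree is \ok, every process that is not $\pnil$ eventually moves, so at some point all the remaining formulas are simultaneously propositional variables. At those points the translation can batch the unfoldings into a single $\Crec$, and some thread runs through infinitely many of these. This gives $\proves[\gChorL]\fof\nN$.

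\emph{($\Leftarrow$)} Let $\dD$ be a proof of $\fof\nN$ in $\gChorL$. By \Cref{cor:chorl_cyc} and \Cref{cor:chorl_ok} we may assume $\dD$ is regular and \nice. Reading \Cref{fig:ctTOder} backwards (\Cref{lem:proofsAsExecutions}) gives a \redtree. Each $\Ccomr$, $\Clabr$, or $\Cchor$ step becomes the corresponding transition, while $\ldefr$, $\Crec$, $\Cwr$, and $\Cmixr$ are silent. \Nice{}ness ensures that every $\Cchor$ is immediately followed by the matching $\Clabr$, so each choice corresponds to an actual branching node. Since every leaf of $\dD$ is $\Caxr$, finite executions end in $\nnil$. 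The remaining step, and I expect it to be the main obstacle, is showing that the tree is \ok, i.e., that no process $\pp$ with a non-$\pnil$ term starves along an infinite branch. My plan is to use the side condition of $\Crec$: it may only be applied when no other subformula in the sequent is a propositional variable, which means all the non-unit formulas in the context are unfolded together. A progressing thread therefore passes through infinitely many points where every live process has reached a procedure call. Guardedness then forces each such process to perform an active prefix before it reaches the next such point. This yields a transition involving $\pp$ between consecutive progress points, which is exactly the fairness witness of \Cref{def:redtree}. The delicate part is $\Cmixr$: after a split, the $\Crec$ instances in different components occur on different branches. I would handle this by arguing separately on each premise of $\Cmixr$, whose process names are disjoint, so that fairness in each component gives fairness of their interleaving. \Cref{thm:LF} then gives that $\nN$ is lock-free.
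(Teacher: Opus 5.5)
Your route is the paper's: both directions go through \Cref{thm:LF} and the translation of \Cref{fig:ctTOder}. However, the step that carries ($\Leftarrow$) fails, and the obstacle is not the $\Cmixr$ issue you flag. From the side condition of $\Crec$ you infer that all non-unit formulas are unfolded together, so that at every progress point each live process sits at a procedure call. The side condition only forbids propositional variables inside $\Gamma$. A variable-free formula, i.e.\ a live but non-recursive process, may remain in $\Gamma$ through every $\Crec$. Moreover, progressiveness asks for only \emph{one} thread with infinitely many progress points.

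Concretely, take $\procDef{\pX}{\psend qe \pseq \pX}$, $\procDef{\pY}{\precv px \pseq \pY}$ and $\nN=\pdo[p]{\pX} \ppar \pdo[q]{\pY} \ppar \pdo[r]{\precv px \pseq \pnil}$. The cyclic derivation consisting of a $\Crec$ on $\pX,\pY$ followed by a $\Ccomr$, looping back to $\fof{\nN}$, has no open premises, and the thread of $\pX$ progresses infinitely often. It is therefore in $\gChorL$, indeed in $\nChorL$. Yet $\pr$ is blocked forever, so $\nN$ is not \LF. Hence ($\Leftarrow$) does not follow from progressiveness at all. The paper's one-line justification, that ``the progress condition ensures that every such transition is eventually taken'', has exactly this hole. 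Your guardedness argument works under a fairness hypothesis such as Condition~\ref{fair:1} of \Cref{def:fairProofs}. That condition makes every live formula eventually principal, and guardedness then turns an unfolding into a subsequent action.

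In ($\Rightarrow$) you correctly notice that \Cref{fig:ctTOder} only produces $\ldefr$, so $\Crec$'s must be inserted. But your claim that fair lock-freedom yields a moment where all remaining formulas are simultaneously variables is false. Take $\procDef{\pX}{\psend qe \pseq \psend qe \pseq \pX}$ and $\procDef{\pY}{\precv px \pseq \precv px \pseq \pY}$. The \LF network $\pdo[p]{\pX} \ppar \pdo[q]{\precv px \pseq \pY}$ only ever has a pending call in the states $\pdo[p]{\pX} \ppar \pdo[q]{\precv px \pseq \pY}$ and $\pdo[p]{\psend qe \pseq \pX} \ppar \pdo[q]{\pY}$. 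Under the reading of the side condition you use in ($\Leftarrow$), the other formula always contains a variable, so $\Crec$ never applies and this network has no progressing proof.

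Under the weaker reading (only formulas that \emph{are} variables must be unfolded together), no alignment is needed. Put a $\Crec$ wherever the translation unfolds. On an infinite branch some process must unfold infinitely often, because bodies are finite and guarded, and its thread supplies the progress points. That reading, however, makes ($\Leftarrow$) worse: even recursive processes can starve, e.g.\ replace $\pdo[r]{\precv px \pseq \pnil}$ above by $\pdo[r]{\pZ}$ with $\procDef{\pZ}{\precv px \pseq \pZ}$. Either way, you have to fix one reading of $\Crec$ and strengthen the hypothesis of ($\Leftarrow$) from progressiveness to fairness.
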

\begin{proof}
	By \Cref{thm:LF}, if $\nN$ is \LF, then there is a \ok \redtree of $\nN$.
	By \Cref{lem:proofsAsExecutions}, each \redtree of $\nN$ can be translated into a proof in $\gChorL$ of $\fof{\nN}$. 
	Therefore, if $\nN$ is \LF, then $\proves[\gChorL] \fof{\nN}$.

	The implication from provability in $\gChorL$ to \LF relies crucially on
	the fact that our networks are race-free: each send action has a unique
	matching receive, and each $\Cchor$-rule has a unique set of matching
	$\Clabr$-rules. Hence, every rule in the proof corresponds unambiguously
	to a single network transition, and the progress condition of $\gChorL$
	ensures that every such transition is eventually taken.
\end{proof}

As a corollary, we obtain a result that serves as basis for the definition of the extraction procedure in \Cref{sec:proofsAsChoreos}, proving that if a network $\nN$ is \LF, then there is a proof of $\fof\nN$ in $\nChorL$.

\begin{corollary}\label{cor:lf_iif_proof}
	Let $\nN$ be a \net.
	If $\nN$ is \LF, then 
	$\fof \nN$ admits a proof in $\nChorL$.
\end{corollary}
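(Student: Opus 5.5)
The plan is to chain the results just established, so the corollary follows by composition rather than by a new construction. Assume $\nN$ is \LF.

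\textbf{Step 1.} By \Cref{thm:LFinTCHORL}, we obtain $\proves[\gChorL]\fof{\nN}$, that is, a \progressing proof $\dD$ of $\fof{\nN}$ in $\ChorL$. Alternatively, and more informatively, \Cref{thm:LF} gives a \ok \redtree $\predtree\nN$. The translation of \Cref{lem:proofsAsExecutions} (see \Cref{fig:ctTOder}) sends it to a proof of $\fof{\nN}$ in $\ChorL$, and the corollary following that lemma guarantees this proof is \fair, hence \progressing by \Cref{rem:fairImplyProgressing}.

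\textbf{Step 2.} \Cref{cor:chorl_cyc} lets us replace $\dD$ by a \regular proof in $\gChorL$ if desired. Finiteness is not strictly needed for the statement, but it keeps the transformations of the next step well defined on a finite graph.

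\textbf{Step 3.} Apply \Cref{cor:chorl_ok} to turn the \progressing proof into a \nice one, which gives the claim. Its two moves must be checked to preserve progress.
\begin{enumerate}
  \item Introduce $\Cmixr$-rules at sequents that split into process-disjoint components, and distribute the rules above them to the appropriate premise. This only redistributes rules along branches, so every infinite branch of the new proof projects onto an infinite branch of the old one. The thread with infinitely many $\Crec$-principal occurrences then survives in one component.
  \item Permute $\Clabr$-rules down to their matching $\Cchor$-rules using \Cref{fig:equivTCHORL}.
\end{enumerate}

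The main obstacle is move (ii) on infinite, cyclic proofs. One must show that each $\Cchor$ has a matching $\Clabr$ reachable finitely above it in every branch, so that the permutation terminates locally and the moved rule is never postponed forever. This is where I would use that the original proof is \fair: condition~\ref{fair:1} of \Cref{def:fairProofs} forces the active formula $\lsend{\chan pq}{\lab}\lprec F^\lab_\pp$ to become principal somewhere above. Race-freedom then makes the rule consuming it a $\Clabr$ on the unique dual branching formula. Since permutations of this kind commute with $\Crec$ steps and only shorten the distance between matching rules, progress points are preserved.
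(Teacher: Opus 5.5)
Your proposal is correct and matches the paper's proof: the paper also obtains the corollary by composing \Cref{thm:LFinTCHORL}, which turns lock-freedom into a progressing proof in $\gChorL$, with \Cref{cor:chorl_ok}, which turns that proof into one in $\nChorL$. Step~2 and your re-checking of the internal moves of \Cref{cor:chorl_ok} are not needed, since that lemma is already available; note also that the claim that the progressing thread ``survives in one component'' after inserting $\Cmixr$ would not by itself show that every infinite branch of \emph{each} premise progresses, because the thread may lie in the other component.
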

\begin{proof}
	Consequence of \Cref{thm:LFinTCHORL} and \Cref{cor:chorl_ok}.
\end{proof}

\section{Proofs as Choreographies}\label{sec:proofsAsChoreos}

\def\Cchorr{\mathsf{limbo}}
\def\preChorL{\mathsf{pre-}\ChorL}
\def\concl#1{\mathrm{concl}(#1)}

\def\unfof#1{\fof{#1}^{-1}}
\def\proofsubst#1#2{[\left\{#1\right\} \to \left\{#2\right\}]}

In this section we establish a correspondence between choreographies and proofs in $\gChorL$.
We show that we can translate a choreography $\cC$ into a proof with conclusion a sequent describing a network $\nN$, such that $\nN$ is bisimilar to $\cC$.
Then, we show that we can reverse the argument, and define a choreography from a successful proof search in $\PIL$ of a sequent describing a network $\nN$, such that the extracted choreography is bisimilar to $\nN$.

\subsection{Endpoint Projection, Logically}\label{sec:proofsAsChoreos:epp}

To reconstruct the endpoint projection of a choreography $\cC$ as a proof translation, we proceed in two steps. 
We first define a system $\preChorL$ obtained by relaxing the rule $\Cchor$ of $\ChorL$ into a rule $\Cchorr$ whose premises may have completely unrelated contexts.
This allows us to define a total function $\DER{\cdot}$ translating any choreography into a derivation in $\preChorL$, without checking any condition equivalent to projectability. 
We then identify the projectable choreographies as those for which every occurrence of $\Cchorr$ in $\DER{\cC}$ can be replaced by the sound rule $\Cchor$, yielding a proof
$\DofC{\cC}$ in $\ChorL$.

To translate a recursive call $\cX$ into a proof, we need to identify the processes having genuinely infinite behavior in $\cX$.
These processes are the ones that receive a recursive propositional variable $\fX_{\pp}$ in the translation.
In fact, the choreography associated with $\cX$ may contain finite branches in which some processes are not necessarily involved in the recursion, as show in the example in \Cref{ex:intro:unprojectable:proof}.
We capture this distinction using \emph{recursive names}, defined as follows.
\begin{definition}\label{def:finnames}
    The function \defn{recursive names} $\recnof\cdot$ mapping choreographies to sets of process names that are involved in infinite behaviors, inductively defined as follows.
    \[
        \recnof{\cC} =
        \begin{cases}
			\emptyset 
		& 
			\text{if } \cC \text{ is finite}
		\\
            \recnof{\cC[1]} \cup \recnof{\cC[2]}
		& 
			\text{if } \cC = \cC[1] \ctens \cC[2]
        \\
            \recnof{\cC[i]}
		& 
			\text{if } \cC = \cX[i] \text{ and } \procDef{\cX[i]}{\cC[i]}
		\\
            \set{\pp,\pq} \cup \recnof{\cC[1]}
		& 
			\text{if } \cC = \cgencom \seq \cC[1]
			\text{ (and  $\cC[1]$ is not finite)}
		\\
            \set{\pp,\pq} \cup \bigcup_{\lab\in\aLset}\recnof{\cC[\lab]}
		& 
			\text{if } \cC = \cgenchois
			\\ &
			\text{ (and $\cC[\lab]$ is not finite for some  $\lab\in\aLset$)}
        \end{cases}
    \]
\end{definition}

\begin{example}\label{ex:rnof}
	Consider the choreography $\cC= \cchoi p{\set{L,R}}q{
	\begin{array}{ll}
		L :& \com\pp e \pr x \seq \cnil
	\\
		R :& \cX
	\end{array}
	}
	$
	with 
	$\gencProcDef$ from \Cref{ex:intro:unprojectable:proof}.
	We have $\recnof{\cX} = \set{\pp, \pq}$, since $\pr$ occurs only in the finite branch $\ccom\pp e \pr x \seq \cnil$ of the continuation, and is therefore not included in $\recnof{\cC}$.
\end{example}

We can now define a first map $\DER{\cdot}$ from choreographies to derivations in a system $\preChorL$ obtained by relaxing the rule $\Cchor$ of $\ChorL$ into a rule $\Cchorr$ whose premises may have unrelated contexts.%
\footnote{
	The reader familiar with the literature on choreographies may interpret the rule $\Cchorr$ as a way to force projection even when merge is not defined, by allowing all processes, not only the two involved in the \emph{choice} primitive, to keep the possibility of making an internal choice to adapt to the eventual choice made by the other processes.
}

\begin{definition}
	We define the system $\preChorL=\ChorL\cup\set{\Cchorr}$, where the rule $\Cchorr$ is defined as follows.
	\begin{equation}\label{eq:relaxCchor}
		\vlderivation{\vlin{\Cchorr}{}{
			\vdash 
			\biglbra[{\lab\in\aLset}]
				\lsend{\chan pq}{\lab} \lprec F_\pp^\lab
			\;,\;
			\biglsel[\lab\in\aLset]
				\lrecv{\chan qp}{\lab} \lprec F_\pq^\lab
			\;,\;
			\biglsel[\lab\in\aLset]
				F_{\pr_1}^\lab
			,\ldots,
			\biglsel[\lab\in\aLset]
				F_{\pr_k}^\lab
		}{
			\multiprem{
				\vdash F_\pp^\lab, F_\pq^\lab, F_{\pr_1}^\lab,\ldots, F_{\pr_k}^\lab
			}{\lab \in \aLset} 
		}}
	\end{equation}

	We associate a derivation $\DER{\cC}$ in $\preChorL$ to a choreography $\cC$ by induction on the structure of $\cC$, using the translation $\DER\cdot$ defined in \Cref{fig:choreoTrans}.
	We say that $\cC$ is \defn{projectable} if every occurrence of 
	$\Cchorr$ in $\DER{\cC}$ is such that 
	we can replace each non-principal formula $\biglsel[\lab\in\aLset]F_{\pr_i}^\lab$ with a single formula $F_{\pr_i}$, for each $i\in\intset1k$, and the resulting derivation is still a proof in $\ChorL$ (see \Cref{ex:extraction}).
	
	If $\cC$ is projectable, then we define $\DofC{\cC}$ as the proof in $\ChorL$ obtained by translation $\DofC\cdot$ defined as in \Cref{fig:choreoTrans} by changing the way we translate the $\Cchorr$-rules, which are translated as shown in the bottom of \Cref{fig:choreoTrans} with respect to the translation $\DER\cdot$.
\end{definition}

\begin{remark}
	We define the translation of a choreography into a derivation in $\preChorL$ by progressively decomposing the choreography. To each choreography $\cC$ we associate a sequent 
	$\Gamma_{\cC}=F_{\pp_1},\ldots,F_{\pp_n}$, where $\pnof{\cC}=\set{\pp_1,\ldots,\pp_n}$.
	Each clause of \Cref{fig:choreoTrans} extends the derivation associated with the continuation of the choreography by adding the rule corresponding to its outermost constructor. 
	The formulas introduced by the rule replace the formulas associated with the involved processes throughout the open derivation below it. 
	In particular, in the communication case, the formulas introduced by $\Ccomr$ replace the formula
	$F_\pp$ \resp{$F_\pq$} with $\lsend{\chan pq}{e}\lprec F_\pp$ \resp{$\lrecv{\chan pq}{k}\lprec F_\pq$} in the derivation below.
	In the recursive case, only the processes in $\recnof{\cX} \cap \apnof{\cC}$, with $\cX \triangleq \cC$, are associated with propositional variables. 
	Thus, procedure calls are introduced only for the processes occurring in the recursive behavior of $\cX$; 
	processes occurring solely in finite sub-choreographies are instead represented by their finite formulas.
\end{remark}

\begin{example}\label{ex:extraction}
	Consider the choreography $\cC$ from \Cref{ex:intro:unprojectable:proof}.
	When we translate $\cC$ into a derivation in $\DER{\cC}$, we start by translating the outermost constructor of $\cC$ into a $\Cchorr$-rule, obtaining the following derivation in $\preChorL$
	$$
	\begin{array}{c}
	\DER{\cC}=
		\vlderivation{
			\vliin{\Cchorr}{}{
				\vdash 
				\left(
					\lsend{\chan{p}{q}}{L}
					\lprec
					F_\pp^L
				\right)
				\lwith
				\left(
					\lsend{\chan{p}{q}}{R}
					\lprec
					F_\pp^R
				\right)
				,
				\left(
					\lrecv{\chan{q}{p}}{L}
					\lprec
					F_\pq^L
				\right)
				\lplus
				\left(
					\lrecv{\chan{q}{p}}{R}
					\lprec
					F_\pq^R
				\right)
				,
				\left(
					F_\pr^L
				\right)
				\lplus
				\left(
					F_\pr^R
				\right)
			}{
				\vlpr{
					\DER{\ccom perx \seq \cnil}
				}{}{
					\vdash
					\lsend{\chan{p}{r}}{e},
					\lunit,
					\lrecv{\chan{r}{p}}{x}
				}
			}{
				\vlpr{
					\DER{\cX}
				}{}{
					\vdash 
					\fX_{\pp},
					\fX_{\pq},
					F_\pr^R
				}
			}
		}
	\end{array}
	$$
	which would require to satisfy the following equations:
	$$
	F_\pp^L=\lsend{\chan{p}{r}}{e}\lprec \lunit
	\qquad
	F_\pq^L=\lunit
	\qquad
	F_\pr^L=\lrecv{\chan{r}{p}}{x}\lprec \lunit
	\qquad
	F_\pp^R=\fX_{\pp}
	\qquad
	F_\pq^R=\fX_{\pq}
	$$
	
	In order to define $\DofC{\cC}$, it suffices to check if the equation $F_\pr^R=F_\pr^L$ admits a solution with the current constraints, which is the case here.
	If we assume $F_\pr^R=F_\pr^L=\lrecv{\chan{r}{p}}{x}\lprec \lunit$, we can safely continue translating the right branch obtaining the following derivation in $\ChorL$.
	$$
	\DofC{\cX}
	=
	\vlderivation{
		\vlin{\Crec}{}{
			\vdash 
			\fX_{\pp},
			\fX_{\pq},
			\lrecv{\chan{r}{p}}{x}\lprec \lunit
		}{
			\vlpr{
				\DofC{\cC}
			}{}{
				\vdash 
				\left(
					\lsend{\chan{p}{q}}{L}
					\lprec
					F_\pp^L
				\right)
				\lwith
				\left(
					\lsend{\chan{p}{q}}{R}
					\lprec
					F_\pp^R
				\right)
				,
				\left(
					\lrecv{\chan{q}{p}}{L}
					\lprec
					F_\pq^L
				\right)
				\lplus
				\left(
					\lrecv{\chan{q}{p}}{R}
					\lprec
					F_\pq^R
				\right)
				,
				\lrecv{\chan{r}{p}}{x}\lprec \lunit
			}
		}
	}
	$$
	The derivation $\DofC{\cC}$ is defined as in \Cref{ex:intro:unprojectable:proof}.
\end{example}

\begin{figure}[t]
	\def\myskip{\hskip2em}
	\def\Trans#1{}
 \centering
	\adjustbox{max width=\textwidth}{$
	\begin{array}{c}
		\TofD{\cnil} = \vlinf{\init}{}{
			\vdash \lunit
		}{}
	\myskip
		\TofD{\cX}
		=
		\vlderivation{
			\vlin{\Crec}{
				\text{\scriptsize{
					$\procDef{{\fX_{\pp_i}}}{{F_{\pp_i}}} \in\fprocset$
				}}
			}{
				\vdash \fX_{\pp_1},\dots,\fX_{\pp_n},\Gamma
			}{
				\vlin{}{}{
					\vdash F_{\pp_1},\dots,F_{\pp_n},\Gamma
				}{
					\vlpr{\TofD{\cC}}{}{
						\vdash \Gamma_{\cC}
					}
				}
			}
		}	
		\qwith 
		\begin{cases}
			\procDef{\cX}{{\cC}} \in \cprocset
			\\
			\text{and }
			\\
			\set{\pp_1,\dots,\pp_n}=\recnof{\cX}\cap\apnof\cC
		\end{cases}
	\\\\
		\TofD{\cC[1]\ctens\cC[2]}
		=
		\vlderivation{
			\vliin{\Cmixr}{}{
				\vdash \Gamma
			}{
				\vlpr{\TofD{\cC[1]}}{}{\vdash \Gamma_{\cC[1]}}
			}{
				\vlpr{\TofD{\cC[2]}}{}{\vdash \Gamma_{\cC[2]}}
			}
		}
	\myskip\myskip
		\TofD{\cgencom\seq\cC}
		=
		\vlderivation{
			\vlin{\Ccomr}{}{
				\vdash 
				\lsend{\chan pq}{e}\lprec F_\pp
				,
				\lrecv{\chan pq}{k}\lprec F_\pq
				,
				\Gamma
			}{
				\vlid{}{}{
					\vdash 
					F_\pp
					\;,\;
					F_\pq\fsubst ek
					\;,\;
					\Gamma
				}{
					\vlpr{\TofD{\cC\cvsubst ek}}{}{
						\vdash \Gamma_{\cC\cvsubst ek}
					}
				}
			}
		}
	\\\\\hline\\
		\DER{\cchoi p{\aLset}q{\lab : \cC[\lab] \mid \lab \in \aLset'}}
		=
		\vlderivation{
			\vlin{\Cchorr}{}{
				\vdash 
				\Biglbra[\lab\in\aLset]
				{
					\left(
					\lsend{\chan{p}{q}}{\lab}
					\lprec
					F_{\pid p}^{\lab}
					\right)
				}
				\;,\;
				\Biglsel[{\lab\in\aLset'}]
				{
					\left(
					\lrecv{\chan{q}{p}}{\lab}
					\lprec
					F_\pq^{\lab}
					\right)
				}
				\;,\;
				\Biglsel[\ell\in\aLset] F_{\pr_1}^\lab,\ldots, \Biglsel[\ell\in\aLset] F_{\pr_k}^\lab
			}{
				\multiprem{
					\vlderivation{\vlid{\Cwr}{}{
						\vdash 
						F_\pp^\lab, F_\pq^{\lab}
						\;,\;
						F_{\pr_1}^\lab
						\;,\ldots,\;
						F_{\pr_k}^\lab
					}{
						\vlpr{\DER{\cC[\lab]}}{}{
							\vdash 
							\Gamma_{\cC[\lab]}
						}
					}}
				}{\lab\in\aLset}
			}
		}
	\\\\\hline\\
		\DofC{\cchoi p{\aLset}q{\lab : \cC[\lab] \mid \lab \in \aLset'}}
		=
		\vlderivation{
			\vlin{\Cchor}{}{
				\vdash 
				\Biglbra[\lab\in\aLset]
				{
					\left(
					\lsend{\chan{p}{q}}{\lab}
					\lprec
					F_{\pid p}^{\lab}
					\right)
				}
				\;,\;
				\Biglsel[\lab\in\aLset']
				{
					\left(
					\lrecv{\chan{q}{p}}{\lab}
					\lprec
					F_{\pid q}^{\lab}
					\right)
				}
				\;,\;
				\Gamma
			}{
				\multiprem{
					\vlderivation{
						\vlin{
							\Clabr
						}{
							\text{\scriptsize{$\lab_i\in\aLset$}}
						}{
							\vdash 
							\lsend{\chan pq}{\lab_i} \lprec F_\pp^{\lab_i}
							\;,\;
							\biglsel[\lab\in\aLset']\lrecv{\chan qp}{\lab} \lprec F_\pq^\lab
							\;,\;
							\Gamma
						}{
							\vlhy{
								\vdash 
								F_\pp^{\lab_i} 
								\;,\; 
								F_\pq^{\lab_i}
								\;,\;
								\Gamma
							}
						}
					}
				}{\lab_i\in\aLset}
			}
		}
	\end{array}
	$}	
	\caption{
		Translation of a choreography into a pre-derivation, with $\TofD{\cdot}\in\set{\DER{\cdot},\DofC{\cdot}}$ and $\Gamma_{\cC} = F_{\pp_1} , \dots, F_{\pp_n}$ with $\pnof{\cC} = \set{\pp_1, \dots, \pp_n}$.
		The dotted lines represent the possible applications instances of the $\Cwr$.
	}
	\label{fig:choreoTrans}
\end{figure}

\begin{remark}\label{ex:noRuntimeTerms}
	The way the choice is translated into a proof, allows us to have a better handling of the knowledge of choice in the choreography as shown in the previous example.
	
	In fact, the choreography in \Cref{ex:intro:unprojectable:choreo} cannot be projected using standard techniques (as, e.g., in \cite{montesi:book}) because of the lack of knowledge of choice for $\pr$.
	However, the conclusion sequent of $\DofC{\cC}$ is bisimilar to the following network we obtain using our new endpoint projection technique.
	$$
	\pdo[p]{\matpsendl{q}{L: \psend re, R: {\pX[\pp]}}}
	\ppar
	\pdo[q]{\matprecvl{q}{L: \pnil, R: {\pX[\pq]}} }
	\ppar
	\pdo[r]{\precv px} 
	$$

	In fact, the \fairness property of the proof $\DofC{\cC}$ highlight the fact that because $\pr$ is not involved in the recursive behavior of $\cC$, and $\pr$ does not need to be aware of the choice made by $\pp$ and $\pq$.
\end{remark}

\begin{remark}\label{rmk:byconstruction}
	Notice that by construction the conclusion of $\DofC\cC$ is always of the form $\Gamma = \fof N$.
\end{remark}

We can now show that if two choreographies are projectable, and their translations are equivalent proofs in $\gChorL$ modulo the equivalence relation $\peq$ over derivations in $\ChorL$ inductively generated by the rule permutations in \Cref{fig:equivTCHORL}, then the two choreographies are bisimilar.

\begin{lemma}\label{lem:proof_equiv_chor_bisim}
	Let $\cC[1]$ and $\cC[2]$ projectable choreographies.
	If $\DofC{\cC[1]} \peq \DofC{\cC[2]}$, then $\cC[1] \bisim \cC[2]$.
\end{lemma}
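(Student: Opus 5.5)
We argue by induction on the length of a sequence of rule permutations witnessing $\DofC{\cC[1]} \peq \DofC{\cC[2]}$. Since $\bisim$ is transitive, it suffices to treat a single permutation of \Cref{fig:equivTCHORL}. Such a permutation may be applied at any position in the (possibly cyclic) proof, so we also need $\bisim$ to be a congruence for the choreographic constructs. That is, if a subproof $\DofC{\cCp}$ of $\DofC{\cC}$ is replaced by an equivalent one coming from $\cCpp$ with $\cCp\bisim\cCpp$, the resulting choreographies are still bisimilar.

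The first step is to make the correspondence between rule instances and choreographic constructors precise. By construction of $\DofC\cdot$ (\Cref{fig:choreoTrans}), each $\Ccomr$ comes from a prefix $\cgencom\seq-$, each $\Cchor$ followed by its $\Clabr$-rules from a choice, each $\Cmixr$ from a $\ctens$, and each $\Crec$ (equivalently a block of $\ldefr$) from a variable $\cX$. Reading a proof bottom-up therefore gives back a choreography up to the identifications of \Cref{thm:tensor_bisimilarity}. We then check each generating permutation against the semantics of \Cref{fig:semantics}:
\begin{itemize}
\item Swapping two $\Ccomr$ rules on disjoint formulas corresponds to $\ccom peqx\seq\ccom rfsy\seq\cC \bisim \ccom rfsy\seq\ccom peqx\seq\cC$ with $\set{\pp,\pq}\cap\set{\pr,\ps}=\emptyset$. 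Disjointness holds because the principal formulas have distinct locations. The bisimulation is witnessed by $\delcomrule$ on both sides.
\item Swapping $\Ccomr$ with $\Cchor$ gives the analogous law between a prefix and a choice, using $\delchoirule$. Its side condition, that the transition is enabled in every branch, holds precisely because the permuted rule appears identically in each premise.
\item Swapping two $\Cchor$ rules gives two nested choices of disjoint pairs, related through $\delchoirule$.
\item The permutations of $\Cmixr$ with $\rrule[3]$ or $\Cchor$ correspond to $(\ccom peqx\seq\cC[1])\ctens\cC[2]\bisim \ccom peqx\seq(\cC[1]\ctens\cC[2])$ and the similar law for choice. These hold via $\tensrulel$/$\tensruler$ against $\comrule$/$\delcomrule$, using well-formedness for disjointness.
\item The $\Crec$/$\ldefr$ permutation, and permutations involving $\ldefr$ or $\Crec$, change the choreography only by unfolding a variable. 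This is a bisimilarity by rule $\unfoldrule$, since the semantics has no $\tau$-steps.
\end{itemize}
For each case we write down the explicit candidate relation: the pair of choreographies together with the identity on their common residuals. We verify the two clauses of \Cref{def:bisim} by case analysis on the last rule deriving a transition.

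The main obstacle is the congruence step for cyclic proofs. A permutation inside a cycle changes the procedure definitions in $\cprocset$, so the two choreographies are not related by a finite context. To handle this we define the candidate relation coinductively. Let $\abisim$ be the set of pairs $(\cC,\cCp)$ such that $\DofC{\cC}\peq\DofC{\cCp}$ holds through permutations realizable in the regular graph (\Cref{cor:chorl_cyc}). We show that $\abisim$ is itself a bisimulation, reusing the local case analysis above: a transition of $\cC$ is matched by a rule instance in $\DofC{\cC}$, possibly after delay rules. Permutation moves this instance, and the matching instance in $\DofC{\cCp}$ yields the simulating transition; the residual subproofs remain $\peq$-related. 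The delicate point here is the $\delchoirule$ uniformity condition.
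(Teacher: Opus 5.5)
Your strategy is the paper's own: induction on the number of permutations of \Cref{fig:equivTCHORL}, each justified by disjointness of the process names involved and by the delay rules. You add care for permutations inside cycles that the paper's sketch omits. However, the step you use for choices fails. You claim the side condition of $\delchoirule$ holds ``because the permuted rule appears identically in each premise''. But $\Cchor$, and hence $\DofC{\cdot}$, has premises only for $\lab\in\aLset$, whereas $\delchoirule$ requires the transition in every branch $\lab\in\aLset'$. The branches in $\aLset'\setminus\aLset$ leave no trace in the proof except $\pq$'s own continuations. This already breaks your base case (zero permutations), where you assume that reading $\DofC{\cC}$ back recovers $\cC$ up to \Cref{thm:tensor_bisimilarity}. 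Take
\[
\cC[1]=\cchoi{p}{\set{L}}{q}{L : \ccom rfsy \seq \cnil,\ R : \ccom rfsy \seq \cnil}
\qquad
\cC[2]=\cchoi{p}{\set{L}}{q}{L : \ccom rfsy \seq \cnil,\ R : \cnil}
\]
Both are projectable, since the only choice has $\sizeof{\aLset}=1$. Moreover $\DofC{\cC[1]}=\DofC{\cC[2]}$, because in both the branch $R$ only contributes the $R$-summand of $\pq$'s formula, with terminated continuation. Yet $\cC[1]$ has a transition labelled $\tocom rfsy$ by $\delchoirule$, while $\cC[2]$ has none, its $R$ branch being $\cnil$. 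So $\cC[1]\not\bisim\cC[2]$.

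Hence the statement needs an extra hypothesis, e.g.\ $\aLset=\aLset'$ in every choice, or $\delchoirule$ quantifying over $\aLset$ only; the paper's two-line proof has the same hole. Even with the first fix, your relation $\abisim$ is too weak as defined. After a choice transition the residuals have a singleton $\aLset$, so $\peq$ between their $\DofC{\cdot}$ images no longer constrains the hidden branches; you would have to relate residuals through the original pair of proofs instead. Finally, ``by transitivity it suffices to treat a single permutation'' presupposes that every intermediate derivation in the chain is $\DofC{\cC}$ for some projectable $\cC$. This fails as soon as a $\Crec$ is expanded into $\ldefr$ rules. Also, \Cref{fig:equivTCHORL} has no permutation involving $\Clabr$, so a $\Cchor$ must move together with its $\Clabr$ rules. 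You therefore need either a choreographic reading of arbitrary intermediate derivations, or to run the whole argument coinductively on proofs.
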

\begin{proof}
	Since we are assuming $\DofC{\cC[1]} \peq \DofC{\cC[2]}$ if there is a finite sequence of rule permutations transforming $\DofC{\cC[1]}$ into $\DofC{\cC[2]}$, we can reason by induction on the number of rule permutations in the sequence.
	In fact, each rule permutation requires that the process names involved in the active formulas of the two rules are disjoint, and therefore preserves the bisimilarity of the corresponding choreographies via delay rules.
\end{proof}

We can now prove the main result of this section, which is a reformulation of the endpoint projection of choreographies in logical terms. For this purpose, we recall the definition of bisimulation between choreographies and networks.
\begin{definition}\label{def:wbisim}
	A binary relation $\mathcal{R}$ between \chors and \nets is a \defn{bisimulation of \nets and \chors} if for all pair $(\cC,\nN)\in\mathcal{R}$,
	for all 
	$\mu \in \set{\tocom peqx , \ \tochoi[\lab] p, \ \tolab[\lab] pq \mid \pp, \pq \in \Pset}$:
	\begin{enumerate}
		\item\label{wbisim:1} if $\cC \cto{\mu} \cCp$, then there is a $\nNp$ such that $\nN \pto{\mu} \nNp$ and $(\cCp,\nNp)\in\mathcal{R}$;
		\item\label{wbisim:2} if $\nN \pto{\mu} \nNp$, then there is a $\cCp$ such that $\cC \cto{\mu} \cCp$ and $(\cCp,\nNp)\in\mathcal{R}$.
	\end{enumerate}
	A \net and a \chor are \defn{bisimilar}, written $\cC \wbisim \nN$, if there is a weak bisimulation $\mathcal{R}$ such that $(\cC,\nN)\in\mathcal{R}$.
\end{definition}

\begin{restatable}[EPP]{theorem}{EPP}\label{thm:logicEPP}
	Let $\cC$ be a choreography. If $\DofC\cC$ is a well-defined proof in $\nChorL$ with conclusion $\Gamma$, 
	then there exists a network $\nN$ {\LF} such that $\nN \bisim \cC$ and $\Gamma = \fof{\nN}$.
\end{restatable}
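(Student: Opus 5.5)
The plan has three steps: read off $\nN$ from the conclusion of $\DofC\cC$, get lock-freedom from the results of the previous section, and build a bisimulation by induction on the translation. First, by \Cref{rmk:byconstruction}, the conclusion $\Gamma$ of $\DofC\cC$ has the form $\fof{\nN}$ for some network $\nN$. The encoding $\fof\cdot$ of \Cref{eq:fof} is injective on atomic networks: each formula has a unique location (\Cref{rem:local_process:formulas}), and the outermost connective together with the atom determines the process constructor. So $\nN$ is uniquely determined, and we take it as the candidate. A nice proof is in particular progressing, hence a proof in $\gChorL$. \Cref{thm:LFinTCHORL} then gives that $\nN$ is \LF.

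For bisimilarity, I will define
\[
\abisim = \Set{ (\cCp, \nNp) \mid \cCp \text{ projectable},\ \DofC{\cCp} \text{ is a proof in } \nChorL \text{ with conclusion } \fof{\nNp} } ,
\]
closed under $\peq$ on the proof side via \Cref{lem:proof_equiv_chor_bisim}. I then show that $\abisim$ is a bisimulation in the sense of \Cref{def:wbisim}. The two directions are checked by case analysis on the transition.

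For the direction from choreographies to networks, I proceed by induction on the derivation of $\cCp \cto{\mu} \cCpp$ in \Cref{fig:semantics}.
\begin{itemize}
\item Rule $\comrule$: the bottom rule of $\DofC{\cCp}$ is a $\Ccomr$ (possibly under $\Cwr$), and the two principal formulas read back as $\psend qe\pseq\pP$ and $\precv px\pseq\pQ$. The network therefore performs the matching $\pcomrule$ step, and the premise is $\DofC{\cCpp}$.
\item Rule $\choirule$ followed by $\labrule$: niceness forces each premise of $\Cchor$ to be immediately a $\Clabr$. The $\Cchor$ mirrors $\pchoisrule$ and the $\Clabr$ mirrors $\plabrule$. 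The intermediate state $\cgenchoi$ must be added to $\abisim$, paired with the network after $\pchoisrule$, witnessed by the $\Clabr$-subproof.
\item Rule $\unfoldrule$: this corresponds to $\Crec$ in the translation of $\cX$. Because network semantics has no $\tau$ steps, $\pprocrule$ absorbs the unfolding. Only names in $\recnof{\cX}\cap\apnof\cC$ carry variables, and this is consistent with the encoding of the procedure definitions.
\item Rules $\tensrulel$ and $\tensruler$: these follow from $\Cmixr$ and the disjoint supports of the networks.
\item Rules $\delcomrule$ and $\delchoirule$: I permute the rule responsible for $\mu$ down to the root using \Cref{fig:equivTCHORL}. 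This is legitimate because the process names involved are disjoint. For $\delchoirule$ the rule exists in every branch, which is exactly the hypothesis needed for the $\Cchor$ permutation. \Cref{lem:proof_equiv_chor_bisim} then lets me replace $\cCp$ by a bisimilar choreography whose proof has that rule at the root.
\end{itemize}

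For the direction from networks to choreographies, let $\nNp \pto{\mu} \nM$. The step consumes prefixes of the formulas at locations $\pnof\mu$. I will argue that, within the nice proof, some rule on every branch eventually makes those formulas principal. This uses fairness condition~\ref{fair:1}, race-freedom (each send has a unique matching receive), and the projectability constraint that contexts are uniform across $\Cchor$ premises. By permutations that rule can be brought to the root, and reading it back gives the corresponding choreographic transition, using delay rules when it sits under other prefixes.

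I expect the main obstacle to be the interaction of $\Crec$ and projected choices with this permutation argument. A $\Cchor$ at $\pp$ may sit below the rule we want to surface, and since $\apnof\cC$ restricts which names are unfolded, a process $\pr$ outside $\recnof\cX$ keeps a finite formula across the recursion. I would try first to exploit the uniformity of $\Cchor$ contexts: the formula for $\pr$ is identical in all premises, so the rule acting on it is present in every branch and permutes below $\Cchor$ by the $\Cchor$-permutation of \Cref{fig:equivTCHORL}. This matches precisely the side condition of $\delchoirule$. Regularity of the proof (\Cref{cor:chorl_cyc}) keeps the argument finite.
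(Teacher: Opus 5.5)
You follow the paper's outline. You read $\nN$ off the conclusion of $\DofC{\cC}$ and get lock-freedom from \Cref{thm:LFinTCHORL}; you cite the right direction, whereas the paper's proof invokes \Cref{cor:lf_iif_proof}, which is the converse. You then check that pairing a choreography with the network encoded by its proof's conclusion is a bisimulation. The choreography-to-network direction is fine, because choreographic transitions have finite derivations. There the paper's plain structural induction on the transition (re-attaching the bottom $\Ccomr$ or $\Cchor$ to the inductive witness) is simpler than your permutation detour, which would also need permuted proofs to stay in the image of $\DofC{\cdot}$.

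The genuine gap is in the network-to-choreography direction. You claim that for an enabled step $\nNp\pto{\mu}\nM$, some rule on every branch eventually makes the formulas located at $\pnof{\mu}$ principal. Nothing supports this:
\begin{itemize}
\item Nice proofs are only required to be progressing, not fair.
\item Condition~\ref{fair:1} of \Cref{def:fairProofs} already fails on the always-$R$ branch of the paper's own proof in \Cref{ex:extraction}, where the formula of $\pr$ is never principal. Fairness can at most be imposed on branches that resolve every additive in every way, whereas $\delchoirule$ needs the step in \emph{every} branch.
\item Uniform $\Cchor$ contexts are carried into every premise, not consumed in each.
\item Regularity bounds the number of distinct subproofs, not the depth at which a rule occurs.
\end{itemize}

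In fact the step fails, and with it the statement. Let $\cC = \cchoi{p}{\set{L,R}}{q}{L : \ccom rfsy \seq \ccom rgpz \seq \cnil,\; R : \cX}$ with $\procDef{\cX}{\cC}$. Then $\recnof{\cX}=\set{\pp,\pq}$, and $\DofC{\cC}$ is built exactly as in \Cref{ex:extraction}. The formulas of $\pr$ and $\ps$ sit in the uniform context of the $\Cchor$. In the $L$-premise they are consumed by two $\Ccomr$; in the $R$-premise they are carried through $\Crec$ back to the root. This is a projectable nice proof: progress goes through $\fX_{\pp}$, and no $\Cmixr$ can split the sequent, since $\pr$ later talks to $\pp$. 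Any network it encodes contains $\pdo[r]{\psend sf \pseq \psend pg \pseq \pnil}\ppar\pdo[s]{\precv ry \pseq \pnil}$, so it can immediately perform $\tocom rfsy$. But $\cC$ has no transition with that label. Only $\delchoirule$ could produce one; it needs $\cX$ to perform it, and by $\unfoldrule$ that needs $\cC$ itself to perform it, so there is no finite derivation.

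The paper's own argument, induction on the maximal depth $d$ over all branches of the rule for $\mu$, silently assumes $d<\infty$ and breaks at the same point. Closing the gap needs one of two things. Either add a hypothesis making that depth finite, e.g.\ that processes of $\apnof{\cC}\setminus\recnof{\cX}$ held in a $\Crec$ context cannot act while the recursion is pending. Or use a choreography semantics that lets actions be delayed through procedure calls.
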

\begin{proof}
	By \Cref{rmk:byconstruction} there is $\nN$ such that $\fof \nN = \Gamma$.
	Since $\DofC\cC$ is a proof of $\fof \nN$ so by \Cref{cor:lf_iif_proof} $\nN$ is \LF.
	To show that $\cC \bisim \nN$ we show that the relation
	$\abisim = \set{(\nN, \cC) \mid \fof{\nN} \text{ is the conclusion of } \DofC{\cC}}$
	is a bisimulation.
	We verify the two bisimulation conditions of \Cref{def:wbisim}.

	\begin{enumerate}

		\item 
		
		We proceed by case analysis on the bottom-most rule $\rrule$ in the derivation of the transition $\cC \cto{\mu} \cCp$ in the operational semantics of choreographies, by letting 
		$\fof{\nN} = \Gamma$ being the conclusion of $\DofC{\cC}$
		and
		$\fof{\nNp} = \Gamma'$ being the conclusion of $\DofC{\cCp}$.

		\begin{itemize}
			\item 
			If $\rrule = \comrule$, then $\cC = \cgencom \seq \cC[1]$ and the bottom-most rule of $\DofC{\cC}$ is an instance of $\Ccomr$ with conclusion $\fof{\nN}=\fof{\pdo[p]{\psend qe \pseq \pP} \ppar \pdo[q]{\precv px \pseq \pQ} \ppar \nM}$
			and premise 
			$\fof{\pdo[p]{\pP} \ppar \pdo[q]{\pQ\psubst{e}{x}} \ppar \nM}=\fof{\nNp}$, the conclusion of $\DofC{\cCp}$.
			We conclude since $\nN\pto{\tocom peqx}\nNp$.

			\item 
			If $\rrule = \choirule$ \resp{$\rrule = \labrule$} we conclude similarly to the previous case, but considering the fact that the bottom-most rule in $\DofC{\cC}$ is a $\Cchor$ \resp{$\Clabr$} and $\nN\pto{\tolab pq}\nNp$ \resp{$\nN\pto{\tochoi p}\nNp$}.

			\item 
			If $\rrule = \tensrulel$ \resp{$\tensruler$}, then
			$\cC = \cC[l] \ctens \cC[r]$ and 
			the bottom-most rule of $\DofC{\cC}$ is $\Cmixr$ with conclusion $\vdash \fof{\nN[l] \ppar \nN[r]}$ and two premises $\vdash \fof{\nN[l]}$ and $\vdash \fof{\nN[r]}$, each conclusion of a proof $\DofC{\cC[l]}$ and $\DofC{\cC[r]}$, respectively.
			By inductive hypothesis on $\DofC{\cC[l]}$ \resp{$\DofC{\cC[r]}$},
			there exists $\nNp[l]$ \resp{$\nNp[r]$} with
			$\nN[l] \pto{\mu} \nNp[l]$ \resp{$\nN[r] \pto{\mu} \nNp[r]$}
			and $\fof{\nNp[l]} = \Gamma'_l$ \resp{$\fof{\nNp[r]} = \Gamma'_r$},
			the conclusion of $\DofC{\cCp[l]}$ \resp{$\DofC{\cCp[r]}$}.
			We conclude because we have 
			$\nN[l] \ppar \nN[r] \pto{\mu} \nNp[l] \ppar \nN[r]$
			\resp{$\nN[l] \ppar \nN[r] \pto{\mu} \nN[l] \ppar \nNp[r]$}
			thanks to the rule $\pnetrule$ in the operational semantics of networks.

			\item If $\rrule = \delcomrule$, then $\cC = \cgencom \seq \cC[1]$
			where $\pp,\pq\notin\pnof{\mu}$, and the bottom-most rule of $\DofC{\cC}$
			is a $\Ccomr$.
			By inductive hypothesis on $\DofC{\cC[1]}$, there exists $\nNp[1]$
			with $\nN[1] \pto{\mu} \nNp[1]$ and $\fof{\nNp[1]} = \Gamma'_1$,
			the conclusion of $\DofC{\cCp[1]}$.
			Since $\pnof{\mu} \cap \set{\pp,\pq} = \emptyset$, we have
			$\nN(\pp) = \nN[1](\pp)$ for all $\pp \in \pnof{\mu}$,
			hence $\nN \pto{\mu} \nNp$.
			Moreover, $\DofC{\cCp}$ is obtained from $\DofC{\cCp[1]}$ by
			appending the same $\Ccomr$ as bottom-most rule, so $\Gamma' =
			\fof{\nNp}$.

			\item If $\rrule = \delchoirule$ we conclude similarly to the previous case, but assuming that the bottom-most rule of $\DofC{\cC}$ is $\Cchor$ followed by instances of $\Clabr$ and that the transition $\nN \pto{\mu} \nNp$ is derived in the network semantics using the rule $\pprocrule$.

			\item If $\rrule = \unfoldrule$, we conclude by inductive hypothesis,
			noting that the corresponding network step uses rule $\pprocrule$.
		\end{itemize}

		\item 
		We proceed by induction on the maximal distance $d$ in the derivation $\DofC{\cC}$ of the rule corresponding to the transition with label $\mu$ from the root of the derivation $\DofC{\cC}$ in all its branches.

		If $d=0$, then the bottom-most rule of $\DofC{\cC}$ directly encodes the transition labeled by $\mu$: if $\mu = \tocom peqx$ \resp{$\tolab pq$,
		$\tochoi p$}, the bottom-most rule is $\Ccomr$ \resp{$\Cchor$ preceded
		by instances of $\Clabr$, $\Clabr$}, and we conclude with a reasoning similar to the one we used in the previous point.

		For the inductive step, $d > 0$: the bottom-most rule of $\DofC{\cC}$
		encodes a transition, whose process names of its label are not in $\pnof{\mu}$; 
		it is one of $\Cmixr$, $\Ccomr$, $\Cchor$, $\Clabr$, or $\Crec$.
		In each subcase, applying the inductive hypothesis to the relevant
		sub-derivation (where the maximum depth of the rule for $\mu$ decreases
		by one) and concluding by the same argument as in the $\tensrulel$,
		$\delcomrule$, $\delchoirule$, and $\unfoldrule$ in the previous point, respectively.

	\end{enumerate}
\end{proof}

\subsection{Extraction, logically}\label{sec:proofsAsChoreos:extraction}

We design an algorithm for extracting a choreography from a proof of a sequent encoding a \LF network, and we show that the extracted choreography is bisimilar to the original network.

\begin{definition}
	Let $\dD$ be a proof in $\nChorL$ of a sequent $\Gamma$.
	The \defn{choreography extracted from $\dD$} is the choreography $\chRCof[\dD]$ defined by induction on the structure of $\dD$ as described in \Cref{fig:DerToChor}.
\end{definition}

\begin{figure}[t]\centering
	\adjustbox{max width=\textwidth}{$\begin{array}{c}
		\chRCof[\vlinf{\Caxr}{}{\vdash 
			\fof{\nnil}
		}{}
		]=\cnil
	\qquad\qquad
		\chRCof[
			\vlderivation{
				\vliin{\Cmixr}{}{
					\vdash \fof{\nN[1]\ppar \nN[2]}
				}{
					\vlpr{\dD_1}{}{\vdash \fof{\nN[1]}}
				}{
					\vlpr{\dD_2}{}{\vdash \fof{\nN[2]}}
				}
			}
		]
		= 
		\chRCof[\vlderivation{
			\vlpr{\dD_1}{}{\vdash \fof{\nN[1]}}
		}] 
		\ctens 
		\chRCof[
			\vlderivation{
			\vlpr{\dD_2}{}{\vdash \fof{\nN[2]}}
		}]
	\\\\
		\chRCof[
			\vlderivation{
				\vlin{\Cwr}{}{
					\vdash \lunit,\fof{ \nN}
				}{
					\vlpr{\dD_1}{}{\vdash \fof{\nN}}
				}
			}
		]=
		\chRCof[
			\vlderivation{
				\vlpr{\dD_1}{}{\vdash \fof{\nN}}
			}
		]
	\qquad
		\chRCof[
			\vlderivation{
				\vlin{\ldefr}{}{
					\vdash 
					\fof{\pdo{\pX} \ppar \nN}
				}{ 
					\vlpr{\dD}{}{
						\vdash 
						\fof{\pdo{\pP_{\pX}} \ppar \nN}
					}
				}
			}
		]
		= 
		\chRCof[
			\vlderivation{
				\vlpr{\dD}{}{
					\vdash 
					\fof{\pdo{\pP[\pX]} \ppar \nN}
				}
			}
		]
	\\\\
		\chRCof[
			\vlderivation{
				\vlin{\Crec}{}{
					\vdash 
					\fof{
						\pdo{\pX[{\pp_1}]} 
						\ppar \cdots\ppar
						\pdo{\pX[{\pp_n}]} 
						\ppar
						\nN
					}
				}{
					\vlin{}{}{
						\vdash 
						\fof{
							\pdo{\pP[{\pp_1}]} 
							\ppar \cdots\ppar
							\pdo{\pP[{\pp_n}]} 
							\ppar
							\nN
						}
					}{
						\vlpr{\dD_1}{}{
							\vdash \Gamma
						}
					}
				}
			}
		]
		= 
		\cX
		\qquad\text{where }
		\procDef{\cX}{\chRCof[
			\vlderivation{
				\vlin{}{}{
					\vdash 
					\fof{
						\pdo{\pP[{\pp_1}]} 
						\ppar \cdots\ppar
						\pdo{\pP[{\pp_n}]} 
						\ppar
						\nN
					}
				}{
					\vlpr{\dD_1}{}{
						\vdash \Gamma
					}
				}
			}
		]}\in\cprocset
	\\\\
		\chRCof[\vlderivation{\vlin{\Ccomr}{}{
				\vdash 
				\fof{
					\pdo{\psend qe \pseq \pP}
					\ppar
					\pdo[q]{\precv px \pseq \pQ}
					\ppar \nN	
				}
			}{
				\vlpr{\dD'\lpsubst ex}{}{
					\vdash 
					\fof{
						\pdo{\pP}
						\ppar
						\pdo[q]{\pQ \psubst ex}
						\ppar 
						\nN
					}
				}
			}
		}]=
		\cgencom \seq \chRCof[\vlderivation{\vlpr{\dD'}{}{
			\vdash 
			\fof{
				\pdo{\pP}
				\ppar
				\pdo[q]{\pQ}
				\ppar 
				\nN
			}
		}}]
	\\\\
		\chRCof[
			\vlderivation{
				\vlin{\Cchor}{}{\vdash 
					\fof{
						\pdo[p]{\matpsendl[\lab\in\aLset]q{\lab: \pP[\lab]}}
						\ppar
						\pdo[q]{\matprecvl[\lab\in\aLset']p{\lab: \pQ[\lab]}}
						\ppar \nN
					}
					}{
					\multiprem{\vlin{\Clabr}{}{
						\vdash 
						\fof{
							\pdo[p]{\matpsendl[]q{\lab_i: \pP_{\lab_i}}}
							\ppar
							\pdo[q]{\matprecvl[\lab\in\aLset']p{\lab: \pQ[\lab]}}
							\ppar \nN
						}
					}{
						\vlpr{\dD_{\lab}}{}{
							\vdash 
							\fof{
								\pdo{\pP[\lab_i]}
								\ppar
								\pdo{\pQ[\lab_i]}
								\ppar 
								\nN
							}
						}	
					}}{\lab\in L}
				}	
			}
		]
		=
		\cchoi{p}{\aLset}{q}{
			\lab : {\chRCof[\vlderivation{
				\vlpr{\dD_{\lab}}{}{
					\vdash 
					\Gamma,
					\fof{\pdo{\pP[{\lab}]}}, \fof{\pdo[q]{\pQ[{\lab}]}}
			}}]}
			\mid \lab \in \aLset
		}
	\end{array}$}

	\caption{
		Translation from $\nChorL$ proofs to choreographies.
	}
	\label{fig:DerToChor}
\end{figure}

\begin{lemma}\label{lem:bisimcoreo}
	The proof equivalence $\peq$ induces a bisimulation of choreographies.
	That is, if two derivations $\dD$ and $\dD'$ in $\nChorL$ are such that $\dD \peq \dD'$, 
	then $\chRCof[\dD] \bisim \chRCof[\dD']$.
\end{lemma}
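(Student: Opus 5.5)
The relation $\peq$ is the least equivalence on derivations generated by the permutations of \Cref{fig:equivTCHORL}, closed under context. Bisimilarity $\bisim$ is an equivalence, so the plan is to treat a single permutation step. Concretely, I will show that if $\dD'$ is obtained from $\dD$ by one permutation applied at some node, then $\chRCof[\dD]\bisim\chRCof[\dD']$. The general case then follows by induction on the length of the sequence of permutations, using transitivity of $\bisim$.

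For a single step I will use the candidate relation
\[
\abisim=\set{(\chRCof[\dD],\chRCof[\dD'])\mid \dD,\dD'\text{ in }\nChorL,\ \dD\text{ and }\dD'\text{ differ by permutations whose active formulas have disjoint locations}}
\]
together with the identity relation, and prove that it is a bisimulation. Coinduction is needed rather than structural induction because the proofs are cyclic and $\Crec$ produces procedure definitions. A permutation occurring above a $\Crec$ changes the body assigned to the extracted variable $\cX$, so $\abisim$ must also be closed under unfolding: pairs of the form $(\cX,\cX')$ whose bodies are related. I will handle this with the rule $\unfoldrule$, which gives $\cX$ and its body the same transitions.

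The core of the argument is a case analysis on the permutation. In each case I compute the two extracted choreographies using \Cref{fig:DerToChor}.
\begin{itemize}
\item Two $\Ccomr$ rules on disjoint process pairs give $\ccom peqx\seq\ccom rfsy\seq\cC$ on one side and $\ccom rfsy\seq\ccom peqx\seq\cC$ on the other. Each side matches the other's first transition via $\delcomrule$, and the continuations are again related.
\item $\Ccomr$ against $\Cchor$ (followed by its $\Clabr$ rules) uses $\delchoirule$. Its side condition, that the delayed step is enabled in every branch, holds because the permuted rule occurs in every premise $\dD_\lab$.
\item Two $\Cchor$ rules on disjoint pairs is the choice-commutation analogue, again using $\delchoirule$ in both directions.
\item $\Crec$ against $\ldefr$ leaves the extracted term unchanged up to $\unfoldrule$.
\item $\Cmixr$ against another rule produces either $(\mu\seq\cC[1])\ctens\cC[2]$ versus $\mu\seq(\cC[1]\ctens\cC[2])$, or a choice distributed over a tensor. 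These are related using the $\tensrulel$ and $\tensruler$ rules, \Cref{thm:tensor_bisimilarity}, and the disjointness side condition of $\Cmixr$, which guarantees that no transition of $\cC[2]$ is blocked by the prefix.
\end{itemize}

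\emph{Main obstacle.} The hardest step will be the interaction with niceness and with $\Cchor$/$\Clabr$ pairs. A $\Cchor$ extracts to a choice whose branches each begin, via $\Clabr$, with the label step. After a $\choirule$ transition $\tochoi p$, the residual $\cgenchoi$ no longer corresponds to a subderivation of $\dD$; it corresponds to a single $\Clabr$ branch. I therefore plan to extend $\abisim$ with these intermediate states: pairs of the form "$\cgenchoi$ with related continuations". The delay rules must then be checked on these runtime-like terms as well.

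A second delicacy is that the permutations must preserve niceness, since otherwise $\chRCof$ is undefined on one side. I will verify that each permutation keeps $\Cmixr$ placement and $\Cchor$/$\Clabr$ adjacency; if necessary, I will restrict to the permutations that do, mirroring the normalization in \Cref{cor:chorl_ok}. Finally, substitutions $\fsubst ek$ in the $\Ccomr$ case commute with the permuted rule because they act only at location $\pq$, which is disjoint from the other rule's processes.
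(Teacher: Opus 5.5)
Your plan follows the paper's proof. You reduce to a single permutation, conclude by induction on the length of the permutation chain using transitivity of $\bisim$, and rely on the generators preserving niceness; your coinductive case analysis (delay rules, $\unfoldrule$ for $\Crec$, intermediate states after a choice, the disjointness side condition of $\Cmixr$) spells out what the paper compresses into ``each rule permutation preserves the bisimilarity of the extracted choreographies''.

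The step to firm up is niceness preservation, which the paper also only asserts. Under the reading of niceness used by extraction (independent components split at the bottom, cf.~\Cref{rmk:nice}), the $\Cmixr$/$\rrule[3]$ and $\Cchor$/$\Cmixr$ permutations of \Cref{fig:equivTCHORL} produce, in one direction, a derivation with a $\Cmixr$ directly above a non-$\Cmixr$ rule. Restricting to the generators that preserve niceness would therefore prove the lemma only for a smaller relation than $\peq$. Instead, let the $\Cmixr$ clause of $\chRCof$ apply at any position, and keep treating $\Cchor$ together with its $\Clabr$ as one block. This costs nothing, since your tensor and choice cases never use niceness.
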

\begin{proof}
	We observe that the permutations in \Cref{fig:equivTCHORL} preserve \niceness. First, $\Cmixr$ and $\Crec$ cannot 
	be permuted with each other. Second, only finitely many steps are permuted; therefore, an infinite derivation still 
	contains infinitely many progress points.
	If $\dD \peq \dD'$, then there is a finite sequence of rule permutations as in \Cref{fig:equivTCHORL} transforming $\dD$ into $\dD'$.
	We conclude by induction on the number of rule permutations in the sequence, 
	after observing that each rule permutation preserves the bisimilarity of the extracted choreographies.
\end{proof}


\begin{restatable}[Extraction]{theorem}{EXTR}\label{thm:extr}
	Let 
	$\dD\in\nChorL$ 
	and let 
	$\nN_\dD$ be the network such that $\fof{\nN_\dD}$ is the conclusion of $\dD$.
	Then $\chRCof[\dD] \bisim \nN$.
\end{restatable}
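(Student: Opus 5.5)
We prove the statement by exhibiting a bisimulation in the sense of \Cref{def:wbisim}, with the pairs indexed by subderivations of $\dD$. Let $\abisim$ be the set of pairs $(\chRCof[\dD'],\nN_{\dD'})$, where $\dD'$ ranges over the subderivations of $\dD$ whose conclusion is a sequent $\fof{\nN_{\dD'}}$ (possibly up to $\Cwr$, i.e.\ $\fofclean{\cdot}$), together with the pairs obtained from these by unfolding choreographic variables. Since $\dD$ is \nice, the subderivation rooted at any sequent is again \nice, so this family is closed under taking subderivations. By \Cref{cor:chorl_cyc} we may assume $\dD$ is \regular. The extracted choreography is then determined by finitely many procedure definitions, one for each $\Crec$-instance up to isomorphism, which makes the coinductive argument well-founded on the syntactic side.

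For condition \ref{wbisim:1} (choreography moves are matched by the network), we proceed by case analysis on the bottom-most rule of $\dD'$, following \Cref{fig:DerToChor}, and inside each case by induction on the derivation of $\chRCof[\dD']\cto{\mu}\cCp$.
\begin{itemize}
\item The rules $\Cwr$ and $\ldefr$ do not change the extracted choreography, and the network is unchanged up to $\pprocrule$, so these cases are immediate.
\item For $\Crec$, the transition of $\cX$ comes from its body via $\unfoldrule$. The network side uses $\pprocrule$ on each unfolded process.
\item For $\Ccomr$, a $\comrule$-step corresponds exactly to the network $\pcomrule$-step, landing in the premise. A $\delcomrule$-step with $\pnof{\mu}\cap\set{\pp,\pq}=\emptyset$ is handled by the inductive hypothesis on the premise. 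The prefix processes $\pp,\pq$ are untouched, so the network performs the same $\mu$ from $\nN$, and the result is again the conclusion of a derivation ending in the same $\Ccomr$, whose extraction is the delayed choreography.
\item For $\Cchor$ followed by $\Clabr$ (guaranteed by \niceness), $\choirule$ corresponds to $\pchoisrule$ and $\labrule$ corresponds to $\plabrule$. The intermediate state $\cgenchoi$ is related to the network after $\pchoisrule$, witnessed by the $\Clabr$ subderivation. For $\delchoirule$ we apply the induction in every branch simultaneously.
\item For $\Cmixr$ we use $\tensrulel$, $\tensruler$ and $\pnetrule$, as in the proof of \Cref{thm:logicEPP}.
\end{itemize}

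For condition \ref{wbisim:2}, let $\nN_{\dD'}\pto{\mu}\nNp$ be a network transition. We use that the network is race-free: the sending atom of $\mu$ has a unique matching receiving atom. Since the proof is fair (Condition~\ref{fair:1} of \Cref{def:fairProofs}, via \Cref{rem:fairImplyProgressing} and \niceness), the formulas of $\pp$ and $\pq$ involved in $\mu$ must become principal, along every branch, in the rule encoding $\mu$. This is the same rule on all branches. We then induct on the maximal distance $d$ of that rule from the root, exactly as in part (2) of the proof of \Cref{thm:logicEPP}. If $d=0$ the step is read off directly. If $d>0$, the bottom rule involves processes disjoint from $\pnof{\mu}$. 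We apply the inductive hypothesis to its premises and reassemble the choreography step using $\delcomrule$, $\delchoirule$, $\tensrulel$/$\tensruler$ or $\unfoldrule$. \Cref{lem:bisimcoreo} lets us alternatively permute the $\mu$-rule to the root and conclude up to $\bisim$.

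The main obstacle is the $\delchoirule$ subcase of condition \ref{wbisim:2}. That rule requires $\mu$ to be enabled in every branch $\cC[\lab]$, $\lab\in\aLset'$, whereas \niceness only provides branches for $\lab\in\aLset$. We also have to ensure that the $\mu$-rule really occurs at bounded depth in every branch, and is not postponed past a $\Crec$ in only some branches. We would first try to obtain this from Condition~\ref{fair:1} together with \regularity, which give a uniform finite bound. For labels in $\aLset'\setminus\aLset$, the branch is absent from the proof. We would handle this by noting that in this case $\sizeof{\aLset}=1$ is forced, or that those continuations are irrelevant to the semantics. If this fails, we would replace the delay by a permutation using \Cref{lem:bisimcoreo}.
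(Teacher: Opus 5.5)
Your EPP-style plan breaks at the step that carries condition~\ref{wbisim:2}. You need that, from the sequent encoding the current network, the rule encoding an enabled network action $\mu$ occurs on every branch of the subderivation above it (bounded depth then follows by finite branching). You derive this from fairness ``via \Cref{rem:fairImplyProgressing} and \niceness''. That remark states the converse (\fair implies \progressing), and \niceness only requires \progressing proofs.

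Since $\Crec$ only forbids propositional variables in its context, a cycle may carry variable-free formulas untouched forever. Take $\pdo[p]{\psend rv \pseq \pX} \ppar \pdo[q]{\pY} \ppar \pdo[r]{\precv py \pseq \psend sf \pseq \pnil} \ppar \pdo[s]{\precv rz \pseq \pnil}$ with $\procDef{\pX}{\psend qe \pseq \pX}$ and $\procDef{\pY}{\precv px \pseq \pY}$. The \regular proof that performs $\pp\to\pr$ by $\Ccomr$ and then cycles through $\Crec$ and $\Ccomr$ on $\pp,\pq$ lies in $\nChorL$. Yet its extraction $\ccom pvry \seq \cX$ with $\procDef{\cX}{\ccom peqx \seq \cX}$ cannot match the network's $\pr\to\ps$ step.

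So a justness property (Condition~\ref{fair:1} of \Cref{def:fairProofs}, at least for enabled interactions) has to be assumed; it does not follow from \niceness or regularity. With such a hypothesis your depth induction becomes viable, but your final paragraph leaves exactly this point open. The paper's proof does not derive it either: it is built into the claim that, by \Cref{lem:bisimcoreo} and guardedness, the rule encoding $\mu$ may be assumed to sit at the root of $\dD$.

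Independently, your relation is too rigid to be a bisimulation, because it is indexed by subderivations of the fixed $\dD$. After a $\delcomrule$ or $\delchoirule$ step, the residual choreography is the extraction of a newly assembled derivation (the old bottom rule placed under the derivation supplied by the induction hypothesis), not of a subderivation of $\dD$. After a $\choirule$ step, the residual $\cchoi p{\set{\lab_i}}q{\lab : \cC[\lab] \mid \lab\in\aLset}$ is not the extraction of any derivation. \Cref{fig:DerToChor} has no clause for a derivation ending in $\Clabr$, and reading it as a one-premise $\Cchor$ gives the single-branch choice $\cchoi p{\set{\lab_i}}q{\lab_i : \cC[\lab_i]}$. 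These two choreographies can differ in which $\delchoirule$ steps they allow, because that rule quantifies over all branches. So the unselected continuations are not ``irrelevant to the semantics'', as one of your fallbacks suggests. Letting the pairs range over all of $\nChorL$ repairs the first point but not the second.

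This is where the paper takes a genuinely different route:
\begin{itemize}
\item its relation ranges over all derivations in $\nChorL$ and is closed under $\bisim$ on both sides;
\item instead of matching delay rules structurally, it uses \Cref{lem:bisimcoreo} to bring the rule encoding $\mu$ to the root and peels it off;
\item it closes the choice step with a separate bisimilarity between the committed choice and its single-branch version.
\end{itemize}
Finally, \Cref{cor:chorl_cyc} does not let you assume that the given $\dD$ is \regular. It only yields some regular proof of the same sequent, whereas $\chRCof[\dD]$ depends on the proof.
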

	
\begin{proof}
	We define the following relation between choreographies and networks, 
	$$
	\mathcal{R} 
	= 
	\set{
		(\cC, \nN) \mid 
		\text{ there is $\dD \in \nChorL$ with conclusion $\fof{\nM}$ such that 
		$\cC \bisim \chRCof[\dD]$, and $\nM \bisim \nN$}
	}$$
	and we show that it is a bisimulation (see \Cref{def:wbisim}) containing $(\chRCof[\dD], \nN)$.

	\begin{enumerate}
	\item\label{EXTR:CtoN}
		Proof that if $\cC \cto{\mu} \cCp$, then there is a $\nNp$ such that $\nN \pto{\mu} \nNp$ and $(\cCp,\nNp)\in\mathcal{R}$:
		
		If $\cC \cto{\mu} \cCp$, then there is $\cB$ such that 
		$\cC \bisim \chRCof[\dD] \cto{\mu} \cB$ and $\cCp \bisim \cB$. 
		We can then construct a proof $\dD''\in\nChorL$ with conclusion $\fof{\nNpp}$
		such that $\cB \bisim \chRCof[\dD'']$ and, ${\nN} \pto{\mu} \nNp \bisim {\nNpp}$.
		By definition of $\mathcal{R}$, $(\cB,\nNpp)\in\mathcal{R}$, and we can conclude using transitivity of bisimilarity that $(\cCp,\nNp)\in\mathcal{R}$.

		We proceed by case analysis on the label $\mu$.
		Thanks to \Cref{lem:bisimcoreo} and guardedness of \chors, we can always assume that the bottom of $\dD$ is 
		made of possibly some instances of $\Cmixr$ and $\Crec$ only involving the process names in $\mu$, followed by either $\Ccomr$, $\Cchor$ or $\Clabr$, accordingly wether $\mu$ is $\tocom peqx$, $\tolab pq$ or $\tochoi p$.

		If $\mu = \tocom peqx$, then
		$$
		\dD=
		\vlderivation{
			\vliid{\Cmixr}{}{
				\vdash \fof{\nM} 
			}{
				\vlid{\unfoldr}{}{
					\vdash \fof{\nM[1] \ppar \nM[2]} 
				}{
					\vlin{\Ccomr}{}{
						\vdash \fof{\pdo{\psend qe \pseq \pP} \ppar \pdo[q]{\precv qx \pseq \pQ }\ppar \nM[2]}
					}{
						\vlpr{\dD_{\mathsf{L}}}{}{\vdash \fof{\pdo{\pP} \ppar \pdo[q]{\pQ\psubst ex} \ppar \nM[2]}}
					}
				}
			}{\vlpr{\dD_{\mathsf{R}}}{}{\fof{\nM[3]}}}
		}
		$$
		where the $\ldefr$-rules may only involve formulas where $\pp$ or $\pq$ occur, and so
		$$\nM_1\in\set{
			\pdo[p]{\pX} \ppar \pdo[q]{\pY}  
		\qomma
			\pdo[p]{\pX} \ppar \pdo[q]{\precv qx \pseq \pQ}  
		\qomma
			\pdo[p]{\psend qe \pseq \pP} \ppar \pdo[q]{\pY} 
		\qomma
			\pdo[p]{\psend qe \pseq \pP} \ppar \pdo[q]{\precv qx \pseq \pQ}
		}$$
		with 
		$\pX \triangleq \psend qe \pseq \pP $ and 
		$\pY \triangleq \precv qx \pseq \pQ $.
		We can then let $\dD''$ be defined as follows:
		$$\dD'' = 
		\vlderivation{
			\vliid{\Cmixr}{}{
				\vdash \fof{\nNpp} 
			}{
				\vlpr{\dD_{\mathsf{L}}}{}{\vdash \fof{\pdo{\pP} \ppar \pdo[q]{\pQ\psubst ex} \ppar \nM[2]}}
			}{\vlpr{\dD_{\mathsf{R}}}{}{\fof{\nM[3]}}}
		}
		$$
		with $\nNpp=\left(\pdo[p]{\pP} \ppar \pdo[q]{\pQ\psubst ex} \ppar \nM[2] \ppar \nM[3]\right)$.
		We observe that
		$\chRCof[\dD]
		\cto{\tocom peqx} 
		\chRCof[\dD'']
		=
		\cB
		$
		and  
		we conclude since 
		$\nN \bisim \nM \pto{\tocom peqx} \nNpp$.

		The other cases are similar, with case $\tolab pq$ needing to prove (by definition of $\chRCof[\dD]$, and the definition of the rule $\delchoirule$) that 
		if \ $\chRCof[\dD] = \cgenchois $, then
		$\cgenchoiEXT \bisim  \chorpoly{\choi{p}{\set{\lab_i}}{q}{\lab_i : \cC[\lab_i]}}$ for all $\lab_i \in \aLset$.
		
	\item 
		Proof that if $\nN \pto{\mu} \nNp$, then there is a $\cCp$ such that $\cC \cto{\mu} \cCp$ and $(\cCp,\nNp)\in\mathcal{R}$:

		If $\nN \pto{\mu} \nNp$ and $\nN \bisim \nM$ with $\fof\nM$ being the conclusion of $\dD$, then, we can again assume that the bottom-most rule in $\dD$ is either $\Ccomr$, $\Cchor$ or $\Clabr$ (possibly preceded by instances of $\Cmixr$ and $\Crec$), accordingly whether $\mu$ is $\tocom peqx$, $\tolab pq$ or $\tochoi p$
		and that $\nM \pto{\mu} \nMp$ with $\nMp$ being the child of $\nM$ corresponding to the transition labeled by $\mu$.

		To conclude, it suffices to show that the subderivation $\dD'$ with conclusion $\fof{\nMp}$ is such that $\chRCof[\dD] \cto{\mu} \cCp \bisim \chRCof[\dD']$.
		The argument is similar to the one we used in Item~(\ref{EXTR:CtoN}).
		\qedhere
	\end{enumerate}
\end{proof}

\begin{theorem}[Completeness]\label{thm:extrbisim}
	Let $\nN$ be a \net. Then 
	$$\nN \mbox{ is \LF} \Rightarrow  \mbox{ there is a choreography $\cC$ such that } \pnof{\cC} = \pnof{\nN} \mbox{ and } \cC \wbisim \nN.$$	
\end{theorem}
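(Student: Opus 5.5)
The plan is to chain the results of \Cref{sec:proofsAsExecutions} with the Extraction theorem. Assume $\nN$ is \LF. By \Cref{cor:lf_iif_proof} the sequent $\fof{\nN}$ has a proof $\dD$ in $\nChorL$. By \Cref{cor:chorl_cyc} I may also take $\dD$ to be \regular. The candidate choreography is then
\[
\cC=\chRCof[\dD].
\]
Regularity matters because the clause for $\Crec$ in \Cref{fig:DerToChor} introduces one procedure definition per $\Crec$-instance. With a regular proof, isomorphic subderivations can be identified so that they share a single choreographic variable, and $\cprocset$ stays finite as the notation on choreographies requires. Each such definition is guarded, since between two $\Crec$-instances along a branch some $\Ccomr$, $\Cchor$ or $\Clabr$ rule must occur. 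I would also check that $\cC$ is \chorok. The only source of $\ctens$ in $\cC$ is $\Cmixr$, whose side condition $\pnof{\Gamma_1}\cap\pnof{\Gamma_2}=\emptyset$ gives exactly the disjointness of process names of the two tensor components.

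Bisimilarity then follows from \Cref{thm:extr}. The network $\nN_\dD$ whose encoding is the conclusion of $\dD$ is $\nN$ itself, because the conclusion is $\fof{\nN}$. One small point needs care: the encoding is injective up to the handling of $\pnil$-components and the unit. Both sides are insensitive to terminated components, via \Cref{thm:tensor_bisimilarity} and the $\Cwr$ clause of extraction. Hence $\chRCof[\dD]\bisim\nN$.

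It remains to prove $\pnof{\cC}=\pnof{\nN}$, and I would do this by induction or coinduction along the clauses of \Cref{fig:DerToChor}, proving two inclusions. For $\subseteq$, every process name appearing in an interaction $\ccom peqx$ or $\cchoi p{\aLset}q{\cdots}$ of $\cC$, or in bodies of its procedure definitions, comes from a principal formula located at $\pp$ or $\pq$ in some sequent of $\dD$. Every formula of such a sequent is a residual, possibly after unfolding, of a formula in $\fof{\nN}$, so its location is in $\pnof{\nN}$. For $\supseteq$, take $\pp\in\pnof{\nN}$. Then $\pp$ occurs in a channel of some atom reachable in $\fof{\nN}$, possibly only after unfoldings.

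I expect this $\supseteq$ direction to be the main obstacle, since one must show that every such name is eventually involved in a rule that the extraction records. My first attempt is to use the \fairness of $\dD$, inherited from being \nice. Condition~\ref{fair:1} of \Cref{def:fairProofs} makes every non-unit formula principal somewhere above on every branch. Condition~\ref{fair:2} ensures that every branch of a choice is explored. Together these should give that the atom mentioning $\pp$ is eventually consumed by a $\Ccomr$, $\Cchor$ or $\Clabr$ rule, which places $\pp$ in $\pnof{\cC}$. If this proves awkward for names that appear only inside procedure bodies, I would fall back on the definition of $\pnof{\cdot}$ via closed sets of definitions together with lock-freedom of $\nN$ (\Cref{def:execution}). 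Lock-freedom forces any name that appears in a reachable state to eventually participate in a transition. By the bisimulation, that transition is matched in $\cC$, whose label then involves $\pp$.
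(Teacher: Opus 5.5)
Your main line is exactly the paper's proof: a \nice proof $\dD$ of $\fof{\nN}$, the witness $\cC=\chRCof[\dD]$, and bisimilarity from \Cref{thm:extr}. Citing \Cref{cor:lf_iif_proof} is in fact the right link, since the paper cites \Cref{thm:LFinTCHORL}, which by itself only gives $\gChorL$. The paper says nothing about regularity, guardedness, well-formedness, or the clause $\pnof{\cC}=\pnof{\nN}$, so those parts are your own. Your remarks on finitely many guarded definitions are reasonable. Note only that \Cref{cor:chorl_cyc} is stated for $\gChorL$, so you must apply its argument to the \nice proof itself.

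The gap is the inclusion $\pnof{\nN}\subseteq\pnof{\cC}$ under the reading you adopt: every name occurring in a channel of an atom of $\fof{\nN}$ counts, after unfoldings and over all branches, as in the clauses for $\apnof{\cdot}$. The paper never fixes $\pnof{\cdot}$ on networks. Under your reading, such a name need never be consumed.

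\begin{itemize}
\item $\Clabr$ keeps only the selected disjunct of $\biglsel[\lab\in\aLset']$.
\item The $\Cchor$ clause of \Cref{fig:DerToChor} only produces branches for $\lab\in\aLset$.
\item Hence atoms in branches $\lab\in\aLset'\setminus\aLset$ of an external choice are never principal.
\end{itemize}

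Fairness does not help. Condition~\ref{fair:2} of \Cref{def:fairProofs} concerns only $\bigwith$-formulas, and Condition~\ref{fair:1} is met as soon as the whole $\lplus$-formula is principal. Your fallback fails for the same reason: \Cref{def:execution} constrains only processes in the \emph{support} of reachable states, not names occurring in dead code.

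Concretely, $\pdo[p]{\matpsendl{q}{L:\pnil}} \ppar \pdo[q]{\matprecvl{p}{L:\pnil,\ R:\psend re\pseq\pnil}}$ is \LF and mentions $\pr$. Its \nice proof consists, bottom-up, of $\Cchor$, $\Clabr$, $\Cwr$, $\Caxr$. The extracted choreography is $\cchoi p{\set L}q{L:\cnil}$, in which $\pr$ does not occur.

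To close the argument, take $\pnof{\nN}$ to be the support of $\nN$.

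\begin{itemize}
\item $\supseteq$ then follows from your lock-freedom argument. Choreography transitions never introduce new process names, so a transition of a reduct of $\cC$ whose label $\mu$ has $\pp\in\pnof{\mu}$ forces $\pp\in\pnof{\cC}$.
\item $\subseteq$ holds because the locations of principal formulas in the sequents of $\dD$ lie in supports of reachable states, and supports never grow.
\end{itemize}

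Under the term-based reading, $\chRCof[\dD]$ is simply not a valid witness. You would have to pad the extracted choices with branches for $\aLset'\setminus\aLset$, which the proof does not supply.
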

\begin{proof}
	If $\nN$ is \LF, then by \Cref{thm:LFinTCHORL}, $\fof \nN$ admits a proof $\dD\in\nChorL$. We conclude by \Cref{thm:extr} by letting $\cC =\chRCof[\dD]$.
\end{proof}

\begin{remark}\label{rmk:nice}
	The \defn{\nice} condition on proofs ensures that, when extracting a choreography from a proof, the extraction procedure immediately separate the independent components of a choreography using the $\Cmixr$-rule.
	This guarantees that independent components give rise to disjoint choreographies composed by $\ctens$, as in \Cref{ex:tensor}.
	For example, while the network 
	$$
	\pdo[p]{\psend qe \seq \pX[\pp]}
	\ppar
	\pdo[q]{\precv px \seq \pY[\pq]}
	\ppar
	\pdo[r]{\psend rf \seq \pX[\pr]}
	\ppar
	\pdo[s]{\precv sy \seq \pY[\ps]}
	$$
	is bisimilar to both the following choreographies:
	$$
	\begin{array}{l@{\qquad}l}
		\cC[1]= \ccom pe qx \seq \ccom rfsy \seq \cX[1]
	&
		\text{with } \procDef{\cX[1]}{\ccom pe qx \seq \ccom rfsy \seq \cX[1]}
	\\
		\cC[\ltens]= (\ccom pe qx \seq\cX[l])\ctens (\ccom rfsy \seq \cX[r])
	&
	\text{with } 
		\procDef{\cX[l]}{\ccom pe qx \seq \cX[l]} 
	\text{ and }
		\procDef{\cX[r]}{\ccom rfsy \seq \cX[r]}
	\end{array}
	$$
	the extraction procedure will always produce the second choreography, which is the one that better reflects disjoint the structure of the network.
\end{remark}

\section{Conclusions and Future Work}\label{sec:conc}

In this paper we provided a new framework for the study of choreographies and their relation with process calculi, based on the proof theory of linear logic.
We showed that projection and extraction can be studied in a uniform way, by establishing a correspondence between specific proofs in the sequent calculus and trees of executions in the network calculus.

\looseness=-1
The main insight, is that adopting a logic-programming oriented view of process calculi, allows studying the properties of processes by reasoning on the corresponding proofs in the logic, and vice versa, in a quite natural way.
This characterization is simple and elegant.
Lock-freedom, which is a fundamental property of networks, can then be characterized in simple terms of existence of proofs in the logic, which, in turn, can be seen as successful execution of a logic program.
The same framework allowed us to define a new endpoint projection with two notable advantages over existing approaches.
First, it dispenses from the merge operator~\cite{montesi:book,DYBH12,BCDLDY08}, a syntactic mechanism traditionally used to reconstruct the knowledge of choice during endpoint projection, whose correctness relies on technically involved metatheoretical arguments~\cite{SY19}.
Secondly, removing merge does not restrict the class of projectable choreographies. 
On the contrary, our proof-theoretic formulation naturally extends it, allowing the projection of choreographies with recursive choices such as the one in \eqref{ex:intro:unprojectable:choreo}, which, to the best of our knowledge, have not been handled by previous projection procedures.

\paragraph*{Perspectives}\looseness=-1
The work presented in this paper provides an effective framework for choreographic extraction.
By implementing proof search algorithms for the logic, it is possible to automatically extract choreographies from networks.
For \LF networks with only tail-recursive calls, the extraction algorithm is guaranteed to terminate, and, based on the complexity results for proof search in $\MALL$ \cite{lincoln:complexity}, it should be possible to implement an extraction algorithm with polynomial-space complexity.%
\footnote{
	In general, the complexity of proof search in $\mu\MALL$ is more complex \cite{das:de:sau:muMALL}, however, the fragment of $\mu\MALL$ captured by $\iPIL$ is quite restricted (no least fixpoints, tail-recursive networks, race-freedom).
}
In fact, the first-order component of the logic is only used in a very limited way in the proof search for sequents encoding networks, and the tail-recursive restriction on networks ensures that only a finite number of formulas (modulo first-order substitutions) can appear in the proof search, thus allowing to implement a polynomial-space proof search algorithm cyclic proofs.

Our also suggests a new generation of choreographic compilers based directly on the proof-theoretic endpoint projection developed in this paper that supports a strictly larger class of choreographies w.r.t., alternative approaches for comparable languages without incurring in runtime costs of auxiliary comminations \cite{PGGLM16,PGSN22} or complex procedures like the merge operator \cite{montesi:book,CMP23} and amendment \cite{CM23}. 

\bibliographystyle{ACM-Reference-Format}
\bibliography{biblio}

\clearpage

\appendix

\section{Details of Proofs}\label{sec:appendix}

\EPP*
\begin{proof}
	By \Cref{rmk:byconstruction} there is $\nN$ such that $\fof \nN = \Gamma$.
	Since $\DofC\cC$ is a proof of $\fof \nN$ so by \Cref{cor:lf_iif_proof} $\nN$ is \LF.
	To show that $\cC \bisim \nN$ we show that the relation
	$\abisim = \set{(\nN, \cC) \mid \fof{\nN} \text{ is the conclusion of } \DofC{\cC}}$
	is a bisimulation.
	We verify the two bisimulation conditions of \Cref{def:wbisim}.

	\begin{enumerate}

		\item 
		
		We proceed by case analysis on the bottom-most rule $\rrule$ in the derivation of the transition $\cC \cto{\mu} \cCp$ in the operational semantics of choreographies, by letting 
		$\fof{\nN} = \Gamma$ being the conclusion of $\DofC{\cC}$
		and
		$\fof{\nNp} = \Gamma'$ being the conclusion of $\DofC{\cCp}$.

		\begin{itemize}
			\item 
			If $\rrule = \comrule$, then $\cC = \cgencom \seq \cC[1]$ and the bottom-most rule of $\DofC{\cC}$ is an instance of $\Ccomr$ with conclusion $\fof{\nN}=\fof{\pdo[p]{\psend qe \pseq \pP} \ppar \pdo[q]{\precv px \pseq \pQ} \ppar \nM}$
			and premise 
			$\fof{\pdo[p]{\pP} \ppar \pdo[q]{\pQ\psubst{e}{x}} \ppar \nM}=\fof{\nNp}$, the conclusion of $\DofC{\cCp}$.
			We conclude since $\nN\pto{\tocom peqx}\nNp$.

			\item 
			If $\rrule = \choirule$ \resp{$\rrule = \labrule$} we conclude similarly to the previous case, but considering the fact that the bottom-most rule in $\DofC{\cC}$ is a $\Cchor$ \resp{$\Clabr$} and $\nN\pto{\tolab pq}\nNp$ \resp{$\nN\pto{\tochoi p}\nNp$}.

			\item 
			If $\rrule = \tensrulel$ \resp{$\tensruler$}, then
			$\cC = \cC[l] \ctens \cC[r]$ and 
			the bottom-most rule of $\DofC{\cC}$ is $\Cmixr$ with conclusion $\vdash \fof{\nN[l] \ppar \nN[r]}$ and two premises $\vdash \fof{\nN[l]}$ and $\vdash \fof{\nN[r]}$, each conclusion of a proof $\DofC{\cC[l]}$ and $\DofC{\cC[r]}$, respectively.
			By inductive hypothesis on $\DofC{\cC[l]}$ \resp{$\DofC{\cC[r]}$},
			there exists $\nNp[l]$ \resp{$\nNp[r]$} with
			$\nN[l] \pto{\mu} \nNp[l]$ \resp{$\nN[r] \pto{\mu} \nNp[r]$}
			and $\fof{\nNp[l]} = \Gamma'_l$ \resp{$\fof{\nNp[r]} = \Gamma'_r$},
			the conclusion of $\DofC{\cCp[l]}$ \resp{$\DofC{\cCp[r]}$}.
			We conclude because we have 
			$\nN[l] \ppar \nN[r] \pto{\mu} \nNp[l] \ppar \nN[r]$
			\resp{$\nN[l] \ppar \nN[r] \pto{\mu} \nN[l] \ppar \nNp[r]$}
			thanks to the rule $\pnetrule$ in the operational semantics of networks.

			\item If $\rrule = \delcomrule$, then $\cC = \cgencom \seq \cC[1]$
			where $\pp,\pq\notin\pnof{\mu}$, and the bottom-most rule of $\DofC{\cC}$
			is a $\Ccomr$.
			By inductive hypothesis on $\DofC{\cC[1]}$, there exists $\nNp[1]$
			with $\nN[1] \pto{\mu} \nNp[1]$ and $\fof{\nNp[1]} = \Gamma'_1$,
			the conclusion of $\DofC{\cCp[1]}$.
			Since $\pnof{\mu} \cap \set{\pp,\pq} = \emptyset$, we have
			$\nN(\pp) = \nN[1](\pp)$ for all $\pp \in \pnof{\mu}$,
			hence $\nN \pto{\mu} \nNp$.
			Moreover, $\DofC{\cCp}$ is obtained from $\DofC{\cCp[1]}$ by
			appending the same $\Ccomr$ as bottom-most rule, so $\Gamma' =
			\fof{\nNp}$.

			\item If $\rrule = \delchoirule$ we conclude similarly to the previous case, but assuming that the bottom-most rule of $\DofC{\cC}$ is $\Cchor$ followed by instances of $\Clabr$ and that the transition $\nN \pto{\mu} \nNp$ is derived in the network semantics using the rule $\pprocrule$.

			\item If $\rrule = \unfoldrule$, we conclude by inductive hypothesis,
			noting that the corresponding network step uses rule $\pprocrule$.
		\end{itemize}

		\item 
		We proceed by induction on the maximal distance $d$ in the derivation $\DofC{\cC}$ of the rule corresponding to the transition with label $\mu$ from the root of the derivation $\DofC{\cC}$ in all its branches.

		If $d=0$, then the bottom-most rule of $\DofC{\cC}$ directly encodes the transition labeled by $\mu$: if $\mu = \tocom peqx$ \resp{$\tolab pq$,
		$\tochoi p$}, the bottom-most rule is $\Ccomr$ \resp{$\Cchor$ preceded
		by instances of $\Clabr$, $\Clabr$}, and we conclude with a reasoning similar to the one we used in the previous point.

		For the inductive step, $d > 0$: the bottom-most rule of $\DofC{\cC}$
		encodes a transition, whose process names of its label are not in $\pnof{\mu}$; 
		it is one of $\Cmixr$, $\Ccomr$, $\Cchor$, $\Clabr$, or $\Crec$.
		In each subcase, applying the inductive hypothesis to the relevant
		sub-derivation (where the maximum depth of the rule for $\mu$ decreases
		by one) and concluding by the same argument as in the $\tensrulel$,
		$\delcomrule$, $\delchoirule$, and $\unfoldrule$ in the previous point, respectively.

	\end{enumerate}
\end{proof}

\EXTR*
\begin{proof}
	We define the following relation between choreographies and networks, 
	$$
	\mathcal{R} 
	= 
	\set{
		(\cC, \nN) \mid 
		\text{ there is $\dD \in \nChorL$ with conclusion $\fof{\nM}$ such that 
		$\cC \bisim \chRCof[\dD]$, and $\nM \bisim \nN$}
	}$$
	and we show that it is a bisimulation (see \Cref{def:wbisim}) containing $(\chRCof[\dD], \nN)$.

	\begin{enumerate}
		\item\label{EXTR:CtoN}
		Proof that if $\cC \cto{\mu} \cCp$, then there is a $\nNp$ such that $\nN \pto{\mu} \nNp$ and $(\cCp,\nNp)\in\mathcal{R}$:
		
		If $\cC \cto{\mu} \cCp$, then there is $\cB$ such that 
		$\cC \bisim \chRCof[\dD] \cto{\mu} \cB$ and $\cCp \bisim \cB$. 
		We can then construct a proof $\dD''\in\nChorL$ with conclusion $\fof{\nNpp}$
		such that $\cB \bisim \chRCof[\dD'']$ and, ${\nN} \pto{\mu} \nNp \bisim {\nNpp}$.
		By definition of $\mathcal{R}$, $(\cB,\nNpp)\in\mathcal{R}$, and we can conclude using transitivity of bisimilarity that $(\cCp,\nNp)\in\mathcal{R}$.

		We proceed by case analysis on the label $\mu$.
		Thanks to \Cref{lem:bisimcoreo} and guardness of \chors, we can always assume that the bottom of $\dD$ is 
		made of possibly some instances of $\Cmixr$ and $\Crec$ only involing the process names in $\mu$, followed by either $\Ccomr$, $\Cchor$ or $\Clabr$, accordingly wether $\mu$ is $\tocom peqx$, $\tolab pq$ or $\tochoi p$.
		\begin{itemize}
			\item
			If $\mu = \tocom peqx$, then
			$$
			\dD=
			\vlderivation{
				\vliid{\Cmixr}{}{
					\vdash \fof{\nM} 
				}{
					\vlid{\unfoldr}{}{
						\vdash \fof{\nM[1] \ppar \nM[2]} 
					}{
						\vlin{\Ccomr}{}{
							\vdash \fof{\pdo{\psend qe \pseq \pP} \ppar \pdo[q]{\precv qx \pseq \pQ }\ppar \nM[2]}
						}{
							\vlpr{\dD_{\mathsf{L}}}{}{\vdash \fof{\pdo{\pP} \ppar \pdo[q]{\pQ\psubst ex} \ppar \nM[2]}}
						}
					}
				}{\vlpr{\dD_{\mathsf{R}}}{}{\fof{\nM[3]}}}
			}
			$$
			where the $\ldefr$-rules may only involve formulas where $\pp$ or $\pq$ occur, and so
			$$\nM_1\in\set{
				\pdo[p]{\pX} \ppar \pdo[q]{\pY}  
			\qomma
				\pdo[p]{\pX} \ppar \pdo[q]{\precv qx \pseq \pQ}  
			\qomma
				\pdo[p]{\psend qe \pseq \pP} \ppar \pdo[q]{\pY} 
			\qomma
				\pdo[p]{\psend qe \pseq \pP} \ppar \pdo[q]{\precv qx \pseq \pQ}
			}$$
			with 
			$\pX \triangleq \psend qe \pseq \pP $ and 
			$\pY \triangleq \precv qx \pseq \pQ $.
			We can then let $\dD''$ be defined as follows:
			$$\dD'' = 
			\vlderivation{
				\vliid{\Cmixr}{}{
					\vdash \fof{\nNpp} 
				}{
					\vlpr{\dD_{\mathsf{L}}}{}{\vdash \fof{\pdo{\pP} \ppar \pdo[q]{\pQ\psubst ex} \ppar \nM[2]}}
				}{\vlpr{\dD_{\mathsf{R}}}{}{\fof{\nM[3]}}}
			}
			$$
			with $\nNpp=\left(\pdo[p]{\pP} \ppar \pdo[q]{\pQ\psubst ex} \ppar \nM[2] \ppar \nM[3]\right)$.
			We observe that
			$\chRCof[\dD]
			\cto{\tocom peqx} 
			\chRCof[\dD'']
			=
			\cB
			$
			and  
			we conclude since 
			$\nN \bisim \nM \pto{\tocom peqx} \nNpp$.
			

			\item If $\mu = \tochoi p$, then we let 
			$$
			\dD =
			\vlderivation{
				\vliid{\Cmixr}{}{
					\fof{\nM}
				}{
					\vlid{\unfoldr}{}{
						\vdash \fof{\nM[1] \ppar \nM[2]}
					}{
						\vlin{\Cchor}{}{
							\vdash \fof{\pdo{\pgensendl} \ppar \pdo[q]{\matprecvl[\lab\in\aLset']p{\lab: \pQ[\lab]}} \ppar \nM[2]}
						}{
							\multiprem{
								\vlin{\Clabr}{}{
									\vdash \fof{\pdo{\pgenselectl} \ppar \pdo[q]{\matprecvl[\lab\in\aLset']p{\lab: \pQ[\lab]}} \ppar  \nM[2]}
										}{
										\vlpr{\dD_{\lab}}{}{
											\vdash \fof{\pdo{\pP[\lab_i]} \ppar \pdo{\pQ[\lab_i]} \ppar  \nM[2]}
										}
									}
							}{\lab_i \in L}
						}
					}
				}{\vlpr{\dD_{\mathsf{R}}}{}{\fof{\nM[3]}}}
			}
			$$
			where the $\ldefr$-rules may only involve formulas where $\pp$ or $\pq$ occur, and so
			$$
			\adjustbox{max width=\textwidth}{$\nM[1]\in\set{\pdo[p]{\pX} \ppar \pdo[q]{\pY}\;,\;
					\pdo[p]{\pgenselectl}  \ppar \pdo[q]{\pY}\;,\;
					\pdo[p]{\pX} \ppar \pdo[q]{\matprecvl[\lab\in\aLset']p{\lab: \pQ[\lab]}} \;,\;
					\pdo[p]{\pgenselectl}  \ppar \pdo[q]{\matprecvl[\lab\in\aLset']p{\lab: \pQ[\lab]}}}$}$$ 
			with $\pX \triangleq \pgensendl$ and $ \pY \triangleq {\matprecvl[\lab\in\aLset']p{\lab: \pQ[\lab]}}$.
			We can then let $\dD''$ be defined as follows:
			$$\dD'' =
			\vlderivation{
				\vliid{\Cmixr}{}{
					\fof{\nNpp}
				}{
					{
						\vlin{\Cchor}{}{
							\vdash \fof{\pdo{\pgensendl} \ppar \pdo[q]{\matprecvl[\lab\in\aLset']p{\lab: \pQ[\lab]}} \ppar \nM[2]}
						}{
							\vlin{\Clabr}{}{
								\vdash \fof{\pdo{\pgenselectl} \ppar \pdo[q]{\matprecvl[\lab\in\aLset']p{\lab: \pQ[\lab]}} \ppar  \nM[2]}
							}{
								\vlpr{\dD_{\lab_i}}{}{
									\vdash \fof{\pdo{\pP[\lab_i]} \ppar \pdo{\pQ[\lab_i]} \ppar  \nM[2]}
								}
							}
						}
					}
				}{\vlpr{\dD_{\mathsf{R}}}{}{\fof{\nM[3]}}}
			}
			$$
			with 
			$\nNpp=\pdo{\pgensendl} \ppar \pdo[q]{\matprecvl[\lab\in\aLset']p{\lab: \pQ[\lab]}} \ppar  \nM[2] \ppar \nM[3]$.
			We observe that, by \Cref{lem:choibisim}, $\chRCof[\dD]  \cto{\tochoi p} \cB \bisim \chRCof[\dD'']$ and we conclude since $\nN \bisim \nM \pto{\tochoi p} \nNpp$.
			\item If $\mu = \tolab pq$, then 
			we let
			$$\adjustbox{max width=\textwidth}{$
				\dD = 
				\vlderivation{
					\vliid{}{}{
						\vdash \fof \nM
					}{
						\vlid{}{}{
							\vdash \fof{\nM[1] \ppar  \nM[2]} 
						}{
							\vlin{\Cchor}{}{
								\vdash \fof{\pdo{\pgensendl} \ppar \pdo[q]{\matprecvl[\lab\in\aLset']p{\lab: \pQ[\lab]}} \ppar \nM[2]}
							}{	
								\vlin{\Clabr}{}{
									\vdash \fof{\pdo{\pgenselectl} \ppar \pdo[q]{\matprecvl[\lab\in\aLset']p{\lab: \pQ[\lab]}} \ppar  \nM[2]}
								}{
									\vlpr{\dD_{\lab_i}}{}{
										\vdash \fof{\pdo{\pP[\lab_i]} \ppar \pdo{\pQ[\lab_i]} \ppar  \nM[2]}
									}
								}
							}
						}
					}{\vlpr{\dD_{\mathsf{R}}}{}{\fof{\nM[3]}}}
				}
			$}$$ 
			where $\aLset = \set{\lab_i}$ and the $\ldefr$-rules may only involve formulas where $\pp$ or $\pq$ occur, and so
			$$\adjustbox{max width=\textwidth}{$
				\nM[1]
				\in
				\set{
					\pdo[p]{\pX} \ppar \pdo[q]{\pY}
				\;,\;
					\pdo[p]{\pgenselectl} \ppar \pdo[q]{\pY}
				\;,\;
					\pdo[p]{\pX} \ppar \pdo[q]{\matprecvl[\lab\in\aLset']p{\lab: \pQ[\lab]}}
				\;,\;
					\pdo[p]{\pgenselectl} \ppar \pdo[q]{\matprecvl[\lab\in\aLset']p{\lab: \pQ[\lab]}}
				}
			$}$$ 
			with $\pX \triangleq \pgensendl$ and $ \pY \triangleq {\matprecvl[\lab\in\aLset']p{\lab: \pQ[\lab]}}$.
			We can then let $\dD''$ be defined as follows:
			$$\adjustbox{max width=\textwidth}{$
			\dD'' = 
			\vlderivation{
				\vliid{}{}{
					\vdash \fof \nNpp
				}{
					\vlpr{\dD_{\lab_i}}{}{
						\vdash \fof{\pdo{\pP[\lab_i]} \ppar \pdo{\pQ[\lab_i]} \ppar  \nM[2]}
					}
				}{\vlpr{\dD_{\mathsf{R}}}{}{\fof{\nM[3]}}}
			}
			$}$$ 
			with $\nNpp = \fof{\pdo{\pP[\lab_i]} \ppar \pdo{\pQ[\lab_i]} \ppar  \nM[2] \ppar \nM[3]}$.
			We observe that
			$\chRCof[\dD]
			\cto{ \tolab pq} 
			\chRCof[\dD'']
			=
			\cB
			$
			and  
			we conclude since 
			$\nN \bisim \nM \pto{ \tolab pq} \nNpp$.
		\end{itemize}

		\item 
		Proof that if $\nN \pto{\mu} \nNp$, then there is a $\cCp$ such that $\cC \cto{\mu} \cCp$ and $(\cCp,\nNp)\in\mathcal{R}$:

		If $\nN \pto{\mu} \nNp$ and $\nN \bisim \nM$ with $\fof\nM$ being the conclusion of $\dD$, then, wa can again assume that the bottom-most rule in $\dD$ is either $\Ccomr$, $\Cchor$ or $\Clabr$ (possibly preceded by instances of $\Cmixr$ and $\Crec$), accordingly wether $\mu$ is $\tocom peqx$, $\tolab pq$ or $\tochoi p$
		and that $\nM \pto{\mu} \nMp$ with $\nMp$ being the child of $\nM$ corresponding to the transition labeled by $\mu$.

		To conclude, it suffices to show that the subderivation $\dD'$ with conclusion $\fof{\nMp}$ is such that $\chRCof[\dD] \cto{\mu} \cCp \bisim \chRCof[\dD']$.

		We conclude by an argument similar to the one we used in Item~(\ref{EXTR:CtoN}) above.
	\end{enumerate}
\end{proof}

\end{document}